\providecommand{\U}[1]{\protect\rule{.1in}{.1in}}
\newtheorem{theorem}{Theorem}[section]
\newtheorem{corollary}[theorem]{Corollary}
\newtheorem{definition}[theorem]{Definition}
\newtheorem{lemma}[theorem]{Lemma}
\newtheorem{proposition}[theorem]{Proposition}
\newenvironment{proof}[1][Proof]{\noindent\textbf{#1.} }{\ \rule{0.5em}{0.5em}}
\numberwithin{equation}{section}
\let\pdfoutput=\undefined\fi
\begin{document}

\title{Wave operators, similarity and dynamics for a class of Schr\"{o}dinger
operators with generic non-mixed interface conditions in 1D.}
\author{Andrea Mantile\thanks{Laboratoire de Math\'{e}matiques de Reims, EA4535 URCA,
F\'{e}d\'{e}ration de Recherche ARC Math\'{e}matiques, FR 3399 CNRS.}}
\date{}
\maketitle

\begin{abstract}
We consider a simple modification of the 1D-Laplacian where non-mixed
interface conditions occur at the boundaries of a finite interval. It has
recently been shown that Schr\"{o}dinger operators having this form allow a
new approach to the transverse quantum transport through resonant
heterostructures. In this perspective, it is important to control the
deformations effects introduced on the spectrum and on the time propagator by
this class of non-selfadjont perturbations. In order to obtain uniform-in-time
estimates of the perturbed semigroup, our strategy consists in constructing
stationary waves operators allowing to intertwine the modified non-selfadjoint
Schr\"{o}dinger operator with a 'physical' Hamiltonian. For small values of a
deformation parameter '$\theta$', this yields a dynamical comparison between
the two models showing that the distance between the corresponding semigroups
is dominated by $\left\vert \theta\right\vert $ uniformly in time in the
$L^{2}$-operator norm.

\end{abstract}

\begin{description}
\item[AMS Subject Classification: ] 81Q12, 47A40, 58J50
\end{description}

\section{Introduction.}

Schr\"{o}dinger operators with non-mixed interface conditions have been
recently considered in \cite{FMN2}, by introducing the modified 1D Laplacian
$\Delta_{\theta}$%
\begin{equation}
\left\{
\begin{array}
[c]{l}%
\medskip D(\Delta_{\theta})=\left\{  u\in H^{2}(\mathbb{R}\backslash\left\{
a,b\right\}  ):\left[
\begin{array}
[c]{l}%
\smallskip e^{-\frac{\theta}{2}}u(b^{+})=u(b^{-});\ e^{-\frac{3}{2}\theta
}u^{\prime}(b^{+})=u^{\prime}(b^{-})\\
e^{-\frac{\theta}{2}}u(a^{-})=u(a^{+});\ e^{-\frac{3}{2}\theta}u^{\prime
}(a^{-})=u^{\prime}(a^{+})
\end{array}
\right.  \right\}  \medskip\,,\\
\Delta_{\theta}u(x)=u^{\prime\prime}(x)\quad\text{for }x\in\mathbb{R}%
\backslash\left\{  a,b\right\}  \,.
\end{array}
\right.  \label{Laplacian_mod}%
\end{equation}
where $u(x^{\pm})$ respectively denote the right and left limit of the
function $u$ in $x$. For all $\theta\in\mathbb{C}\backslash\left\{  0\right\}
$, the operator $\Delta_{\theta}$ describes a singularly perturbed Laplacian,
with non-selfadjoint point interactions acting in the boundary points
$\left\{  a,b\right\}  $. It is worthwhile to notice that the boundary
conditions in (\ref{Laplacian_mod}) do not model an usual non-selfadjoint
point interaction (that is $\delta$ or $\delta^{\prime}$ type).

The interest in quantum models arising from $\Delta_{\theta}$ stands upon the
fact that a sharp exterior complex dilation, depending on $\theta=i\tau$ with
$\tau>0$, maps $-i\Delta_{\theta}$ into the accretive operator: $\left.
-ie^{-2\theta\,1_{\mathbb{R}\backslash\left(  a,b\right)  }(x)}\Delta
_{2\theta}\right.  $, where $1_{D}$ denotes the characteristic function of the
domain $D$ (e.g. in Lemma 3.1 in \cite{FMN2}). For a short-range potential
$\mathcal{V}$ (i.e.: $\mathcal{V}\in L^{1}$) compactly supported in $\left[
a,b\right]  $, the corresponding complex deformed Schr\"{o}dinger operator%
\begin{equation}
\mathcal{H}_{\theta}\left(  \mathcal{V},\theta\right)  =-e^{-2\theta
\,1_{\mathbb{R}\backslash\left(  a,b\right)  }(x)}\Delta_{2\theta
}+\mathcal{V\,}\,,\qquad\text{supp}\mathcal{V}=\left[  a,b\right]  \,,
\label{H_mod_def}%
\end{equation}
is the generator of a contraction semigroup and, in the case of time dependent
potentials, uniform-in-time estimates hold for the dynamical system. According
to the complex dilation technique (see \cite{AgCo}, \cite{BaCo}) the quantum
resonances of the undeformed operator $\mathcal{H}_{\theta}\left(
\mathcal{V}\right)  $%
\begin{equation}
\mathcal{H}_{\theta}\left(  \mathcal{V}\right)  =-\Delta_{\theta}%
+\mathcal{V}\,, \label{H_mod}%
\end{equation}
are detected by exterior complex dilations and identify with the spectral
points of $\mathcal{H}_{\theta}\left(  \mathcal{V},\theta\right)  $ in a
suitable sector of the second Riemann sheet. Then, the adiabatic evolution
problem for the resonant states of $\mathcal{H}_{\theta}\left(  \mathcal{V}%
\right)  $ rephrases, through an exterior complex dilation, as the adiabatic
evolution problem for the corresponding eigenstates of $\mathcal{H}_{\theta
}\left(  \mathcal{V},\theta\right)  $. In this framework, accounting the
contractivity property of the semigroup $e^{-it\mathcal{H}_{\theta}\left(
\mathcal{V},\theta\right)  }$ a 'standard' adiabatic theory can be developed
(e.g. in \cite{Nenciu}). This approach has been introduced in \cite{FMN2}
where an adiabatic theorem is obtained for shape resonances in the regime of
quantum wells in a semiclassical island. The purpose of this work is to
justify the use of Hamiltonians of the type $\mathcal{H}_{\theta}\left(
\mathcal{V}\right)  $ in the modelling of quantum systems.

The relevance of the artificial interface conditions (\ref{Laplacian_mod}),
stands upon the fact that they are expected to introduce small errors, w.r.t.
the selfadjoint case, controlled by $\left\vert \theta\right\vert $. The
quantum dynamics generated by $\Delta_{\theta}$ has been considered in
\cite{FMN2}. The explicit character of the model allows to obtain the
asymptotic expansion%
\begin{equation}
e^{-it\Delta_{\theta}}=e^{-it\Delta}+\mathcal{R}\left(  t,\theta\right)  \,,
\label{Exp_FMN1}%
\end{equation}
holding in a suitable neighbourhood: $\left\vert \theta\right\vert <\delta$
(see Proposition 2.2 in \cite{FMN2}). Here, the reminder $\mathcal{R}\left(
t,\theta\right)  $ is strongly continuous w.r.t. $t$ and $\theta$, exhibits
the group property w.r.t. the time variable and is such that%
\begin{equation}
\sup_{t\in\mathbb{R\,}}\left\Vert \mathcal{R}\left(  t,\theta\right)
\right\Vert _{\mathcal{L}\left(  L^{2}(\mathbb{R})\right)  }=\mathcal{O}%
\left(  \left\vert \theta\right\vert \right)  \,. \label{Exp_FMN1_rem}%
\end{equation}
Thus, for $\theta$ small enough, $\Delta_{\theta}$ generates a group, strongly
continuous both w.r.t. $t$ and $\theta$, and allowing uniform-in-time
estimates. In the perspective of modelling realistic physical situations
through the modified Schr\"{o}dinger operators $\mathcal{H}_{\theta}\left(
\mathcal{V}\right)  $, an important step would consists in extending to this
class of operators the expansion obtained in (\ref{Exp_FMN1}) for
$\mathcal{H}_{\theta}\left(  0\right)  =-\Delta_{\theta}$. A possible approach
considers $\mathcal{H}_{\theta}\left(  \mathcal{V}\right)  $ as a
(selfadjoint) perturbation of the modified Laplacian: $\mathcal{H}_{\theta
}\left(  0\right)  $; it is worthwhile to notice that this would give a weaker
result. For instance, implementing a Picard iteration on the Duhamel formula%
\begin{equation}
u_{t}(\theta)=e^{-it\Delta_{\theta}}u_{0}+i\int_{0}^{t}e^{-i\left(
t-s\right)  \Delta_{\theta}}\mathcal{V}u_{s}(\theta)\,ds\,, \label{Duhamel}%
\end{equation}
and making use of the expansion (\ref{Exp_FMN1}), yields, in the case of a
bounded potential $\mathcal{V}\in L^{\infty}$, the time dependent estimates%
\begin{align}
\left\Vert u_{t}(\theta)\right\Vert _{L^{2}(\mathbb{R})}  &  \leq
C_{1}\left\Vert u_{0}\right\Vert _{L^{2}(\mathbb{R})}\,e^{C_{2}\left\Vert
\mathcal{V}\right\Vert _{L^{\infty}(\mathbb{R})}t}\label{est5}\\
& \nonumber\\
\left\Vert u_{t}(\theta)-u_{t}(0)\right\Vert _{L^{2}(\mathbb{R})}  &  \leq
C_{3}\left\vert \theta\right\vert \,\left\Vert u_{0}\right\Vert _{L^{2}%
(\mathbb{R})}\,\,te^{C_{4}\left\Vert \mathcal{V}\right\Vert _{L^{\infty
}(\mathbb{R})}t} \label{est5_1}%
\end{align}
where $C_{i}$, $i=1,..4$, are suitable positive constants. It follows that,
for an initial state $u_{0}$, the corresponding mild solution to the quantum
evolution problem, $u_{t}(\theta)$, is Lipschitz continuous w.r.t. $\theta$,
with a Lipschitz constant bounded by an exponentially increasing function of
time. As an aside, we notice that the estimate (\ref{est5}) may also be
obtained as a consequence of the Hille-Yoshida-Phillips Theorem, by using the
second resolvent formula for $\left(  \mathcal{H}_{\theta}\left(
\mathcal{V}\right)  -z\right)  ^{-1}$ and resolvent estimates for $\left(
\mathcal{H}_{\theta}\left(  0\right)  -z\right)  ^{-1}$ arising from
(\ref{Exp_FMN1}).

The relation (\ref{est5_1}) yields a finite-time control, depending on
$\left\Vert \mathcal{V}\right\Vert _{L^{\infty}(\mathbb{R})}$, of the error
introduced on the quantum evolution by the interface conditions. However, when
$\mathcal{V}$ describes the (possibly non-linear) interactions involving
charge carriers in resonant heterostructures, its norm $\left\Vert
\mathcal{V}\right\Vert _{L^{\infty}(\mathbb{R})}$ is expected to be small
compared to the energy of the particles, while the quantum evolution of
relevant observables is characterized by a long time scale, corresponding to
the inverse of the imaginary part of the shape resonances (examples of this
mechanism are exhibited in \cite{PrSj} and \cite{FMN3}). In this framework,
the use of modified Hamiltonians of the type $\mathcal{H}_{\theta}\left(
\mathcal{V}\right)  $ would be justified by a stronger uniform-in-time
estimate for the error $\left\Vert u_{t}(\theta)-u_{t}(0)\right\Vert
_{L^{2}(\mathbb{R})}$ as $\theta\rightarrow0$.

Adopting a different approach, in what follows the operator $\mathcal{H}%
_{\theta}\left(  \mathcal{V}\right)  $ is considered as a non-selfadjoint
perturbation of the selfadjoint Hamiltonian $\mathcal{H}_{0}\left(
\mathcal{V}\right)  $. Non-selfadjoint perturbations of the type
$T(x)=T+xAB^{\ast}$ have been studied in \cite{Kato} where, under smoothness
assumptions on $A$ and $B$, the 'stationary' wave operators for the couple
$\left\{  T(x),T\right\}  $ are given and the corresponding similarity between
$T$ and $T(x)$ is exploited to define the dynamics generated by $-iT(x)$. This
strategy is adapted here to the case where $T=\mathcal{H}_{0}\left(
\mathcal{V}\right)  $, while the perturbation is determined by generic,
non-mixed, interface conditions occurring at the boundaries of the potential's
support. This is a larger class of operators, parametrized by a couple of
complex, which includes both the cases of $\mathcal{H}_{\theta}\left(
\mathcal{V}\right)  $ and $\left(  \mathcal{H}_{\theta}\left(  \mathcal{V}%
\right)  \right)  ^{\ast}$. From an accurate resolvent analysis and explicit
generalized eigenfunctions formulas, we deduce, in this extended framework, a
small-$\theta$ expansion of the 'stationary wave operators'. Then the quantum
evolution group generated by $-i\mathcal{H}_{\theta}\left(  \mathcal{V}%
\right)  $ is determined by conjugation from $e^{-it\mathcal{H}_{0}\left(
\mathcal{V}\right)  }$ and an uniform-in-time estimate for the 'distance'
between the two dynamics is obtained (see Theorem \ref{Theorem1} below).

Similarity transformations, from non-selfadjoint to similar selfadjoint
operators, have been recently studied in \cite{KreSieZel}, where the authors
focus on the particular case of 1D Schr\"{o}dinger operators defined with
non-selfadjoint Robin-type conditions occurring at the boundary of an
interval. In the case of parity and time-reversal symmetry ($\mathcal{PT}%
$-symmetry), the similarity of this model with a selfadjoint Hamiltonian is
derived. It is worth noticing that, when $\theta\in i\mathbb{R}$, the modified
Laplacian $\Delta_{\theta}$ actually exhibits the $\mathcal{PT}$-symmetry
(once the parity is defined with respect to the point $\left(  a+b\right)
/2$). However, the models introduced in the next sections are generically not
$\mathcal{PT}$-symmetric (see the definition (\ref{Q_teta}) below).

\subsection{Schr\"{o}dinger operators with non-mixed interface conditions.}

We consider the family of modified Schr\"{o}dinger operators $Q_{\theta
_{1},\theta_{2}}(\mathcal{V})$, depending on a couple of complex parameters,
$\left(  \theta_{1},\theta_{2}\right)  \in\mathbb{C}^{2}$, and on a
selfadjoint short-range potential, compactly supported over the interval
$\left[  a,b\right]  \subset\mathbb{R}$,%
\begin{equation}
\mathcal{V}\in L^{1}(\mathbb{R},\mathbb{R})\,,\qquad\text{supp }%
\mathcal{V}=\left[  a,b\right]  \,. \label{V}%
\end{equation}
The parameters $\theta_{1}$ and $\theta_{2}$ fix the interface conditions,%
\begin{equation}
\left\{
\begin{array}
[c]{ccc}%
e^{-\frac{\theta_{1}}{2}}u(b^{+})=u(b^{-})\,, &  & e^{-\frac{\theta_{2}}{2}%
}u^{\prime}(b^{+})=u^{\prime}(b^{-})\,,\\
&  & \\
e^{-\frac{\theta_{1}}{2}}u(a^{-})=u(a^{+})\,, &  & e^{-\frac{\theta_{2}}{2}%
}u^{\prime}(a^{-})=u(a^{+})\,,
\end{array}
\right.  \label{B_C_1}%
\end{equation}
occurring at the boundary of the potential's support and $Q_{\theta_{1}%
,\theta_{2}}(\mathcal{V})$ is defined as follows%
\begin{equation}
Q_{\theta_{1},\theta_{2}}(\mathcal{V}):\left\{
\begin{array}
[c]{l}%
D\left(  Q_{\theta_{1},\theta_{2}}(\mathcal{V})\right)  =\left\{  u\in
H^{2}\left(  \mathbb{R}\backslash\left\{  a,b\right\}  \right)  \,\left\vert
\ \text{(\emph{\ref{B_C_1}}) holds}\right.  \right\}  \,,\\
\\
\left(  Q_{\theta_{1},\theta_{2}}(\mathcal{V})\,u\right)  (x)=-u^{\prime
\prime}(x)+\mathcal{V}(x)\,u(x)\,,\qquad x\in\mathbb{R}\backslash\left\{
a,b\right\}  \,.
\end{array}
\right.  \label{Q_teta}%
\end{equation}
The set $\left\{  Q_{\theta_{1},\theta_{2}}(\mathcal{V})\,,\ \left(
\theta_{1},\theta_{2}\right)  \in\mathbb{C}^{2}\right\}  $ is closed w.r.t.
the adjoint operation: a direct computation shows that%
\begin{equation}
\left(  Q_{\theta_{1},\theta_{2}}(\mathcal{V})\right)  ^{\ast}=Q_{-\theta
_{2}^{\ast},-\theta_{1}^{\ast}}(\mathcal{V})\,. \label{Q_teta_adj}%
\end{equation}
The subset of selfadjoint operators in this class is identified by the
conditions: for $\theta_{j}=r_{j}e^{i\varphi_{j}}$, $j=1,2$,%
\begin{equation}
\left\{
\begin{array}
[c]{l}%
\varphi_{1}+\varphi_{2}=\pi+2\pi k\,,\quad k\in\mathbb{Z}\,,\\
r_{1}=r_{2}\,.
\end{array}
\right.  \label{Selafadj_cond}%
\end{equation}
When (\ref{Selafadj_cond}) are not satisfied, the corresponding operator
$Q_{\theta_{1},\theta_{2}}(\mathcal{V})$ is neither selfadjoint nor symmetric,
since in this case: $Q_{\theta_{1},\theta_{2}}(\mathcal{V})\not \subset
\left(  Q_{\theta_{1},\theta_{2}}(\mathcal{V})\right)  ^{\ast}$.

For each couple $\left\{  \theta_{1},\theta_{2}\right\}  $, $Q_{\theta
_{1},\theta_{2}}\left(  \mathcal{V}\right)  $ identifies with a, possibly
non-selfadjoint, extension of the Hermitian operator $Q^{0}(\mathcal{V})$%
\begin{equation}
D\left(  Q^{0}(\mathcal{V})\right)  =\left\{  u\in H^{2}\left(  \mathbb{R}%
\right)  \,\left\vert \ u(\alpha)=u^{\prime}(\alpha)=0\,,\ \alpha=a,b\right.
\right\}  \,,
\end{equation}
and defines an explicitly solvable model w.r.t. the selfadjoint Hamiltonian
$Q_{0,0}\left(  \mathcal{V}\right)  $. Non-selfadjoint models arising from
proper extensions of Hermitian operators with gaps have been already
considered in literature, for instance in \cite{DerMa}, \cite{Ryz1} (see also
\cite{Mamo1}-\cite{MaMo3} and \cite{BrMaNaWo} for the general case of adjoint
pairs of operators). In these works, the formalism of \emph{boundary triples
}(e.g. in \cite{LyaSto} for adjoint pairs) is adopted; this leads to
Krein-like formulas expressing the resolvent of an extended operator in terms
of the resolvent of a 'reference' extension plus a finite rank part depending
on the Weyl function of the triple. In Section \ref{Section_Resolvent_1} we
give Krein-like formulas for the difference $\left.  \left(  Q_{\theta
_{1},\theta_{2}}\left(  \mathcal{V}\right)  -z\right)  ^{-1}-\left(
Q_{0,0}\left(  \mathcal{V}\right)  -z\right)  ^{-1}\right.  $. Exploiting this
framework, we show that $Q_{\theta_{1},\theta_{2}}(\mathcal{V})$ is an
analytic family in the sense of Kato both w.r.t. the variables $\theta_{1}$
and $\theta_{2}$ and study its spectral profile depending on $\mathcal{V}$.
The result is exposed in the Proposition \ref{Proposition_spectrum}; in
particular, for a defined positive $\mathcal{V}$, we obtain: $\sigma\left(
Q_{\theta_{1},\theta_{2}}\left(  \mathcal{V}\right)  \right)  =\sigma
_{ac}\left(  Q_{0,0}\left(  \mathcal{V}\right)  \right)  =\mathbb{R}_{+}$,
provided that $\theta_{1}$ and $\theta_{2}$ are small enough.

Under the same assumptions, in Section \ref{Section_Similarity}, a family of
intertwining operators $\mathcal{W}_{\theta_{1},\theta_{2}}$\ for the
couple\newline$\left\{  Q_{\theta_{1},\theta_{2}}(\mathcal{V}),Q_{0,0}%
(\mathcal{V})\right\}  $ is introduced as the analogous of the usual
stationary wave operators in selfadjoint frameworks. Using the eigenfunctions
expansion obtained in Subsection \ref{Section_Resolvent_2}, we get a
small-$\theta_{i}$ expansion of $\mathcal{W}_{\theta_{1},\theta_{2}}$ allowing
to define the quantum evolution group $e^{-iQ_{\theta_{1},\theta_{2}}\left(
\mathcal{V}\right)  }$ from $e^{-iQ_{0,0}\left(  \mathcal{V}\right)  }$ by
conjugation. Then, we develop a quantitative comparison showing that $\left.
e^{-iQ_{\theta_{1},\theta_{2}}\left(  \mathcal{V}\right)  }-e^{-iQ_{0,0}%
\left(  \mathcal{V}\right)  }\right.  $ is controlled by $\left\vert
\theta_{i}\right\vert $, $i=1,2$, uniformly in time, in the $L^{2}$-operator
norm. The result is presented in the following theorem and the proof is given
in Subsection \ref{Section_Theorem1}. It can be adapted to the particular case
of $\mathcal{H}_{\theta}\left(  \mathcal{V}\right)  $, by noticing that:
$\mathcal{H}_{\theta}\left(  \mathcal{V}\right)  =Q_{\theta,3\theta
}(\mathcal{V})$.

\begin{theorem}
\label{Theorem1}Let $\mathcal{V}$ fulfills the conditions (\ref{V}),%
\begin{equation}
\left\langle u,\mathcal{V\,}u\right\rangle _{L^{2}(\left(  a,b\right)
)}>0\qquad\forall\,u\in L^{2}(\left(  a,b\right)  )\,,
\end{equation}
and assume $\left\vert \theta_{j}\right\vert <\delta$, $j=1,2$, with
$\delta>0$ small enough. Then $-iQ_{\theta_{1},\theta_{2}}(\mathcal{V})$
generates a strongly continuous group of bounded operators on $L^{2}\left(
\mathbb{R}\right)  $, $\left\{  e^{-itQ_{\theta_{1},\theta_{2}}(\mathcal{V}%
)}\right\}  _{t\in\mathbb{R}}$. For a fixed $t$, $e^{-itQ_{\theta_{1}%
,\theta_{2}}(\mathcal{V})}$ defines an analytic family of bounded operators
w.r.t. $\left(  \theta_{1},\theta_{2}\right)  $ and the expansion%
\begin{equation}
e^{-itQ_{\theta_{1},\theta_{2}}(\mathcal{V})}=e^{-itQ_{0,0}(\mathcal{V}%
)}+\mathcal{R}\left(  t,\theta_{1},\theta_{2}\right)  \,,
\end{equation}
holds with an uniformly bounded in time reminder s.t.%
\begin{equation}
\sup_{t\in\mathbb{R}}\left\Vert \mathcal{R}\left(  t,\theta_{1},\theta
_{2}\right)  \right\Vert _{\mathcal{L}\left(  L^{2}(\mathbb{R})\right)
}=\mathcal{O}\left(  \theta_{1}\right)  +\mathcal{O}\left(  \theta_{2}\right)
\,.
\end{equation}

\end{theorem}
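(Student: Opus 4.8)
The plan is to realise $Q_{\theta_{1},\theta_{2}}(\mathcal{V})$ as an operator similar to the selfadjoint Hamiltonian $Q_{0,0}(\mathcal{V})$ and then to transport the unitary group $e^{-itQ_{0,0}(\mathcal{V})}$ through this similarity. The starting point is Proposition \ref{Proposition_spectrum}: under the positivity hypothesis on $\mathcal{V}$ and for $|\theta_{j}|<\delta$ with $\delta$ small, $\sigma\big(Q_{\theta_{1},\theta_{2}}(\mathcal{V})\big)=\mathbb{R}_{+}=\sigma_{ac}\big(Q_{0,0}(\mathcal{V})\big)$, with no embedded eigenvalues or spectral singularities; this is what makes the stationary construction of Section \ref{Section_Similarity} legitimate. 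The intertwining operator $\mathcal{W}_{\theta_{1},\theta_{2}}$ is the map carrying the generalized eigenfunctions of $Q_{0,0}(\mathcal{V})$ to those of $Q_{\theta_{1},\theta_{2}}(\mathcal{V})$, equivalently given by a stationary (contour-integral) formula built from the Krein-type resolvent difference of Section \ref{Section_Resolvent_1} and the eigenfunction expansion of Subsection \ref{Section_Resolvent_2}. Two properties of $\mathcal{W}_{\theta_{1},\theta_{2}}$, to be established there, are needed: (i) it is a bounded isomorphism of $L^{2}(\mathbb{R})$, it intertwines the two operators, $Q_{\theta_{1},\theta_{2}}(\mathcal{V})\,\mathcal{W}_{\theta_{1},\theta_{2}}=\mathcal{W}_{\theta_{1},\theta_{2}}\,Q_{0,0}(\mathcal{V})$, and $\mathcal{W}_{\theta_{1},\theta_{2}}\big(D(Q_{0,0}(\mathcal{V}))\big)=D(Q_{\theta_{1},\theta_{2}}(\mathcal{V}))$; (ii) it is analytic in $(\theta_{1},\theta_{2})$ and admits the expansion $\mathcal{W}_{\theta_{1},\theta_{2}}=I+\theta_{1}A_{1}+\theta_{2}A_{2}+o(|\theta_{1}|+|\theta_{2}|)$ in $\mathcal{L}(L^{2}(\mathbb{R}))$ with $A_{1},A_{2}$ bounded, so that, after shrinking $\delta$, both $\|\mathcal{W}_{\theta_{1},\theta_{2}}-I\|$ and $\|\mathcal{W}_{\theta_{1},\theta_{2}}^{-1}-I\|$ are $\mathcal{O}(|\theta_{1}|)+\mathcal{O}(|\theta_{2}|)$ while $\|\mathcal{W}_{\theta_{1},\theta_{2}}\|,\|\mathcal{W}_{\theta_{1},\theta_{2}}^{-1}\|$ stay bounded.

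Granting (i)--(ii), the remainder of the argument is short. Since $Q_{0,0}(\mathcal{V})$ is selfadjoint, Stone's theorem provides the unitary group $e^{-itQ_{0,0}(\mathcal{V})}$, and I set $e^{-itQ_{\theta_{1},\theta_{2}}(\mathcal{V})}:=\mathcal{W}_{\theta_{1},\theta_{2}}\,e^{-itQ_{0,0}(\mathcal{V})}\,\mathcal{W}_{\theta_{1},\theta_{2}}^{-1}$. Boundedness and the group law follow at once from the conjugation, with $\sup_{t\in\mathbb{R}}\|e^{-itQ_{\theta_{1},\theta_{2}}(\mathcal{V})}\|\le\|\mathcal{W}_{\theta_{1},\theta_{2}}\|\,\|\mathcal{W}_{\theta_{1},\theta_{2}}^{-1}\|$; strong continuity in $t$ comes from that of $e^{-itQ_{0,0}(\mathcal{V})}$ and the boundedness of $\mathcal{W}_{\theta_{1},\theta_{2}}^{\pm1}$. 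Because $e^{-itQ_{0,0}(\mathcal{V})}$ leaves $D(Q_{0,0}(\mathcal{V}))$ invariant and $\mathcal{W}_{\theta_{1},\theta_{2}}$ maps this domain onto $D(Q_{\theta_{1},\theta_{2}}(\mathcal{V}))$, differentiating at $t=0$ and invoking the intertwining identity shows that the generator of the conjugated group is exactly $-iQ_{\theta_{1},\theta_{2}}(\mathcal{V})$, so the notation is consistent with the statement. Analyticity of $t\mapsto e^{-itQ_{\theta_{1},\theta_{2}}(\mathcal{V})}$ in $(\theta_{1},\theta_{2})$ at fixed $t$ is inherited from the analyticity of $\mathcal{W}_{\theta_{1},\theta_{2}}$ and of $\mathcal{W}_{\theta_{1},\theta_{2}}^{-1}$ (the inverse of an analytic family of invertible bounded operators is again analytic).

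For the uniform-in-time expansion, set $\mathcal{R}(t,\theta_{1},\theta_{2}):=e^{-itQ_{\theta_{1},\theta_{2}}(\mathcal{V})}-e^{-itQ_{0,0}(\mathcal{V})}$ and use the telescoping identity
\begin{equation*}
\mathcal{R}(t,\theta_{1},\theta_{2})=\big(\mathcal{W}_{\theta_{1},\theta_{2}}-I\big)\,e^{-itQ_{0,0}(\mathcal{V})}\,\mathcal{W}_{\theta_{1},\theta_{2}}^{-1}+e^{-itQ_{0,0}(\mathcal{V})}\,\big(\mathcal{W}_{\theta_{1},\theta_{2}}^{-1}-I\big)\,.
\end{equation*}
Since $e^{-itQ_{0,0}(\mathcal{V})}$ is unitary, this yields
\begin{equation*}
\sup_{t\in\mathbb{R}}\big\|\mathcal{R}(t,\theta_{1},\theta_{2})\big\|_{\mathcal{L}(L^{2}(\mathbb{R}))}\le\big\|\mathcal{W}_{\theta_{1},\theta_{2}}-I\big\|\,\big\|\mathcal{W}_{\theta_{1},\theta_{2}}^{-1}\big\|+\big\|\mathcal{W}_{\theta_{1},\theta_{2}}^{-1}-I\big\|\,,
\end{equation*}
and by (ii) the right-hand side is $\mathcal{O}(|\theta_{1}|)+\mathcal{O}(|\theta_{2}|)$, which is the asserted estimate; the case of $\mathcal{H}_{\theta}(\mathcal{V})$ follows from the identification $\mathcal{H}_{\theta}(\mathcal{V})=Q_{\theta,3\theta}(\mathcal{V})$.

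The real work, and the point I expect to be the main obstacle, is property (ii): showing that the stationary wave operator is a bounded isomorphism and extracting its first-order expansion in $\theta_{1},\theta_{2}$ with control uniform on $L^{2}(\mathbb{R})$. This relies on the explicit Krein-type resolvent formulas of Section \ref{Section_Resolvent_1} together with precise bounds, uniform in the spectral parameter and in particular near the threshold $0$ and at large energy, on the perturbed generalized eigenfunctions of Subsection \ref{Section_Resolvent_2}; the positivity hypothesis on $\mathcal{V}$ enters precisely to secure the clean spectral picture $\sigma(Q_{\theta_{1},\theta_{2}}(\mathcal{V}))=\mathbb{R}_{+}$ with no eigenvalues or spectral singularities, which is what forces the intertwiner to be bounded with bounded inverse rather than merely densely defined. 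Once those estimates are available, the boundedness of $\mathcal{W}_{\theta_{1},\theta_{2}}^{-1}$, the analyticity in $(\theta_{1},\theta_{2})$, and the first-order bound on $\mathcal{W}_{\theta_{1},\theta_{2}}-I$ follow from analytic Fredholm and perturbation theory together with the smallness of $\delta$.
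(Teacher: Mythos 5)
Your proposal is correct and follows essentially the same route as the paper: the proof in Subsection \ref{Section_Theorem1} likewise defines $e^{-itQ_{\theta_{1},\theta_{2}}(\mathcal{V})}=\mathcal{W}_{\theta_{1},\theta_{2}}\,e^{-itQ_{0,0}(\mathcal{V})}\,\mathcal{W}_{\theta_{1},\theta_{2}}^{-1}$ and deduces the group property, the identification of the generator, the analyticity in $(\theta_{1},\theta_{2})$, and the uniform-in-time remainder bound directly from the intertwining relation (\ref{W_teta_inter}) and the expansions (\ref{W_teta_exp}), (\ref{W_teta_exp3}) of Proposition \ref{Proposition_W_cont}, exactly the two properties you isolate as (i)--(ii). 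Your telescoping identity is just an explicit version of the paper's final step, and your assessment that the real work lies in establishing the bounded invertibility and small-$\theta$ expansion of $\mathcal{W}_{\theta_{1},\theta_{2}}$ matches where the paper actually does it.
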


In the Subsection \ref{Section_WO} the pair $\left\{  Q_{\theta_{1},\theta
_{2}}\left(  \mathcal{V}\right)  ,Q_{0,0}\left(  \mathcal{V}\right)  \right\}
$ is considered as a scattering system and we investigate the existence of
non-stationary wave operators. Under the assumptions of the Theorem
\ref{Theorem1}, it is shown that $\mathcal{W}_{\theta_{1},\theta_{2}}$
coincides with a 'physical' wave operator, according to the time dependent
definition (Lemma \ref{Lemma_WO}). Although exploited in our analysis, the
small perturbations condition does not seem to be necessary in the proof of
this result: a possible strategy for its extension to the case where $\left\{
\theta_{1},\theta_{2}\right\}  $ are not small is finally mentioned.

Further perspectives of this work, concerning the regime of quantum wells in a
semiclassical island, are discussed in the Section \ref{Sec_small_h}.

\subsection{\label{Sec_Notation}Notation}

In what follows, we make use of a generalization of the Landau notation,
$\mathcal{O}\left(  \cdot\right)  $, defined according to:

\begin{definition}
\label{Landau_Notation}Let be $X$ a metric space and $f,g:X\rightarrow
\mathbb{C}$. Then $f=\mathcal{O}\left(  g\right)  $ $\overset{def}%
{\Longleftrightarrow}$ $\forall\,x\in X$ it holds: $\left.
f(x)=p(x)g(x)\right.  $, being $p$ a bounded map $X\rightarrow\mathbb{C}$.
\end{definition}

The next notation are also adopted\smallskip\newline$1_{\Omega}(\cdot)$ is the
characteristic function of a domain $\Omega$.\smallskip\newline$\mathcal{B}%
_{\delta}(p)$ is the open ball of radius $\delta$ centered in a point
$p\in\mathbb{C}$.\smallskip\newline$\mathcal{C}_{x}^{k}(U)$ is the set of
$\mathcal{C}^{k}$-continuous functions w.r.t. $x\in U\subseteq\mathbb{R}%
$.\smallskip\newline$\mathcal{H}_{z}(D)$ is the set of holomorphic functions
w.r.t. $z\in D\subseteq\mathbb{C}$.\smallskip\newline$\partial_{j}f\left(
x_{1,}...x_{n}\right)  $, $j\in\left\{  1,..n\right\}  $, denotes the
derivative of $f$ w.r.t. the variable $x_{j}$.\smallskip\newline$S_{\eta}$
denotes the complex half-plane: $S_{\eta}=\left\{  z\in\mathbb{C\,}\left\vert
\ \operatorname{Im}z>-\eta\right.  \right\}  $. In particular, $\mathbb{C}%
^{+}$ coincides with $S_{0}$.\smallskip\newline The notation '$\lesssim$',
appearing in some of the proofs, denotes the inequality: '$\leq C$' being $C$
a suitable positive constant.

\section{\label{Section_Resolvent_1}Boundary triples and Krein-like resolvent
formulas.}

Point perturbation models, as $Q_{\theta_{1},\theta_{2}}(\mathcal{V})$, can be
described as restrictions of a \emph{larger} operator through linear relations
on an Hilbert space. Let introduce $Q(\mathcal{V})$%
\begin{equation}
\left\{
\begin{array}
[c]{l}%
D(Q(\mathcal{V}))=H^{2}\left(  \mathbb{R}\backslash\left\{  a,b\right\}
\right)  \,,\\
\\
\left(  Q(\mathcal{V})\,u\right)  (x)=-u^{\prime\prime}(x)+\mathcal{V}%
(x)\,u(x)\qquad\text{for }x\in\mathbb{R}\backslash\left\{  a,b\right\}  \,,
\end{array}
\right.  \label{Q}%
\end{equation}
with $\mathcal{V}$ defined according to (\ref{V}), and let $Q^{0}%
(\mathcal{V})$ be such that: $\left(  Q^{0}(\mathcal{V})\right)  ^{\ast
}=Q(\mathcal{V})$. Explicitly, $Q^{0}(\mathcal{V})$ identifies with the
symmetric restriction of $Q(\mathcal{V})$ to the domain
\begin{equation}
D\left(  Q^{0}(\mathcal{V})\right)  =\left\{  u\in D(Q(\mathcal{V}%
))\,\left\vert \ u(\alpha)=u^{\prime}(\alpha)=0\ \forall\,\alpha\in\left\{
a,b\right\}  \right.  \right\}  \,.
\end{equation}
The related defect spaces, $\mathcal{N}_{z}=\ker(Q(\mathcal{V})-z)$, are
$4$-dimensional subspaces of $D(Q(\mathcal{V}))$ generated, for $z\in
\mathbb{C}\backslash\mathbb{R}$, by the independent solutions to the problem%
\begin{equation}
\left\{
\begin{array}
[c]{l}%
(-\partial_{x}^{2}+\mathcal{V}-z)u(x)=0\,,\quad x\in\mathbb{R}\backslash
\left\{  a,b\right\}  \,,\\
\\
u\in D(Q(\mathcal{V}))\,.
\end{array}
\right.  \label{defect_fun}%
\end{equation}
A \emph{boundary triple} $\left\{  \mathbb{C}^{4},\Gamma_{0},\Gamma
_{1}\right\}  $ for $Q(\mathcal{V})$ is defined with two linear boundary maps
$\Gamma_{i=1,2}:D(Q(\mathcal{V}))\rightarrow\mathbb{C}^{4}$ fulfilling, for
any $\psi,\varphi\in D(Q(\mathcal{V}))$, the equation%
\begin{equation}
\left\langle \psi,Q(\mathcal{V})\varphi\right\rangle _{L^{2}(\mathbb{R}%
)}-\left\langle Q(\mathcal{V})\psi,\varphi\right\rangle _{L^{2}(\mathbb{R}%
)}=\left\langle \Gamma_{0}\psi,\Gamma_{1}\varphi\right\rangle _{\mathbb{C}%
^{4}}-\left\langle \Gamma_{1}\psi,\Gamma_{0}\varphi\right\rangle
_{\mathbb{C}^{4}}\,,\label{BVT_1}%
\end{equation}
and such that the transformation $\left(  \Gamma_{0},\Gamma_{1}\right)
:D(Q(\mathcal{V}))\rightarrow\mathbb{C}^{4}\times\mathbb{C}^{4}$ is
surjective. A proper\emph{ }extension $Q_{ext}$ of $Q^{0}(\mathcal{V})$ is
called \emph{almost solvable }if there exists a boundary triple $\left\{
\mathbb{C}^{4},\Gamma_{0},\Gamma_{1}\right\}  $ and a matrix $M\in
\mathbb{C}^{4,4}$ such that it coincides with the restriction of
$Q(\mathcal{V})$ to the domain: $\left\{  u\in D(Q(\mathcal{V}))\,\left\vert
\ M\Gamma_{0}u=\Gamma_{1}u\right.  \right\}  $. Using the notation
$Q_{ext}=Q_{M}(\mathcal{V})$, the characterization%
\begin{equation}
Q^{0}(\mathcal{V})\subset Q_{M}(\mathcal{V})\subset Q(\mathcal{V}%
)\,,\quad\left(  Q_{M}(\mathcal{V})\right)  ^{\ast}=Q_{M^{\ast}}%
(\mathcal{V})\,.\label{extensions}%
\end{equation}
holds (e.g. \cite{Ryz1}, Theorem 1.1). In what follows, $\tilde{Q}%
(\mathcal{V})$ denotes the particular restriction of $Q(\mathcal{V})$
associated with the conditions: $\Gamma_{0}u=0$, i.e.%
\begin{equation}
D\left(  \tilde{Q}(\mathcal{V})\right)  =\left\{  u\in D\left(  Q(\mathcal{V}%
)\right)  \,,\ \Gamma_{0}u=0\right\}  \,.
\end{equation}
According to the relation (\ref{BVT_1}), $\tilde{Q}(\mathcal{V})$ is
selfadjoint and $\mathbb{C}\backslash\mathbb{R}\subset\mathcal{\rho}\left(
\tilde{Q}(\mathcal{V})\right)  $. Let, $\gamma(z,\mathcal{V})$ and
$q(z,\mathcal{V})$ be the linear maps defined by%
\begin{equation}
\gamma(z,\mathcal{V})=\left(  \left.  \Gamma_{0}\right\vert _{\mathcal{N}_{z}%
}\right)  ^{-1}\,,\qquad q(z,\mathcal{V})=\Gamma_{1}\circ\gamma(z,\mathcal{V}%
)\,,\qquad z\in\mathcal{\rho}\left(  \tilde{Q}(\mathcal{V})\right)
\,,\label{gamma_q_def}%
\end{equation}
where $\left.  \Gamma_{0}\right\vert _{\mathcal{N}_{z}}$ is the restriction of
$\Gamma_{0}$ to $\mathcal{N}_{z}$. These define holomorphic families of
bounded operators in $\mathcal{L}\left(  \mathbb{C}^{4},L^{2}\left(
\mathbb{R}\right)  \right)  $ and $\mathcal{L}\left(  \mathbb{C}%
^{4},\mathbb{C}^{4}\right)  $ (e.g. in \cite{DerMa} and \cite{BruGeyPan}). The
maps $\gamma(\cdot,z,\mathcal{V})$ and $q(z,\mathcal{V})$ are respectively
referred to as the \emph{Gamma field }and the \emph{Weyl function} associated
with the triple $\left\{  \mathbb{C}^{4},\Gamma_{0},\Gamma_{1}\right\}  $.
With this formalism, a resolvent formula expresses the difference $\left.
\left(  Q_{M}(\mathcal{V})-z\right)  ^{-1}-\left(  \tilde{Q}(\mathcal{V}%
)-z\right)  ^{-1}\right.  $ in terms of finite rank operator with range
$\mathcal{N}_{z}$%
\begin{equation}
\left(  Q_{M}(\mathcal{V})-z\right)  ^{-1}-\left(  \tilde{Q}(\mathcal{V}%
)-z\right)  ^{-1}=\gamma(z,\mathcal{V})\left(  M-q(z,\mathcal{V})\right)
^{-1}\gamma^{\ast}(\bar{z},\mathcal{V})\,,\quad z\in\rho\left(  Q_{M}%
(\mathcal{V})\right)  \cap\mathcal{\rho}\left(  \tilde{Q}(\mathcal{V})\right)
\label{krein}%
\end{equation}
(e.g. in \cite{Ryz1}, Theorem 1.2). In many situations, the interface
conditions occurring in the points $\left\{  a,b\right\}  $ can also be
represented in the form: $A\Gamma_{0}u=B\Gamma_{1}u$, where $A,B\in
\mathbb{C}^{4,4}$. We denote with $Q_{A,B}(\mathcal{V})$ the corresponding
restriction%
\begin{equation}
\left\{
\begin{array}
[c]{l}%
D\left(  Q_{A,B}(\mathcal{V})\right)  =\left\{  u\in D(Q(\mathcal{V}%
))\,\left\vert \ A\Gamma_{0}u=B\Gamma_{1}u\right.  \right\}  \,,\\
\\
Q_{A,B}(\mathcal{V})\,u=Q(\mathcal{V})\,u\,.
\end{array}
\right.  \,.\label{restriction_def}%
\end{equation}
With this parametrization, we have: $\tilde{Q}(\mathcal{V})=Q_{1,0}%
(\mathcal{V})$, while the resolvent's formula rephrases as%
\begin{equation}
\left(  Q_{M}(\mathcal{V})-z\right)  ^{-1}-\left(  \tilde{Q}(\mathcal{V}%
)-z\right)  ^{-1}=-\gamma(z,\mathcal{V})\left[  \left(  Bq(z,\mathcal{V}%
)-A\right)  ^{-1}B\right]  \gamma^{\ast}(\bar{z},\mathcal{V})\,,\qquad
z\in\rho\left(  Q_{M}(\mathcal{V})\right)  \label{krein_AB}%
\end{equation}

In the perspective of a comparison between the quantum models arising from
$Q_{\theta_{1},\theta_{2}}(\mathcal{V})$ and $Q_{0,0}(\mathcal{V})$ a natural
choice is%
\begin{equation}%
\begin{array}
[c]{ccc}%
\Gamma_{0}u=%
\begin{pmatrix}
u^{\prime}(b^{-})-u^{\prime}(b^{+})\smallskip\medskip\\
u(b^{+})-u(b^{-})\smallskip\medskip\\
u^{\prime}(a^{-})-u^{\prime}(a^{+})\smallskip\medskip\\
u(a^{+})-u(a^{-})
\end{pmatrix}
\,, &  & \Gamma_{1}u=\frac{1}{2}%
\begin{pmatrix}
u(b^{+})+u(b^{-})\smallskip\medskip\\
u^{\prime}(b^{+})+u^{\prime}(b^{-})\smallskip\medskip\\
u(a^{+})+u(a^{-})\smallskip\medskip\\
u^{\prime}(a^{+})+u^{\prime}(a^{-})
\end{pmatrix}
\,,
\end{array}
\, \label{BVT}%
\end{equation}
which leads to: $\tilde{Q}(\mathcal{V})=Q_{0,0}(\mathcal{V})$. According to
the definitions (\ref{Q_teta}) and (\ref{Q}), the operator $Q_{\theta
_{1},\theta_{2}}(\mathcal{V})$ identifies with the restriction of
$Q(\mathcal{V})$ parametrized by the $\mathbb{C}^{4,4}$-block-diagonal
matrices%
\begin{equation}%
\begin{array}
[c]{ccc}%
A_{\theta_{1},\theta_{2}}=%
\begin{pmatrix}
a(\theta_{1},\theta_{2}) &  & \\
&  & \\
&  & a(-\theta_{1},-\theta_{2})
\end{pmatrix}
\,, &  & B_{\theta_{1},\theta_{2}}=%
\begin{pmatrix}
b(\theta_{1},\theta_{2}) &  & \\
&  & \\
&  & b(-\theta_{1},-\theta_{2})
\end{pmatrix}
\,,
\end{array}
\label{AB_teta1,2}%
\end{equation}
defined with%
\begin{equation}%
\begin{array}
[c]{ccc}%
a(\theta_{1},\theta_{2})=%
\begin{pmatrix}
1+e^{\frac{\theta_{2}}{2}} & 0\\
0 & 1+e^{\frac{\theta_{1}}{2}}%
\end{pmatrix}
\,, &  & b(\theta_{1},\theta_{2})=2%
\begin{pmatrix}
0 & 1-e^{\frac{\theta_{2}}{2}}\\
e^{\frac{\theta_{1}}{2}}-1 & 0
\end{pmatrix}
\,.
\end{array}
\label{ab_teta1,2}%
\end{equation}
Using (\ref{BVT}) and (\ref{ab_teta1,2})-(\ref{ab_teta1,2}), the linear
relations (\emph{\ref{B_C_1}}) rephrase as%
\begin{equation}
A_{\theta_{1},\theta_{2}}\Gamma_{0}u=B_{\theta_{1},\theta_{2}}\Gamma_{1}u\,,
\label{bc_teta1_teta_2}%
\end{equation}
which leads to the equivalent definition
\begin{equation}
Q_{\theta_{1},\theta_{2}}(\mathcal{V}):\left\{
\begin{array}
[c]{l}%
\mathcal{D}\left(  Q_{\theta_{1},\theta_{2}}(\mathcal{V})\right)  =\left\{
u\in D(Q(\mathcal{V}))\,\left\vert \ A_{\theta_{1},\theta_{2}}\Gamma
_{0}u=B_{\theta_{1},\theta_{2}}\Gamma_{1}u\right.  \right\}  \,,\\
\\
Q_{\theta_{1},\theta_{2}}(\mathcal{V})\,u=Q(\mathcal{V})\,u\,.
\end{array}
\right.  \label{Q_teta1_teta2}%
\end{equation}
In this framework, the relation (\ref{krein_AB}) explicitly writes as%
\begin{equation}
\left(  Q_{\theta_{1},\theta_{2}}(\mathcal{V})-z\right)  ^{-1}=\left(
Q_{0,0}(\mathcal{V})-z\right)  ^{-1}-\sum_{i,j=1}^{4}\left[  \left(
B_{\theta_{1},\theta_{2}}\,q(z,\mathcal{V})-A_{\theta_{1},\theta_{2}}\right)
^{-1}B_{\theta_{1},\theta_{2}}\right]  _{ij}\left\langle \gamma(e_{j},\bar
{z},\mathcal{V}),\cdot\right\rangle _{L^{2}(\mathbb{R})}\gamma(e_{i}%
,z,\mathcal{V})\,, \label{krein_1}%
\end{equation}
where $\left\{  e_{i}\right\}  _{i=1}^{4}$ is the standard basis in
$\mathbb{C}^{4}$, while $\gamma(v,z,\mathcal{V})$ denotes the action of
$\gamma(z,\mathcal{V})$ on the vector $v$. The corresponding integral kernel,
$\mathcal{G}_{\theta_{1},\theta_{2}}^{z}(x,y)$, is%
\begin{equation}
\mathcal{G}_{\theta_{1},\theta_{2}}^{z}(x,y)=\mathcal{G}_{0,0}^{z}%
(x,y)-\sum_{i,j=1}^{4}\left[  \left(  B_{\theta_{1},\theta_{2}}%
\,q(z,\mathcal{V})-A_{\theta_{1},\theta_{2}}\right)  ^{-1}B_{\theta_{1}%
,\theta_{2}}\right]  _{ij}\gamma(e_{j},y,z,\mathcal{V})\,\gamma(e_{i}%
,x,z,\mathcal{V})\,, \label{G_z_teta}%
\end{equation}

\subsection{The Jost's solutions.\label{Section_Jost}}

In order to obtain explicit representations of the operators $\gamma
(\cdot,z,\mathcal{V})$ and $q(z,\mathcal{V})$ appearing at the r.h.s. of
(\ref{krein_1}), it is necessary to define a particular basis of the defect
spaces $\mathcal{N}_{z}$. A possible choice is given in terms of the Green's
function of the operator $\left(  Q_{0,0}(\mathcal{V})-z\right)  $ and of
their derivatives. This motivates the forthcoming analysis, where the
properties of the functions in $\mathcal{N}_{z}$ are investigated by using the
Jost's solutions associated with $Q_{0,0}(\mathcal{V})$. Our aim is to provide
with explicit low and high energy asymptotic in the case of compactly
supported and defined positive potentials. We follow a standard approach
adapting arguments from one dimensional scattering to this particular case.
Detailed computations, for selfadjoint one-dimensional Schr\"{o}dinger
operators with generic short range potentials, are presented in \cite{Yafa}.

Consider the problem%
\begin{equation}
\left(  -\partial_{x}^{2}+\mathcal{V}\right)  u=\zeta^{2}u\,,\qquad\text{for
}x\in\mathbb{R}\ \text{and }\zeta\in\mathbb{C}^{+}\,. \label{Jost_eq}%
\end{equation}
The Jost solutions to (\ref{Jost_eq}), $\chi_{\pm}$, are respectively defined
by the exterior conditions%
\begin{equation}
\left.  \chi_{+}\right\vert _{x>b}=e^{i\zeta x}\,,\qquad\left.  \chi
_{-}\right\vert _{x<a}=e^{-i\zeta x}\,. \label{Jost_sol}%
\end{equation}
The next proposition resumes some properties of the functions $\chi_{\pm}$ in
the case of compactly supported potentials.

\begin{proposition}
\label{Proposition_Jost}Let $\mathcal{V}$ be defined according to (\ref{V}).
The solutions $\chi_{\pm}$ to the problem (\ref{Jost_eq})-(\ref{Jost_sol})
belong to $\mathcal{C}_{x}^{1}\left(  \mathbb{R},\,\mathcal{H}_{\zeta}\left(
\mathbb{C}^{+}\right)  \right)  $ having continuous extension to the real
axis. For $\zeta\in\overline{\mathbb{C}^{+}}$, the relations%
\begin{equation}
\chi_{\pm}\left(  x,\zeta\right)  =e^{\pm i\zeta x}\mathcal{O}\left(
1\right)  \,,\qquad\partial_{x}\chi_{\pm}\left(  x,\zeta\right)  =e^{\pm
i\zeta x}\mathcal{O}\left(  1+\left\vert \zeta\right\vert \right)  \,,
\label{Jost_sol_bound}%
\end{equation}
hold with $\mathcal{O(\cdot)}$ referred to the metric space $\mathbb{R}%
\times\overline{\mathbb{C}^{+}}$.
\end{proposition}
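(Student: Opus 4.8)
The plan is to reduce the boundary-value problem (\ref{Jost_eq})--(\ref{Jost_sol}) to Volterra integral equations and to run successive approximations with estimates that are uniform over $\overline{\mathbb{C}^{+}}$, following the classical scheme for one-dimensional Schr\"odinger operators with short range potentials (see \cite{Yafa}) but exploiting the compactness of $\mathrm{supp}\,\mathcal{V}$. Introduce the normalized solutions $m_{\pm}(x,\zeta)=e^{\mp i\zeta x}\chi_{\pm}(x,\zeta)$. Using the Green's function $\zeta^{-1}\sin\!\big(\zeta(y-x)\big)$ of the operator $\partial_{x}^{2}+\zeta^{2}$ together with the exterior data (\ref{Jost_sol}), one checks that
\[
\chi_{+}(x,\zeta)=e^{i\zeta x}+\int_{x}^{+\infty}\frac{\sin\!\big(\zeta(y-x)\big)}{\zeta}\,\mathcal{V}(y)\,\chi_{+}(y,\zeta)\,dy\,,
\]
equivalently $m_{+}(x,\zeta)=1+\int_{x}^{+\infty}K(x,y,\zeta)\,\mathcal{V}(y)\,m_{+}(y,\zeta)\,dy$ with $K(x,y,\zeta)=(2i\zeta)^{-1}\big(e^{2i\zeta(y-x)}-1\big)$, an entire function of $\zeta$ (the apparent pole at $\zeta=0$ being removable) that vanishes at $y=x$; the symmetric equation holds for $m_{-}$ with integration over $(-\infty,x]$. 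Since $\mathrm{supp}\,\mathcal{V}=[a,b]$, the integrand is supported in $y\in[a,b]$, so $\chi_{+}=e^{i\zeta x}$ for $x\ge b$, $\chi_{-}=e^{-i\zeta x}$ for $x\le a$, and on the complementary region $\chi_{+}$ (resp.\ $\chi_{-}$) is simply the solution of $u''+\zeta^{2}u=0$ carrying its own Cauchy data at $a$ (resp.\ $b$).

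The crucial point is a kernel bound uniform in $\zeta\in\overline{\mathbb{C}^{+}}$. For $y\ge x$ one has $\operatorname{Im}\big(2\zeta(y-x)\big)\ge0$, hence simultaneously $|e^{2i\zeta(y-x)}-1|\le 2|\zeta|(y-x)$ and $|e^{2i\zeta(y-x)}-1|\le 2$, so that
\[
|K(x,y,\zeta)|\le\min\Big(y-x,\ \tfrac{1}{|\zeta|}\Big)\le b-a\qquad\text{whenever }x,y\in[a,b]\,.
\]
Inserting this into the Neumann series $m_{+}=\sum_{n\ge0}m_{+}^{(n)}$, $m_{+}^{(0)}\equiv1$, $m_{+}^{(n+1)}(x,\zeta)=\int_{x}^{b}K(x,y,\zeta)\mathcal{V}(y)m_{+}^{(n)}(y,\zeta)\,dy$, the standard Volterra (simplex) estimate gives $\sup_{[a,b]\times\overline{\mathbb{C}^{+}}}|m_{+}^{(n)}|\le\big((b-a)\|\mathcal{V}\|_{L^{1}}\big)^{n}/n!$, hence normal convergence of the series and $|m_{+}(x,\zeta)|\le e^{(b-a)\|\mathcal{V}\|_{L^{1}}}$ on $[a,b]\times\overline{\mathbb{C}^{+}}$, hence on $[a,+\infty)\times\overline{\mathbb{C}^{+}}$; the bound for $m_{-}$ on $(-\infty,b]$ is symmetric. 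Since each $K(\cdot,\cdot,\zeta)$ is entire in $\zeta$ and each iterate is jointly continuous in $x$ and holomorphic in $\zeta$ on $\mathbb{C}^{+}$ with continuous extension to $\mathbb{R}$, Weierstrass' theorem transfers these properties to $m_{\pm}$, and reading the integral equation back shows that $m_{\pm}$ is $\mathcal{C}^{1}$ in $x$. This yields the first relation in (\ref{Jost_sol_bound}) together with the membership $\chi_{\pm}\in\mathcal{C}^{1}_{x}\big(\mathbb{R},\mathcal{H}_{\zeta}(\mathbb{C}^{+})\big)$ having continuous extension to the real axis.

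For the derivative, differentiating the integral equation in $x$ annihilates the boundary term (because $K(x,x,\zeta)=0$) and leaves $\partial_{x}m_{+}(x,\zeta)=-\int_{x}^{+\infty}e^{2i\zeta(y-x)}\mathcal{V}(y)m_{+}(y,\zeta)\,dy$; since $|e^{2i\zeta(y-x)}|\le1$ for $y\ge x$ and $\zeta\in\overline{\mathbb{C}^{+}}$, this gives $|\partial_{x}m_{+}(x,\zeta)|\le\|\mathcal{V}\|_{L^{1}}\,e^{(b-a)\|\mathcal{V}\|_{L^{1}}}$, a bound independent of $\zeta$ (and the same holds on the free region after propagating the Cauchy data). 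Writing $\partial_{x}\chi_{+}=e^{i\zeta x}\big(i\zeta\,m_{+}+\partial_{x}m_{+}\big)$ and using these two bounds gives $\partial_{x}\chi_{+}(x,\zeta)=e^{i\zeta x}\mathcal{O}(1+|\zeta|)$, i.e.\ the second relation in (\ref{Jost_sol_bound}); joint continuity in $x$ and holomorphy in $\zeta$ follow exactly as above. I expect the only genuinely delicate point to be this uniformity over the \emph{closed} half-plane, which forces one to control the kernel simultaneously as $|\zeta|\to0$ (where $\zeta^{-1}\sin\!\big(\zeta(y-x)\big)$ behaves like $y-x$) and as $|\zeta|\to+\infty$ (where it decays like $1/|\zeta|$): it is precisely the compactness of $\mathrm{supp}\,\mathcal{V}$ that allows the offending factor $y-x$ to be replaced by the constant $b-a$ on the effective range of integration, after which the proof is routine Volterra iteration plus Weierstrass.
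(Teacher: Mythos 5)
Your treatment of the region carrying the support is sound and is essentially the paper's own argument: the rescaled Volterra equation for $m_{+}=e^{-i\zeta x}\chi_{+}$, the kernel bound $\left\vert (2i\zeta)^{-1}\left(  e^{2i\zeta(y-x)}-1\right)  \right\vert \leq\min\left(  y-x,\left\vert \zeta\right\vert ^{-1}\right)  $ uniform on $\overline{\mathbb{C}^{+}}$, and the simplex estimate reproduce Lemma \ref{Lemma_Jost_est} and the first part of the proof of Proposition \ref{Proposition_Jost} (the paper iterates on $(c,b)\times S_{\eta}$, you work directly on the closed half-plane; that is a harmless variation), and your formula $\partial_{x}m_{+}=-\int e^{2i\zeta(y-x)}\mathcal{V}\,m_{+}$ correctly gives the derivative bound there.

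The genuine gap is the exterior region $x<a$ for $\chi_{+}$ (symmetrically $x>b$ for $\chi_{-}$), which you dismiss with ``the same holds on the free region after propagating the Cauchy data''. This is precisely the half of the proof the paper spends formulas (\ref{Jost_ext})--(\ref{Jost_ext_bound}) on, and it is not routine: propagating the data from $x=a$ uses the fundamental system $\cos\zeta(x-a)$, $\zeta^{-1}\sin\zeta(x-a)$, whose second member is unbounded as $\zeta\rightarrow0$ with $x\rightarrow-\infty$, so your uniform bound for $m_{+}$ on $[a,b]\times\overline{\mathbb{C}^{+}}$ does not transfer; one must write $m_{+}(x,\zeta)=B_{+}(\zeta)+B_{-}(\zeta)e^{-2i\zeta x}$ and control $B_{\pm}$ both near $\zeta=0$ (at most simple poles and the cancellation encoded in (\ref{Jost_lowenergy})) and at high energy ((\ref{Jost_coeff})--(\ref{Jost_coeff1})). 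That this cannot be waved through is shown by the square barrier $\mathcal{V}=V_{0}1_{[a,b]}$: at $\zeta=0$ one has $\chi_{+}(x,0)=\chi_{+}(a,0)+\chi_{+}^{\prime}(a,0)(x-a)$ with $\chi_{+}^{\prime}(a,0)=-w(0)\neq0$, so $e^{-i\zeta x}\chi_{+}$ grows linearly as $x\rightarrow-\infty$; hence in the corner $\zeta\rightarrow0$, $x\rightarrow-\infty$ the first estimate in (\ref{Jost_sol_bound}) is delicate (only the restriction of $\chi_{+}$ to $x\geq a$ and of $\chi_{-}$ to $x\leq b$ — which is what your argument actually covers, and what the later Green's-function bounds use — is unproblematic). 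In short, your interior iteration matches the paper, but the exterior extension, where all the low-energy subtlety sits, is missing.
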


The proof follows, with slightly modifications, the one given in \cite{Yafa}
in the case of 1D short-range potentials: an integral setting for
(\ref{Jost_eq}) and explicit estimates for the corresponding integral kernel
are used to discuss the convergence of the solution developed as a Picard
series. To this aim, we need the following simple Lemma.

\begin{lemma}
\label{Lemma_Jost_est}Let $\mathcal{V}$ be defined according to (\ref{V}) and
$F(x)=\left\vert \int_{x}^{x_{0}}\left\vert \,\mathcal{V}(t)\right\vert
\,dt\right\vert $, with $x_{0}\neq x$. If $f$ is continuous and such that:
$\left\vert f(x)\right\vert \leq\,\frac{F^{n}(x)}{n!}$ for $n\in\mathbb{N}$,
then it results%
\begin{equation}
\left\vert \int_{x}^{x_{0}}f(t)\mathcal{V}(t)\,dt\right\vert \leq\frac
{F^{n+1}(x)}{(n+1)!}\,. \label{G_ineq}%
\end{equation}

\end{lemma}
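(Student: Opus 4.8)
The plan is to establish the inequality \eqref{G_ineq} by a direct integration argument, distinguishing the two possible orderings of $x$ and $x_{0}$ so as to keep track of signs. Set $g(t)=F(t)=\left\vert \int_{t}^{x_{0}}\left\vert \mathcal{V}(s)\right\vert \,ds\right\vert$; the first observation is that, away from $x_{0}$, $F$ is differentiable with $F'(t)=-\operatorname{sgn}(x_{0}-t)\left\vert \mathcal{V}(t)\right\vert$, so that in either case $\frac{d}{dt}\frac{F^{n+1}(t)}{(n+1)!}=\operatorname{sgn}(x_{0}-t)\,\frac{F^{n}(t)}{n!}\left(-\left\vert \mathcal{V}(t)\right\vert\right)$ up to the sign bookkeeping; the point is that $F^{n+1}/(n+1)!$ is a primitive of $\pm F^{n}\left\vert \mathcal{V}\right\vert/n!$.

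First I would treat the case $x<x_{0}$. Then $F(x)=\int_{x}^{x_{0}}\left\vert \mathcal{V}(s)\right\vert\,ds$ is nonincreasing in its lower limit reversed, i.e. $F$ is a decreasing function of $x$ with $F(x_{0})=0$, and $F'(x)=-\left\vert \mathcal{V}(x)\right\vert$ a.e. Using the hypothesis $\left\vert f(t)\right\vert\le F^{n}(t)/n!$ and positivity of $\left\vert \mathcal{V}\right\vert$,
\begin{equation}
\left\vert \int_{x}^{x_{0}}f(t)\,\mathcal{V}(t)\,dt\right\vert\le\int_{x}^{x_{0}}\left\vert f(t)\right\vert\left\vert \mathcal{V}(t)\right\vert\,dt\le\int_{x}^{x_{0}}\frac{F^{n}(t)}{n!}\left\vert \mathcal{V}(t)\right\vert\,dt=-\int_{x}^{x_{0}}\frac{F^{n}(t)}{n!}F'(t)\,dt=\frac{F^{n+1}(x)}{(n+1)!}\,,
\end{equation}
where the last equality is the fundamental theorem of calculus applied to $t\mapsto F^{n+1}(t)/(n+1)!$, whose value at $x_{0}$ vanishes. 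The case $x>x_{0}$ is symmetric: now $F(x)=\int_{x_{0}}^{x}\left\vert \mathcal{V}(s)\right\vert\,ds$, $F'(x)=+\left\vert \mathcal{V}(x)\right\vert$ a.e., $F(x_{0})=0$, and
\begin{equation}
\left\vert \int_{x}^{x_{0}}f(t)\,\mathcal{V}(t)\,dt\right\vert\le\int_{x_{0}}^{x}\frac{F^{n}(t)}{n!}\left\vert \mathcal{V}(t)\right\vert\,dt=\int_{x_{0}}^{x}\frac{F^{n}(t)}{n!}F'(t)\,dt=\frac{F^{n+1}(x)}{(n+1)!}\,,
\end{equation}
again using $F(x_{0})=0$. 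In both cases one obtains exactly \eqref{G_ineq}.

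There is no serious obstacle here; the only point requiring a little care is the regularity of $F$, since $\mathcal{V}\in L^{1}$ only, so $F$ is merely absolutely continuous rather than $C^{1}$. This is harmless: $t\mapsto F^{n+1}(t)$ is then also absolutely continuous (a $C^{1}$ function composed with an absolutely continuous one on a compact interval), its a.e. derivative is $(n+1)F^{n}(t)F'(t)$, and the fundamental theorem of calculus for absolutely continuous functions applies, so the chain of equalities above is legitimate. One should also note the degenerate subcase where $\mathcal{V}$ vanishes a.e. between $x$ and $x_{0}$, in which $F(x)=0$ and both sides of \eqref{G_ineq} are zero. Thus the lemma follows for every $n\in\mathbb{N}$, with the step $n=0$ covered by the same argument (the hypothesis then reading $\left\vert f\right\vert\le1$).
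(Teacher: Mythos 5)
Your proof is correct and follows essentially the same route as the paper: split into the cases $x<x_{0}$ and $x>x_{0}$, use $\partial_{t}F=\mp\left\vert \mathcal{V}\right\vert$ to rewrite the integrand, and conclude via the fundamental theorem of calculus together with $F(x_{0})=0$. Your added remark on absolute continuity of $F$ (since $\mathcal{V}\in L^{1}$ only) is a small but welcome refinement of the same argument, not a different method.
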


\begin{proof}
For $x_{0}>x$ we have $F(x)=\int_{x}^{x_{0}}\left\vert \,\mathcal{V}%
(t)\right\vert \,dt$, we get: $\partial_{x}F(x)=-\left\vert \mathcal{V}%
(x)\right\vert $. Then%
\[
\left\vert \int_{x}^{x_{0}}f(t)\mathcal{V}(t)\,dt\right\vert \leq-\int
_{x}^{x_{0}}\left\vert f(t)\right\vert \,\partial_{t}F(t)\,dt\leq-\int
_{x}^{x_{0}}\frac{F^{n}(x)}{n!}\partial_{t}F(t)\,dt=-\int_{x}^{x_{0}}%
\frac{\partial_{t}F^{n+1}(x)}{\left(  n+1\right)  !}\,dt\,.
\]
For $x_{0}<x$, we have $F(x)=\int_{x_{0}}^{x}\left\vert \,\mathcal{V}%
(t)\right\vert \,dt$ and $\partial_{x}F(x)=\left\vert \mathcal{V}%
(x)\right\vert $. Then%
\[
\left\vert \int_{x_{0}}^{x}f(t)\mathcal{V}(t)\,dt\right\vert \leq\int_{x_{0}%
}^{x}\left\vert f(t)\right\vert \,\partial_{t}F(t)\,dt\leq\int_{x_{0}}%
^{x}\frac{F^{n}(x)}{n!}\partial_{t}F(t)\,dt=\int_{x_{0}}^{x}\frac{\partial
_{t}F^{n+1}(x)}{\left(  n+1\right)  !}\,dt\,.
\]
Both the above relations imply (\ref{G_ineq}) since, by definition,
$F(x_{0})=0$.
\end{proof}

\begin{proof}
[Proof of the Proposition \ref{Proposition_Jost}]Here we focus on the case of
$\chi_{+}$, while the problem for $\chi_{-}$ can be analyzed similarly. Using
the integral kernel: $\left.  -\left(  \zeta\right)  ^{-1}\sin\zeta\left(
t-x\right)  \right.  $, the equation (\ref{Jost_eq}) rephrases into the
equivalent integral form%
\begin{equation}
u(x,\zeta)=u_{0}(x,\zeta)-%
{\displaystyle\int\limits_{x_{0}}^{x}}
\frac{\sin\zeta\left(  t-x\right)  }{\zeta}\mathcal{V}(t)u(t,\zeta)\,dt\,.
\label{Int_Jost_eq}%
\end{equation}
In order to account for the conditions (\ref{Jost_sol}), we replace in
(\ref{Int_Jost_eq}): $x_{0}=b$, $x<b$ and $u_{0}=e^{i\zeta x}$. Then,
introducing the rescaled functions: $b_{+}=e^{-i\zeta x}\chi_{+}$ and%
\begin{equation}
\mathcal{K}_{+}\left(  t,x,\zeta\right)  =-e^{i\zeta(t-x)}\frac{\sin
\zeta\left(  t-x\right)  }{\zeta}\,, \label{Jost_eq_kernel}%
\end{equation}
we get the equation%
\begin{equation}
b_{+}(x,\zeta)=1-%
{\displaystyle\int\limits_{x}^{b}}
\mathcal{K}_{+}\left(  t,x,\zeta\right)  \mathcal{V}(t)b_{+}(t,\zeta
)\,dt\,,\quad\text{for }x<b\,, \label{Int_Jost_eq_rescaled}%
\end{equation}
while, for $x>b$ one has: $b_{+}=1$. The corresponding solution formally
writes as a Picard series: $\left.  b_{+}=\sum_{n=0}^{+\infty}b_{+,n}\right.
$ whose terms are defined according to%
\begin{equation}
b_{+,0}=1\,,\qquad b_{+,n}(x,\zeta)=-%
{\displaystyle\int\limits_{x}^{b}}
\mathcal{K}_{+}\left(  t,x,\zeta\right)  \mathcal{V}(t)b_{+,n-1}%
(t,\zeta)\,dt\,,\quad n\in\mathbb{N}^{\ast}\,,\quad x<b\,. \label{Jost_Picard}%
\end{equation}
Let $\eta>0$ and introduce the auxiliary domain $S_{\eta}$ (see the definition
given in the subsection \ref{Sec_Notation}). The rescaled kernel is a smooth
map of $t$ and $x$ with values in $\mathcal{H}_{\zeta}\left(  S_{\eta}\right)
$; then, using (\ref{Jost_Picard}), an induction over $n$ leads to: $\left.
b_{+,n}\in\mathcal{C}_{x}^{1}\left(  (-\infty,b),\,\mathcal{H}_{\zeta}\left(
S_{\eta}\right)  \right)  \right.  $. Next, the convergence of the sum
$\sum_{n=0}^{+\infty}b_{+,n}$ is considered, at first, in the case of a
bounded interval $x\in(c,b)$, then in the whole interval $\left(
-\infty,b\right)  $. Finally, the low and high-energy behaviour of $\chi_{+}$
are investigated to obtain the relations in (\ref{Jost_sol_bound}). In what
follows, we assume: $\eta>0$, $c<a$ and use the notation $U_{\eta
,c}=(c,b)\times S_{\eta}$; a direct computation yields%
\begin{equation}
1_{\left(  x,b\right)  }(t)1_{U_{\eta,c}}(x,\zeta)\mathcal{K}_{+}\left(
t,x,\zeta\right)  =\mathcal{O}\left(  \frac{1}{1+\left\vert \zeta\right\vert
}\right)  \,;\quad1_{\left(  x,b\right)  }(t)1_{U_{\eta,c}}(x,\zeta
)\partial_{x}\mathcal{K}_{+}(t,x,\zeta)=\mathcal{O}\left(  1\right)  \,,
\label{Jost_kernel_asympt1}%
\end{equation}
where the symbols $\mathcal{O}(\cdot)$, introduced in the Definition
\ref{Landau_Notation}, here refer to the metric space: $(c,b)^{2}\times
S_{\eta}$. According to (\ref{Jost_kernel_asympt1}), a positive constant
$C_{a,b,c,\eta}$, possibly depending on the data, exists such that%
\begin{equation}
\sup_{\substack{\left\{  x,\zeta\right\}  \in U_{c,\eta}\\t\in(x,b)}%
}\left\vert \mathcal{K}_{+}\left(  t,x,\zeta\right)  \,\right\vert
<C_{a,b,c,\eta}\,,
\end{equation}
Let introduce the rescaled potential: $\mathcal{\tilde{V}}\left(  x\right)
=C_{a,b,c,\eta}\mathcal{V}\left(  x\right)  $ and the function $F(x)=\int
_{x}^{b}\left\vert \,\mathcal{\tilde{V}}(t)\right\vert \,dt$. As a consequence
of Lemma \ref{Lemma_Jost_est}, we have:\newline$\left.  \left\vert
1_{U_{\eta,c}}\,b_{+,n+1}\right\vert \leq1_{U_{\eta,c}}\frac{F^{n+1}}%
{(n+1)!}\right.  $. Since $\left\Vert F\right\Vert _{L^{\infty}(c,b)}%
=C_{a,b,c,\eta}\left\Vert \mathcal{V}\right\Vert _{L^{1}(a,b)}$, this yields
the estimate%
\begin{equation}
\sup_{\left\{  x,\zeta\right\}  \in U_{c}^{+}}\left\vert 1_{U_{\eta,c}%
}\,b_{+,n+1}\right\vert \leq\frac{C_{a,b,c,\eta}^{n+1}\left\Vert
\mathcal{V}\right\Vert _{L^{1}(a,b)}^{n+1}}{(n+1)!}\,, \label{Jost_est_0}%
\end{equation}
and the Picard series uniformly converges to $b_{+}\in\mathcal{C}_{x}%
^{0}\left(  (c,b),\,\mathcal{H}_{\zeta}\left(  S_{\eta}\right)  \right)  $. In
particular, (\ref{Jost_est_0}) implies%
\begin{equation}
\sup_{\left\{  x,\zeta\right\}  \in U_{\eta,c}}\left\vert b_{+}\right\vert
\leq e^{C_{a,b,c,\eta}\left\Vert \mathcal{V}\right\Vert _{L^{1}(a,b)}%
}\,\Rightarrow1_{U_{\eta,c}}b_{+}=\mathcal{O}\left(  1\right)  \,,
\label{Jost_est_1}%
\end{equation}
and, taking into account the definition: $\chi_{+}=e^{i\zeta x}b_{+}$, it
follows%
\begin{equation}
1_{U_{\eta,c}}\left(  x,\zeta\right)  \chi_{+}\left(  x,\zeta\right)
=e^{i\zeta x}\mathcal{O}\left(  1\right)  \,. \label{Jost_est_1_1}%
\end{equation}
Next, consider $\partial_{x}b_{+}=b_{+}^{\prime}$. For $x\in(c,b)$, it
fulfills the equation%
\begin{equation}
b_{+}^{\prime}(x,\zeta)=-%
{\displaystyle\int\limits_{x}^{b}}
\partial_{x}\mathcal{K}_{+}\left(  t,x,\zeta\right)  \mathcal{V}%
(t)b_{+}(t,\zeta)\,dt\,,\quad x\in(c,b)\,. \label{Int_Jost_eq_rescaled1}%
\end{equation}
The regularity of the r.h.s. of (\ref{Int_Jost_eq_rescaled1}) is a consequence
of the properties of $b_{+}$ and of the kernel $\partial_{x}\mathcal{K}_{+}$.
In particular, making use of the above characterization of $b_{+}$, we get:
$b_{+}^{\prime}\in\mathcal{C}_{x}^{0}\left(  (c,b),\,\mathcal{H}_{\zeta
}\left(  S_{\eta}\right)  \right)  $. Moreover, being $b_{+}$ and
$\partial_{x}\mathcal{K}_{+}$ uniformly bounded for $\left\{  x,\zeta\right\}
\in U_{\eta,c}$ and $t\in(x,b)$, the r.h.s. of (\ref{Int_Jost_eq_rescaled1})
is $\mathcal{O}\left(  1\right)  $. Then, a direct computation shows that%
\begin{equation}
1_{U_{\eta,c}}\left(  x,\zeta\right)  \partial_{x}\chi_{+}\left(
x,\zeta\right)  =e^{i\zeta x}\mathcal{O}\left(  1+\left\vert \zeta\right\vert
\right)  \,. \label{Jost_est_2}%
\end{equation}

To discuss the case $x<c$, we notice that, when $x\in(-\infty,a)$, the
solution $b_{+}$ explicitly writes in the form%
\begin{equation}
b_{+}(x,\zeta)=B_{+}(\zeta)+B_{-}(\zeta)e^{-2i\zeta x}\,. \label{Jost_ext}%
\end{equation}
The condition $b_{+}\in\mathcal{C}_{x}^{1}\left(  (c,b),\,\mathcal{H}_{\zeta
}\left(  S_{\eta}\right)  \right)  $ compels the coefficients $B_{\pm}$ to be
holomorphic in $S_{\eta}$ with the only possible exception of the point
$\zeta=0$; this leads: $b_{+}\in\mathcal{C}_{x}^{1}\left(  (-\infty
,a),\,\mathcal{H}_{\zeta}\left(  S_{\eta}\backslash\left\{  0\right\}
\right)  \right)  $. In $\zeta=0$, the maps $\zeta\rightarrow B_{\pm}$ may
diverge, but, in such a case, a compensation between the different
contributions at the r.h.s of (\ref{Jost_ext}) take place to assure the
regularity of $\zeta\rightarrow1_{U_{\eta,c}}b_{+}$ in the origin. Therefore,
$B_{\pm}$ may have, at most, a simple pole in $\zeta=0$ and the conditions%
\begin{equation}
\lim_{\zeta\rightarrow0}\left(  B_{+}(\zeta)+B_{-}(\zeta)e^{-2i\zeta
x}\right)  =c_{0}\,,\qquad\lim_{\zeta\rightarrow0}-2i\zeta B_{-}%
(\zeta)e^{-2i\zeta x}=c_{1}\,, \label{Jost_lowenergy}%
\end{equation}
holds for any $x\in\left(  c,a\right)  $. Since these are independent of $x$,
the function $b_{+}$ can be extended to: $b_{+}\in\mathcal{C}_{x}^{1}\left(
(-\infty,a),\,\mathcal{H}_{\zeta}\left(  S_{\eta}\right)  \right)  $.

To conclude the proof, we need to extend the relations (\ref{Jost_est_1_1}),
(\ref{Jost_est_2}) to the case of $x\in\left(  -\infty,a\right)  $ and
$\zeta\in\overline{\mathbb{C}^{+}}$. According to (\ref{Jost_ext}) and
(\ref{Jost_lowenergy}), for any fixed $\zeta\in\overline{\mathbb{C}^{+}}$, the
functions $b_{+}$ and $b_{+}^{\prime}$ are uniformly bounded w.r.t.
$x\in\left(  -\infty,a\right)  $. In particular, in a neighbourhood
$\mathcal{B}_{1}\left(  0\right)  \cap\overline{\mathbb{C}^{+}}$ of $\zeta=0$
we have%
\begin{equation}
1_{\left(  -\infty,a\right)  }(x)1_{\mathcal{B}_{1}\left(  0\right)
\cap\overline{\mathbb{C}^{+}}}(\zeta)\partial_{x}^{i}b_{+}\left(
x,\zeta\right)  =\mathcal{O}\left(  1\right)  \,,\qquad i=0,1\,.
\label{Jost_in_bound}%
\end{equation}
To obtain estimates as $\left\vert \zeta\right\vert \rightarrow\infty$, the
high energy asymptotics of the coefficients $B_{\pm}$ is needed. The
$x$-derivative of (\ref{Jost_ext}) is%
\begin{equation}
b_{+}^{\prime}(x,\zeta)=-2i\zeta B_{-}(\zeta)e^{-2i\zeta x}\,.
\label{Jost_ext_1}%
\end{equation}
As it has been previously shown, for $\left\{  x,\zeta\right\}  \in U_{\eta
,c}$ it results $b_{+}^{\prime}(x,\zeta)=\mathcal{O}\left(  1\right)  $.
Taking $x\in\left(  c,a\right)  $, and using (\ref{Jost_ext_1}), this implies:
$\zeta B_{-}(\zeta)=\mathcal{O}\left(  1\right)  $ in the sense of the metric
space $S_{\eta}$; it follows%
\begin{equation}
\zeta B_{-}(\zeta)=\mathcal{O}\left(  1\right)  \qquad\text{in }\zeta
\in\overline{\mathbb{C}^{+}}\backslash\mathcal{B}_{1}\left(  0\right)  \,.
\label{Jost_coeff}%
\end{equation}
Similarly, since $b_{+}(x,\zeta)=\mathcal{O}\left(  1\right)  $ for $\left\{
x,\zeta\right\}  \in U_{\eta,c}$, taking $x\in\left(  c,a\right)  $ and using
the relations (\ref{Jost_ext}) and (\ref{Jost_coeff}), we get%
\begin{equation}
B_{+}(\zeta)=\mathcal{O}\left(  1\right)  \,\qquad\text{in }\zeta\in
\overline{\mathbb{C}^{+}}\backslash\mathcal{B}_{1}\left(  0\right)  \,.
\label{Jost_coeff1}%
\end{equation}
From these relations and the representations (\ref{Jost_ext}) and
(\ref{Jost_ext_1}), we obtain%
\begin{equation}
1_{\left(  -\infty,a\right)  }(x)1_{\overline{\mathbb{C}^{+}}\backslash
\mathcal{B}_{1}\left(  0\right)  }(\zeta)\partial_{x}^{i}b_{+}\left(
x,\zeta\right)  =\mathcal{O}\left(  1\right)  \,,\qquad i=1,2\,.
\label{Jost_ext_bound}%
\end{equation}
Then, taking into account the definition $\chi_{+}=e^{i\zeta x}b_{+}$, the
relations (\ref{Jost_sol_bound}) follows from (\ref{Jost_est_1_1}),
(\ref{Jost_est_2}), (\ref{Jost_ext_bound}) and (\ref{Jost_ext_bound}).
\end{proof}

The Jost function, denoted in the following with $w(\zeta)$, is defined as the
Wronskian associated with the couple $\left\{  \chi_{+}(\cdot,\zeta),\chi
_{-}(\cdot,\zeta)\right\}  $. Setting%
\begin{equation}
w(f,g)=fg^{\prime}-f^{\prime}g\,, \label{Wronskian}%
\end{equation}
we have%
\begin{equation}
w(\zeta)=\chi_{+}(\cdot,\zeta)\partial_{1}\chi_{-}(\cdot,\zeta)-\partial
_{1}\chi_{+}(\cdot,\zeta)\chi_{-}(\cdot,\zeta) \label{Jost_fun}%
\end{equation}
According to the definition of $\chi_{\pm}$, this function is independent of
the space variable, while due to the result of Proposition
\ref{Proposition_Jost}, $w(\zeta)$ is holomorphic w.r.t. $\zeta$ in an open
half complex plane including $\overline{\mathbb{C}^{+}}$. The point spectrum
of $Q_{0,0}(\mathcal{V})$ is defined by the solutions $z=\zeta^{2}$ to the
problem: $w(\zeta)=0\,$, $\zeta\in\mathbb{C}^{+}$ (e.g. in \cite{Yafa},
Chp.5). Since $Q_{0,0}(\mathcal{V})$ is a selfadjoint Schr\"{o}dinger operator
with a short range potential, the point spectrum is non-degenerate and located
on the negative real axis, while: $\sigma_{ac}\left(  Q_{0,0}(\mathcal{V}%
)\right)  =\left[  0,+\infty\right)  $. Then, $w\left(  \zeta\right)  $ does
not annihilates almost everywhere in the closed upper complex plane, with the
only possible exceptions of a discrete subset of the positive imaginary axis.
Next, consider $\zeta=k\in\mathbb{R}$ and let $w_{0}(k)$ be the Wronskian
associated with $\left\{  \chi_{+}(\cdot,-k),\chi_{-}(\cdot,k)\right\}  $%
;$\ $the behavior of $w(k)$ on the real axis follows by using the relations%
\begin{align}
\chi_{+}(\cdot,k)  &  =\frac{1}{2ik}\left(  w_{0}^{\ast}(k)\chi_{-}%
(\cdot,k)-w(k)\chi_{-}(\cdot,-k)\right)  \,,\label{Jost_sol_identity_1}\\
& \nonumber\\
\chi_{-}(\cdot,k)  &  =\frac{1}{2ik}\left(  w_{0}(k)\chi_{+}(\cdot
,k)-w(k)\chi_{+}(\cdot,-k)\right)  \,, \label{Jost_sol_identity_2}%
\end{align}
expressing the Jost's solutions $\chi_{\pm}(\cdot,k)$ in terms of the linearly
independent couples $\chi_{-}(\cdot,\pm k)$ and $\chi_{+}(\cdot,\pm k)$
respectively (e.g. in \cite{Yafa}, chp. 5). Plugging
(\ref{Jost_sol_identity_2}) into (\ref{Jost_sol_identity_1}), indeed, it
follows: $\left\vert w(k)\right\vert ^{2}=4k^{2}\,+\left\vert w_{0}\left(
k\right)  \right\vert ^{2}$, which entails%
\begin{equation}
\left\vert w(k)\right\vert ^{2}\geq4k^{2}\,. \label{w_ineq}%
\end{equation}

Let introduce the functions $\mathcal{G}^{z}(x,y)$ and $\mathcal{H}^{z}(x,y)$%
\begin{equation}
\mathcal{G}^{z}(x,y)=\frac{1}{w(\zeta)}\mathcal{\,}\left\{
\begin{array}
[c]{c}%
\chi_{+}(x,\zeta)\chi_{-}(y,\zeta)\,,\qquad x\geq y\,,\\
\\
\chi_{-}(x,\zeta)\chi_{+}(y,\zeta)\,,\qquad x<y\,,
\end{array}
\right.  \qquad z=\zeta^{2}\,, \label{G_z}%
\end{equation}%
\begin{equation}
\mathcal{H}^{z}(x,y)=-\frac{1}{w(\zeta)}\mathcal{\,}\left\{
\begin{array}
[c]{c}%
\chi_{+}(x,\zeta)\partial_{1}\chi_{-}(y,\zeta)\,,\qquad x\geq y\,,\\
\\
\chi_{-}(x,\zeta)\partial_{1}\chi_{+}(y,\zeta)\,,\qquad x<y\,,
\end{array}
\right.  \,,\qquad z=\zeta^{2}\,, \label{H_z}%
\end{equation}
Assume $\zeta\in\mathbb{C}^{+}$ to be such that $w(\zeta)\neq0$ and
$y\in\mathbb{R}$; from the equation (\ref{Jost_eq}) and the relations
(\ref{Jost_sol_bound}), it follows that the maps $x\rightarrow\mathcal{G}%
^{z}(\cdot,y)$ and $x\rightarrow\mathcal{H}^{z}(\cdot,y)$ are exponentially
decreasing as $\left\vert x-y\right\vert \rightarrow\infty$ (with a decreasing
rate depending on $\operatorname{Im}\zeta$) and fulfill the boundary condition
problems%
\begin{equation}
\left\{
\begin{array}
[c]{lll}%
\left(  -\partial_{x}^{2}+\mathcal{V-}\zeta^{2}\right)  \mathcal{G}^{z}%
(\cdot,y)=0 &  & \text{in }\mathbb{R}/\left\{  y\right\} \\
&  & \\
\mathcal{G}^{z}(y^{+},y)=\mathcal{G}^{z}(y^{-},y)\,, &  & \partial
_{1}\mathcal{G}^{z}(y^{+},y)-\partial_{1}\mathcal{G}^{z}(y^{-},y)=-1\,,
\end{array}
\right.  \label{Green_eq}%
\end{equation}
and%
\begin{equation}
\left\{
\begin{array}
[c]{lll}%
\left(  -\partial_{x}^{2}+\mathcal{V-}\zeta^{2}\right)  \mathcal{H}^{z}%
(\cdot,y)=0 &  & \text{in }\mathbb{R}/\left\{  y\right\} \\
&  & \\
\mathcal{H}^{z}(y^{+},y)-\mathcal{H}^{z}(y^{-},y)=1\,, &  & \partial
_{1}\mathcal{H}^{z}(y^{+},y)=\partial_{1}\mathcal{H}^{z}(y^{-},y)\,,
\end{array}
\right.  \label{D_Green_eq}%
\end{equation}
For $z=\zeta^{2}$ s.t. $w(\zeta)\neq0$ and $y\in\left\{  a,b\right\}  $, the
functions $\mathcal{G}^{z}(\cdot,y)$, $\mathcal{H}^{z}(\cdot,y)$ form a basis
of the defect space $\mathcal{N}_{z}$, which writes as%
\begin{equation}
\mathcal{N}_{z}=l.c.\left\{  \mathcal{G}^{z}(x,b)\,,\ \mathcal{H}%
^{z}(x,b)\,,\ \mathcal{G}^{z}(x,a)\,,\ \mathcal{H}^{z}(x,a)\right\}  \,.
\label{defect1}%
\end{equation}
According to the equation (\ref{Green_eq}), $\mathcal{G}^{z}(\cdot,y)$
identifies with the integral kernel of $\left(  Q_{0,0}(\mathcal{V})-z\right)
^{-1}$,while, as a consequence of the definitions (\ref{G_z})-(\ref{H_z}) and
the results of the Proposition \ref{Proposition_Jost}, the maps $z\rightarrow
\mathcal{G}^{z}(x,y)$, $z\rightarrow\mathcal{H}^{z}(x,y)$ are meromorphic in
$\mathbb{C}\backslash\mathbb{R}_{+}$ with a branch cut along the positive real
axis and poles, corresponding to the points in $\sigma_{p}\left(
Q_{0,0}(\mathcal{V})\right)  $, located on the negative real axis. In
particular, due to the inequality (\ref{w_ineq}), these functions continuously
extend up to the branch cut, both in the limits: $z\rightarrow k^{2}\pm i0$,
with the only possible exception of the point $z=0$.

In the case of defined positive potentials, it is possible to obtain uniform
estimates of $\mathcal{G}^{z}(x,y)$\ and $\mathcal{H}^{z}(x,y)$ up to the
whole branch cut. Next, we assume $\mathcal{V}$ to fulfill the additional
condition%
\begin{equation}
\left\langle u,\mathcal{V\,}u\right\rangle _{L^{2}(a,b)}>0\qquad\forall\,u\in
L^{2}(\mathbb{R})\,, \label{V_pos}%
\end{equation}
and introduce, for $\zeta\in\mathbb{C}^{+}$ and $z=\zeta^{2}$, the functions%
\begin{equation}
G^{\zeta}(x,y)=\mathcal{G}^{\zeta^{2}}(x,y)\,;\qquad\partial_{1}^{i}H^{\zeta
}(x,y)=\partial_{1}^{i}\mathcal{H}^{\zeta^{2}}(x,y)\,, \label{GH_zeta}%
\end{equation}
where the notation $\partial^{0}u=u$ is adopted. These are characterized as follows.

\begin{lemma}
\label{Lemma_Green_ker}Let $\mathcal{V}$ fulfill (\ref{V}) and (\ref{V_pos}).
For all $\left(  x,y\right)  \in\mathbb{R}^{2}$, $x\neq y$, the maps
$\zeta\rightarrow G^{\zeta}(x,y)$ and $\zeta\rightarrow\partial_{1}%
^{i}H^{\zeta}(x,y)$, $i=0,1$, defined according to (\ref{GH_zeta}) are
holomorphic in $\mathbb{C}^{+}$ and continuously extend to $\overline
{\mathbb{C}^{+}}$. In particular, for $\zeta=k\in\mathbb{R}$, it results:
$G^{k}\left(  \cdot,y\right)  ,H^{k}\left(  \cdot,y\right)  \in\mathcal{C}%
_{x}^{1}\left(  \mathbb{R}\backslash\left\{  y\right\}  \,,\ \mathcal{C}%
_{k}^{0}\left(  \mathbb{R}\right)  \right)  $, while the relations%
\begin{equation}
G^{\zeta}(x,y)=e^{i\zeta\left\vert x-y\right\vert }\mathcal{O}\left(  \frac
{1}{1+\left\vert \zeta\right\vert }\right)  \,,\quad H^{\zeta}(x,y)=e^{i\zeta
\left\vert x-y\right\vert }\mathcal{O}\left(  1\right)  \,,\quad\partial
_{1}H^{\zeta}(x,y)=e^{i\zeta\left\vert x-y\right\vert }\mathcal{O}\left(
1+\left\vert \zeta\right\vert \right)  \,. \label{Green_ker_bound}%
\end{equation}
hold with $\mathcal{O}\left(  \cdot\right)  $ referred to the metric space
$\mathbb{R}^{2}\times\overline{\mathbb{C}^{+}}$.
\end{lemma}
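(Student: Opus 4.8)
The plan is to reduce everything to the properties of the Jost solutions $\chi_{\pm}$ already established in Proposition \ref{Proposition_Jost}, together with the key lower bound $\left\vert w(k)\right\vert ^{2}\geq 4k^{2}$ from (\ref{w_ineq}) and the positivity hypothesis (\ref{V_pos}). The first task is holomorphy: since $\chi_{\pm}(\cdot,\zeta)$ and $\partial_{1}\chi_{\pm}(\cdot,\zeta)$ are holomorphic in $\zeta$ on an open half-plane containing $\overline{\mathbb{C}^{+}}$ (Proposition \ref{Proposition_Jost}), and $w(\zeta)$ is holomorphic there as well, the quotients defining $G^{\zeta}$, $H^{\zeta}$, $\partial_{1}H^{\zeta}$ in (\ref{G_z})--(\ref{H_z}) are holomorphic in $\zeta$ away from the zeros of $w$. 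On the open upper half-plane, the zeros of $w$ correspond to the (finitely many, negative) eigenvalues $z=\zeta^{2}$ of $Q_{0,0}(\mathcal{V})$; here I would invoke hypothesis (\ref{V_pos}) to conclude that $Q_{0,0}(\mathcal{V})\geq 0$ has \emph{no} negative eigenvalues, so $w(\zeta)\neq 0$ for all $\zeta\in\mathbb{C}^{+}$, giving holomorphy on all of $\mathbb{C}^{+}$. (This is exactly the same positivity input that underlies Proposition \ref{Proposition_spectrum}.)

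Next, the continuous extension to $\overline{\mathbb{C}^{+}}$. Away from $\zeta=0$ this is immediate from (\ref{w_ineq}), which bounds $1/w(\zeta)$ uniformly on $\{k\in\mathbb{R}:|k|\geq\varepsilon\}$ together with the continuous extension of $\chi_{\pm}$ to the real axis. The only delicate point is $\zeta=0$: there $1/w$ could a priori blow up. I would handle the threshold by recalling that under (\ref{V_pos}) the operator is strictly positive with no zero-energy resonance, so $w(0)\neq 0$; alternatively, one argues directly as in the low-energy analysis inside the proof of Proposition \ref{Proposition_Jost} that the potential $\mathcal{G}^{z}(\cdot,y)$, being the kernel of $(Q_{0,0}(\mathcal{V})-z)^{-1}$ and hence bounded near $z=0$ by the spectral theorem (since $0$ is not in the point spectrum and the resolvent stays bounded in $\mathcal{L}(L^{2})$ off the spectrum — but one needs the kernel, so the cleaner route is $w(0)\neq 0$), extends continuously through the threshold. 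Once $1/w$ is bounded and continuous on $\overline{\mathbb{C}^{+}}$, the stated $\mathcal{C}^{1}_{x}\big(\mathbb{R}\setminus\{y\}, \mathcal{C}^{0}_{k}(\mathbb{R})\big)$ regularity for $\zeta=k\in\mathbb{R}$ follows by combining the joint continuity of $\chi_{\pm}$ in $(x,\zeta)$, their $\mathcal{C}^{1}$-dependence on $x$, and continuity of $1/w$.

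The bounds (\ref{Green_ker_bound}) are then a direct consequence of (\ref{Jost_sol_bound}) and the two-sided control of $w$. Take $x\geq y$: then $G^{\zeta}(x,y)=\chi_{+}(x,\zeta)\chi_{-}(y,\zeta)/w(\zeta)$, and (\ref{Jost_sol_bound}) gives $\chi_{+}(x,\zeta)=e^{i\zeta x}\mathcal{O}(1)$ and $\chi_{-}(y,\zeta)=e^{-i\zeta y}\mathcal{O}(1)$, so the numerator is $e^{i\zeta(x-y)}\mathcal{O}(1)=e^{i\zeta|x-y|}\mathcal{O}(1)$; dividing by $w(\zeta)$, which is $\mathcal{O}(1+|\zeta|)$ from below by (\ref{w_ineq}) on the real axis and bounded below near $\zeta=0$ by $w(0)\neq 0$, yields the factor $\mathcal{O}\big(1/(1+|\zeta|)\big)$. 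The case $x<y$ is symmetric. For $H^{\zeta}$ and $\partial_{1}H^{\zeta}$ one argues identically, using $\partial_{1}\chi_{\pm}(y,\zeta)=e^{\pm i\zeta y}\mathcal{O}(1+|\zeta|)$ from (\ref{Jost_sol_bound}): the numerator of $H^{\zeta}$ gains one factor of $(1+|\zeta|)$ relative to $G^{\zeta}$, which cancels the gain from $1/w$ and leaves $\mathcal{O}(1)$; the $x$-derivative $\partial_{1}H^{\zeta}$ differentiates a $\chi_{\pm}(x,\zeta)$ factor, producing another $(1+|\zeta|)$ and hence $\mathcal{O}(1+|\zeta|)$. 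In each case the exponential prefactor assembles to $e^{i\zeta|x-y|}$ because the "outer" Jost solution carries the decaying exponential on the correct side. The main obstacle is the threshold behaviour at $\zeta=0$: establishing that $w(0)\neq 0$ (equivalently, absence of a zero-energy resonance) under (\ref{V_pos}) — this is where the positivity of $\mathcal{V}$ is genuinely used — and checking that no spurious singularity is introduced there; away from $\zeta=0$ everything is a routine consequence of Proposition \ref{Proposition_Jost} and (\ref{w_ineq}).
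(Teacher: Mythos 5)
Your overall reduction (holomorphy and bounds of $G^{\zeta}$, $H^{\zeta}$, $\partial_{1}H^{\zeta}$ from Proposition \ref{Proposition_Jost} plus a two-sided control of $w$) is the same strategy as the paper's, but the proposal leaves the decisive step unproved. You yourself flag the threshold $\zeta=0$ as ``the main obstacle'' and then dispose of it by ``recalling'' that under (\ref{V_pos}) there is no zero-energy resonance, i.e.\ $w(0)\neq 0$. That is precisely the content that has to be established, and it is where the paper actually works: writing $\chi_{+}(x,\zeta)=e^{i\zeta x}b_{+}(x,\zeta)$ and $w(\zeta)=-\partial_{1}b_{+}(a,\zeta)-2i\zeta\,b_{+}(a,\zeta)$, one shows by induction on the Picard iterates that at $\zeta=0$ all terms $b_{+,n}(x,0)=\int_{x}^{b}(t-x)\mathcal{V}(t)b_{+,n-1}(t,0)\,dt$ are nonnegative, hence $b_{+}(x,0)\geq 1$, and then $\partial_{1}b_{+}(a,0)=-\int_{a}^{b}\mathcal{V}(t)b_{+}(t,0)\,dt<0$, which gives $w(0)=-\partial_{1}b_{+}(a,0)\neq 0$. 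Your alternative sketch via boundedness of the resolvent kernel near $z=0$ is not viable (as you half-concede): $0$ lies in the continuous spectrum, so no $\mathcal{L}(L^{2})$ resolvent bound is available there, and in any case one needs the kernel. Without the explicit zero-energy argument the positivity hypothesis is never actually used, and the lemma is not proved.

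There is a second, smaller gap in the high-energy control. The bounds (\ref{Green_ker_bound}) are claimed on all of $\overline{\mathbb{C}^{+}}$, so you need $\left(w(\zeta)\right)^{-1}=\mathcal{O}\bigl(1/(1+\left\vert\zeta\right\vert)\bigr)$ uniformly on the closed upper half-plane; the inequality (\ref{w_ineq}) only bounds $w$ from below on the real axis, and combining it with $w(0)\neq 0$ does not reach the interior for large $\left\vert\zeta\right\vert$. The paper obtains the uniform bound from the same representation $w(\zeta)=-\partial_{1}b_{+}(a,\zeta)-2i\zeta\,b_{+}(a,\zeta)$, using that $b_{+}(a,\zeta)\rightarrow 1$ as $\left\vert\zeta\right\vert\rightarrow\infty$ in $\overline{\mathbb{C}^{+}}$ (from the integral equation for $b_{+}$) and that $\partial_{1}b_{+}(a,\cdot)$ is uniformly bounded, so that $\left\vert w(\zeta)\right\vert\gtrsim\left\vert\zeta\right\vert$ for large $\left\vert\zeta\right\vert$ in the whole closed half-plane. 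Once these two points are supplied, the rest of your argument (holomorphy away from zeros of $w$, continuity up to the real axis, and the assembly of the exponential factors $e^{i\zeta\left\vert x-y\right\vert}$ in the three bounds) matches the paper and is fine.
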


\begin{proof}
The conditions (\ref{V}), (\ref{V_pos}) and the relation (\ref{w_ineq})
prevent $w\left(  \zeta\right)  $ to have zeroes in $\overline{\mathbb{C}^{+}%
}\backslash\left\{  0\right\}  $. Computing $w\left(  \zeta\right)  $, we have%
\begin{equation}
w(\zeta)=\chi_{+}(a,\zeta)\partial_{1}\chi_{-}(a,\zeta)-\partial_{1}\chi
_{+}(a,\zeta)\chi_{-}(a,\zeta)\,. \label{w_explicit_a}%
\end{equation}
Using the exterior conditions (\ref{Jost_sol}), the coefficients $\partial
_{1}^{j}\chi_{-}(a,\zeta)$, $j=0,1$, are explicitly given by%
\begin{equation}
\chi_{-}(a,\zeta)=e^{-i\zeta a}\,,\qquad\partial_{1}\chi_{-}(a,\zeta)=-i\zeta
e^{-i\zeta a}\,, \label{w_explicit_a1}%
\end{equation}
For $x<b$, the function $\chi_{+}(\cdot,\zeta)$ writes as $\chi_{+}%
(x,\zeta)=e^{i\zeta x}b_{+}(x,\zeta)$, where $b_{+}(\cdot,\zeta)$ solves the
equation (\ref{Int_Jost_eq_rescaled}) and can be represented as the sum of the
Picard series: $\left.  b_{+}(\cdot,\zeta)=\sum_{n=0}^{+\infty}b_{+,n}%
(\cdot,\zeta)\right.  $ whose terms are defined by a recurrence relation given
in (\ref{Jost_Picard}). Under the condition (\ref{V}), it has been shown that
this series uniformly converges to $b_{+}\in\mathcal{C}_{x}^{1}\left(  \left(
c,b\right)  ,\,\mathcal{H}_{\zeta}\left(  S_{\eta}\right)  \right)  $, being
$\left(  c,b\right)  $ any interval including the point $a$ (see the proof of
the Proposition \ref{Proposition_Jost}); in particular the relations:
$\partial_{1}^{j}b_{+}(\cdot,\zeta)=\mathcal{O}\left(  1\right)  $, $j=0,1$,
hold with the symbols $\mathcal{O}(\cdot)$ referring to the metric space:
$(c,b)\times S_{\eta}$. Let $\zeta=0$; the relations (\ref{Jost_Picard}) write
as%
\begin{equation}
b_{+,n}(x,0)=%
{\displaystyle\int\limits_{x}^{b}}
(t-x)\mathcal{V}(t)b_{+,n-1}(t,0)\,dt\,,\quad n\in\mathbb{N}^{\ast}\,,\ x<b\,.
\end{equation}
Using the conditions: $\left\langle u,\mathcal{V}u\right\rangle _{L^{2}%
(a,b)}>0$ and $b_{+,0}=1$, an induction argument leads to: $b_{+,n}(x,0)\geq0$
and $b_{+}(x,0)\geq1$. Taking the limit of (\ref{Int_Jost_eq_rescaled1}) as
$\zeta\rightarrow0$, we get%
\begin{equation}
\partial_{x}b_{+}(x,0)=-%
{\displaystyle\int\limits_{x}^{b}}
\mathcal{V}(t)b_{+}(t,0)\,dt\,,\qquad x<b\,.
\end{equation}
Since $b_{+}>0$ and $\mathcal{V}>0$, at least in a subset of $(a,b)$, we have:
$\partial_{1}b_{+}(a,0)<0\,$. With the notation introduced above, the equation
(\ref{w_explicit_a}) rephrases as%
\begin{equation}
w(\zeta)=-\partial_{1}b_{+}(a,\zeta)-2i\zeta\,b_{+}(a,\zeta)\,.
\label{w_explicit}%
\end{equation}
Then, according to the conditions: $\partial_{1}b_{+}(a,0)\neq0$, and
$b_{+}(\cdot,\zeta)=\mathcal{O}\left(  1\right)  $, we have: $w(\zeta
)=-\partial_{1}b_{+}(a,\zeta)+\mathcal{O}(\zeta)$ which implies $w\left(
0\right)  \neq0$.

As a consequence, the function $p\left(  \zeta\right)  $ defined by%
\begin{equation}
p\left(  \zeta\right)  =\frac{1+\zeta}{w\left(  \zeta\right)  }\,,
\end{equation}
is bounded in any bounded set $\mathcal{B}_{R}\left(  0\right)  \cap
\overline{\mathbb{C}^{+}}$, $R>0$. Moreover, using the equation
(\ref{Int_Jost_eq_rescaled}), we have%
\begin{equation}
b_{+}(a,\zeta)=1+%
{\displaystyle\int\limits_{a}^{b}}
\frac{e^{2i\zeta(t-x)}-1}{\zeta}\mathcal{V}(t)b_{+}(t,\zeta)\,dt\,;
\end{equation}
since $b_{+}(\cdot,\zeta)$ is uniformly bounded for $\zeta\in\overline
{\mathbb{C}^{+}}$, and $\left\vert 1/\zeta\right\vert \left\vert
e^{2i\zeta(t-x)}-1\right\vert \leq2/\left\vert \zeta\right\vert $, it follows:
\newline$\lim\nolimits_{\zeta\rightarrow\infty\,,\ \zeta\in\overline
{\mathbb{C}^{+}}}b_{+}(a,\zeta)=1$. Set $M=\sup_{\zeta\in\overline
{\mathbb{C}^{+}}}\left\vert \partial_{1}b_{+}(a,\zeta)\right\vert $, and let
$\tilde{R}>0$ be such that $\left\vert \zeta\right\vert \left\vert
b_{+}(a,\zeta)\right\vert /M>1$ for any $\zeta\in\overline{\mathbb{C}^{+}%
}\backslash\mathcal{B}_{\tilde{R}}\left(  0\right)  $. From the representation
(\ref{w_explicit}) it follows
\begin{equation}
\sup_{\zeta\in\overline{\mathbb{C}^{+}}\backslash\mathcal{B}_{\tilde{R}%
}\left(  0\right)  }\left\vert p\left(  \zeta\right)  \right\vert \leq\frac
{1}{M}\frac{1+\left\vert \zeta\right\vert }{2\left\vert \zeta\right\vert
\frac{\left\vert b_{+}(a,\zeta)\right\vert }{M}-1\,}\lesssim1\,.
\end{equation}
Then, $p\left(  \zeta\right)  $ results uniformly bounded as $\zeta
\in\overline{\mathbb{C}^{+}}$ and we can write%
\begin{equation}
\left(  w\left(  \zeta\right)  \right)  ^{-1}=\mathcal{O}\left(  \frac
{1}{1+\left\vert \zeta\right\vert }\right)  \,,\label{w_inverse_bound}%
\end{equation}
in the sense of the metric space $\overline{\mathbb{C}^{+}}$ (see Definition
\ref{Landau_Notation}).

From the definitions (\ref{G_z})-(\ref{H_z}), (\ref{GH_zeta}) and the result
of Proposition \ref{Proposition_Jost}, the functions $\zeta\rightarrow
G^{\zeta}(x,y)$ and $\zeta\rightarrow\partial_{1}^{i}H^{\zeta}(x,y)$, $i=0,1$,
are meromorphic in $\mathbb{C}^{+}$, while, the previous result implies that,
in our assumptions, these maps have no poles in $\mathbb{C}^{+}$ and
continuously extend to the whole real axis. The relations
(\ref{Green_ker_bound}) follows from (\ref{Jost_sol_bound}) by using
(\ref{w_inverse_bound}).
\end{proof}

\subsection{Resolvent analysis.}

The results of the previous Sections and, in particular, the Krein's-like
formula given in (\ref{krein_1}), allow a detailed resolvent analysis for the
operators $Q_{\theta_{1},\theta_{2}}(\mathcal{V})$. At this concern, we recall
that the maps $z\rightarrow q(z,\mathcal{V})$ and $z\rightarrow\gamma
(e_{i},z,\mathcal{V})$, appearing at the r.h.s. of (\ref{krein_1}), are
holomorphic in $\mathbb{C}\backslash\sigma\left(  Q_{0,0}(\mathcal{V})\right)
$, while, from the definitions (\ref{ab_teta1,2})-(\ref{ab_teta1,2}), the
matrix coefficients in $A_{\theta_{1},\theta_{2}}$ and $B_{\theta_{1}%
,\theta_{2}}$ are holomorphic functions of the parameters $\left(  \theta
_{1},\theta_{2}\right)  $ in the whole $\mathbb{C}^{2}$. Then%
\begin{equation}
d(z,\theta_{1},\theta_{2})=\det\left(  B_{\theta_{1},\theta_{2}}%
\,q(z,\mathcal{V})-A_{\theta_{1},\theta_{2}}\right)  \,,\label{d_z_teta}%
\end{equation}
defines an holomorphic function of the variables $\left(  z,\theta_{1}%
,\theta_{2}\right)  $ in $\mathbb{C}\backslash\sigma\left(  Q_{0,0}%
(\mathcal{V})\right)  \times\mathbb{C}^{2}$. Moreover, for any couple $\left(
\theta_{1},\theta_{2}\right)  $, the set of singular points%
\begin{equation}
\mathcal{S}_{\theta_{1},\theta_{2}}=\left\{  z\in\mathbb{C\,}\left\vert
\ d(z,\theta_{1},\theta_{2})=0\right.  \right\}  \,,\label{S_teta}%
\end{equation}
is discrete. As a consequence, the representation (\ref{krein_1}) makes sense
in the dense open set \newline$\mathbb{C}\backslash\left(  \sigma\left(
Q_{0,0}(\mathcal{V})\right)  \cup\mathcal{S}_{\theta_{1},\theta_{2}}\right)
$. Let us fix $z\in\mathbb{C}\backslash\left(  \sigma\left(  Q_{0,0}%
(\mathcal{V})\right)  \cup\mathcal{S}_{\tilde{\theta}_{1},\tilde{\theta}_{2}%
}\right)  $, for a given couple $\left(  \tilde{\theta}_{1},\tilde{\theta}%
_{2}\right)  \in\mathbb{C}^{2}$; using the expansion%
\begin{equation}
d(z,\theta_{1},\theta_{2})=d(z,\tilde{\theta}_{1},\tilde{\theta}%
_{2})+\mathcal{O}\left(  \theta_{1}-\tilde{\theta}_{1}\right)  +\mathcal{O}%
\left(  \theta_{2}-\tilde{\theta}_{2}\right)  \,,
\end{equation}
it results: $d(z,\theta_{1},\theta_{2})\neq0$ for all $\left(  \theta
_{1},\theta_{2}\right)  $ in a suitable neighbourhood of $\left(
\tilde{\theta}_{1},\tilde{\theta}_{2}\right)  $. This implies that, for any
couple of parameters $\left(  \tilde{\theta}_{1},\tilde{\theta}_{2}\right)  $,
there exists $z\in\mathcal{\rho}\left(  Q_{\tilde{\theta}_{1},\tilde{\theta
}_{2}}(\mathcal{V})\right)  $ and a positive constant $\delta$, possibly
depending on $\left(  \tilde{\theta}_{1},\tilde{\theta}_{2}\right)  $, such
that: $z\in\mathcal{\rho}\left(  Q_{\theta_{1},\theta_{2}}(\mathcal{V}%
)\right)  $ for all $\left(  \theta_{1},\theta_{2}\right)  \in\mathcal{B}%
_{\delta}\left(  \left(  \tilde{\theta}_{1},\tilde{\theta}_{2}\right)
\right)  $. Next, for such a $z$, consider the map $\left(  \theta_{1}%
,\theta_{2}\right)  \rightarrow\left(  Q_{\theta_{1},\theta_{2}}%
(\mathcal{V})-z\right)  ^{-1}$ defined for $\left(  \theta_{1},\theta
_{2}\right)  \in\mathcal{B}_{\delta}\left(  \left(  \tilde{\theta}_{1}%
,\tilde{\theta}_{2}\right)  \right)  $. Since $z\notin\mathcal{S}_{\theta
_{1},\theta_{2}}$, the coefficients of the finite rank part at the r.h.s. of
(\ref{krein_1}) are holomorphic w.r.t. $\left(  \theta_{1},\theta_{2}\right)
$ and $\left(  Q_{\theta_{1},\theta_{2}}(\mathcal{V})-z\right)  ^{-1}$ forms
an analytic family in $\mathcal{L}\left(  L^{2}\left(  \mathbb{R}\right)
\right)  $. Then, $Q_{\theta_{1},\theta_{2}}(\mathcal{V})$ is analytic in the
sense of Kato, w.r.t. the parameters $\left(  \theta_{1},\theta_{2}\right)  $.

As this result suggests, when $\left(  \theta_{1},\theta_{2}\right)  $ is
close to the origin of $\mathbb{C}^{2}$, a part of the point spectrum
$\sigma_{p}\left(  Q_{\theta_{1},\theta_{2}}(\mathcal{V})\right)  $ is formed
by non-degenerate eigenvalues holomorphically dependent on $\left(  \theta
_{1},\theta_{2}\right)  $ and converging, in the limit $\left(  \theta
_{1},\theta_{2}\right)  \rightarrow\left(  0,0\right)  $, to the corresponding
points of $\sigma_{p}\left(  Q_{0,0}(\mathcal{V})\right)  $ (see the point
$(ii)$ in the next Proposition \ref{Proposition_spectrum}). As an aside we
notice that, for generic compactly supported potentials, new spectral points
(not converging to $\sigma_{p}\left(  Q_{0,0}(\mathcal{V})\right)  $) may
eventually arise in a complex neighbourhood of the origin, due to the
interface conditions. Nevertheless, if the additional assumption of positive
potentials (\ref{V_pos}) is adopted, it is possible to prove the identity
$\sigma\left(  Q_{\theta_{1},\theta_{2}}(\mathcal{V})\right)  =\sigma\left(
Q_{0,0}(\mathcal{V})\right)  $ provided that $\theta_{i=1,2}$ are small
enough. To fix this point, we need appropriate estimates for the coefficients
of the finite rank part in (\ref{krein_1}).

The relations (\ref{krein_1}) and (\ref{G_z_teta}) can be made explicit by
computing the matrix representation of $\left(  B_{\theta_{1},\theta_{2}%
}\,q(z,\mathcal{V})-A_{\theta_{1},\theta_{2}}\right)  $ w.r.t. the basis
$\left\{  e_{j}\right\}  _{j=1}^{4}$ and (\ref{defect1}). Making use of the
definition (\ref{gamma_q_def}), a direct computation yields%
\begin{equation}
\gamma(\cdot,z,\mathcal{V})=%
\begin{pmatrix}
1 &  &  & \\
& -1 &  & \\
&  & 1 & \\
&  &  & -1
\end{pmatrix}
\,,\quad\text{with:\ }\left\{
\begin{array}
[c]{l}%
\gamma(e_{1},z,\mathcal{V})=\mathcal{G}^{z}(x,b)\,;\quad\gamma(e_{2}%
,z,\mathcal{V})=-\mathcal{H}^{z}(x,b)\,;\\
\\
\gamma(e_{3},z,\mathcal{V})=\mathcal{G}^{z}(x,a)\,;\quad\gamma(e_{4}%
,z,\mathcal{V})=-\mathcal{H}^{z}(x,a)\,.
\end{array}
\right.  \label{gamma_z}%
\end{equation}
The matrix coefficients of $q(z,\mathcal{V})$ are related to the boundary
values of the functions $\gamma(e_{i},z,\mathcal{V})$, $i=1...4$, as
$x\rightarrow b^{\pm}$ or $x\rightarrow a^{\pm}$. Using the definitions
(\ref{G_z}) and (\ref{H_z}), it follows: $\partial_{1}\mathcal{G}%
^{z}(x,y)=\mathcal{H}^{z}(y,x)\,$; in particular, the boundary values at
$x\rightarrow y^{\pm}$ are related by%
\begin{equation}
\partial_{1}\mathcal{G}^{z}(y^{\pm},y)=\mathcal{H}^{z}(y^{\mp},y)\,.
\end{equation}
Using these relations and the boundary conditions in (\ref{Green_eq}%
)-(\ref{D_Green_eq}), a direct computation yields%
\begin{equation}
q(z,\mathcal{V})=%
\begin{pmatrix}
\mathcal{G}^{z}(b,b)\medskip & \frac{1}{2}-\mathcal{H}^{z}(b^{+},b) &
\mathcal{G}^{z}(b,a) & -\mathcal{H}^{z}(b,a)\\
\mathcal{H}^{z}(b^{+},b)\medskip-\frac{1}{2} & -\partial_{1}\mathcal{H}%
^{z}(b,b) & \mathcal{H}^{z}(a,b) & -\partial_{1}\mathcal{H}^{z}(b,a)\\
\mathcal{G}^{z}(a,b)\medskip & -\mathcal{H}^{z}(a,b) & \mathcal{G}^{z}(a,a) &
-\left(  \frac{1}{2}+\mathcal{H}^{z}(a^{-},a)\right) \\
\mathcal{H}^{z}(b,a) & -\partial_{1}\mathcal{H}^{z}(a,b) & \mathcal{H}%
^{z}(a^{-},a)+\frac{1}{2} & -\partial_{1}\mathcal{H}^{z}(a,a)
\end{pmatrix}
\,. \label{q_z}%
\end{equation}

\begin{lemma}
\label{Lemma_Krein_coeff}Let the matrix $\left(  B_{\theta_{1},\theta_{2}%
}\,q(z,\mathcal{V})-A_{\theta_{1},\theta_{2}}\right)  $ be defined according
to the relations (\ref{q_z}), (\ref{AB_teta1,2})-(\ref{ab_teta1,2}) and assume
$\mathcal{V}$ to fulfill the conditions (\ref{V}), (\ref{V_pos}). There exists
$\delta>0$ such that, for all $\left(  \theta_{1},\theta_{2}\right)
\in\mathcal{B}_{\delta}\left(  \left(  0,0\right)  \right)  $, $\left(
B_{\theta_{1},\theta_{2}}\,q(z,\mathcal{V})-A_{\theta_{1},\theta_{2}}\right)
$ is invertible in $z\in\mathbb{C}/\mathbb{R}_{+}$. The coefficients of the
inverse matrix are holomorphic w.r.t. $\left(  z,\theta_{1},\theta_{2}\right)
$ with: $\left(  \theta_{1},\theta_{2}\right)  \in\mathcal{B}_{\delta}\left(
\left(  0,0\right)  \right)  $, $z\in\mathbb{C}/\mathbb{R}_{+}$; they have
continuous extensions to the branch cut both in the limits $z=k^{2}%
+i\varepsilon$, $\varepsilon\rightarrow0^{\pm}$.
\end{lemma}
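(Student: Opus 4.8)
\medskip
\noindent\textbf{Proof (outline of the strategy).}
The plan is to reduce the statement to a single uniform non\nobreakdash-vanishing estimate for the determinant
\[
d(z,\theta_{1},\theta_{2})=\det\left( B_{\theta_{1},\theta_{2}}\,q(z,\mathcal{V})-A_{\theta_{1},\theta_{2}}\right) .
\]
From (\ref{AB_teta1,2})--(\ref{ab_teta1,2}) one reads off that $A_{\theta_{1},\theta_{2}}$ is diagonal with $A_{0,0}=2\,\mathrm{Id}$ and $B_{0,0}=0$, so that at the origin the matrix is simply $-2\,\mathrm{Id}$; more generally $\det A_{\theta_{1},\theta_{2}}$ stays bounded away from $0$ and $A_{\theta_{1},\theta_{2}}^{-1}$ is uniformly bounded for $|\theta_{j}|<\delta_{0}$. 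Writing
\[
B_{\theta_{1},\theta_{2}}\,q(z,\mathcal{V})-A_{\theta_{1},\theta_{2}}=-A_{\theta_{1},\theta_{2}}\left( \mathrm{Id}-N_{\theta_{1},\theta_{2}}\,q(z,\mathcal{V})\right) ,\qquad N_{\theta_{1},\theta_{2}}:=A_{\theta_{1},\theta_{2}}^{-1}B_{\theta_{1},\theta_{2}},
\]
it is enough to prove that $\mathrm{Id}-N_{\theta_{1},\theta_{2}}\,q(z,\mathcal{V})$ is invertible for $|\theta_{j}|<\delta$, \emph{uniformly} for $z\in\overline{\mathbb{C}\backslash\mathbb{R}_{+}}$ (identifying $z$ with $\zeta=\sqrt{z}\in\overline{\mathbb{C}^{+}}$, the two edges of the cut corresponding to the two real half\nobreakdash-lines $\zeta=\pm k$, $k>0$). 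Joint holomorphy of the left\nobreakdash-hand side in $(z,\theta_{1},\theta_{2})$ on $(\mathbb{C}\backslash\mathbb{R}_{+})\times\mathbb{C}^{2}$, and its continuous extension to the branch cut from either side, are inherited from the explicit formula (\ref{q_z}), from the representation (\ref{gamma_z}) of $\gamma(\cdot,z,\mathcal{V})$, and from Lemma \ref{Lemma_Green_ker} (which, under (\ref{V_pos}), rules out poles of $q$ in $\overline{\mathbb{C}^{+}}$), together with the entire dependence of $A_{\theta_{1},\theta_{2}}$, $B_{\theta_{1},\theta_{2}}$ on $(\theta_{1},\theta_{2})$; once $d\neq0$ is known on the relevant set, Cramer's rule $M^{-1}=\mathrm{adj}(M)/\det M$ carries these properties over to the inverse matrix.

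A direct computation from (\ref{ab_teta1,2}) shows that $N_{\theta_{1},\theta_{2}}$ is a ``swap'' matrix whose only nonzero entries are $N_{12}=-2\tanh(\theta_{2}/4)$, $N_{21}=2\tanh(\theta_{1}/4)$, $N_{34}=2\tanh(\theta_{2}/4)$, $N_{43}=-2\tanh(\theta_{1}/4)$, so that its $\theta_{1}$\nobreakdash-part sits in rows $\{2,4\}$, its $\theta_{2}$\nobreakdash-part in rows $\{1,3\}$, and $N_{\theta_{1},\theta_{2}}=\mathcal{O}(\theta_{1})+\mathcal{O}(\theta_{2})$. Reordering the canonical basis as $(e_{1},e_{3},e_{2},e_{4})$, the matrix $\mathrm{Id}-N_{\theta_{1},\theta_{2}}\,q(z,\mathcal{V})$ acquires a $2\times2$ block structure with diagonal blocks $\mathrm{Id}+\mathcal{R}_{11}$, $\mathrm{Id}+\mathcal{R}_{22}$ and off\nobreakdash-diagonal blocks $\mathcal{R}_{12}$, $\mathcal{R}_{21}$, where, inspecting (\ref{q_z}): $\mathcal{R}_{11},\mathcal{R}_{22}$ are assembled only from the entries $\mathcal{H}^{z}(\cdot,\cdot)$ of $q$ at the points $\{a,b\}$, whence $\mathcal{R}_{11}=\mathcal{O}(\theta_{2})$ and $\mathcal{R}_{22}=\mathcal{O}(\theta_{1})$ \emph{uniformly} in $z$, by (\ref{Green_ker_bound}); $\mathcal{R}_{12}$ equals $N_{12}$ (resp.\ $N_{34}$) times the $2\times2$ block of the entries $\partial_{1}\mathcal{H}^{z}(\cdot,\cdot)$ at $\{a,b\}$, so its entries are of the form $e^{i\zeta|x-y|}\mathcal{O}\!\left(\theta_{2}(1+|\zeta|)\right)$, $\{x,y\}\subset\{a,b\}$, again by (\ref{Green_ker_bound}); and $\mathcal{R}_{21}$ equals $N_{21}$ (resp.\ $N_{43}$) times the $2\times2$ block of the entries $\mathcal{G}^{z}(\cdot,\cdot)$ at $\{a,b\}$, so its entries are of the form $e^{i\zeta|x-y|}\mathcal{O}\!\left(\theta_{1}/(1+|\zeta|)\right)$. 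The decisive point is the compensation taking place in the Schur complement: since $|e^{i\zeta|x-y|}|\le1$ on $\overline{\mathbb{C}^{+}}$ whenever $\{x,y\}\subset\{a,b\}$, and since every entry of the product $\mathcal{R}_{12}\,(\mathrm{Id}+\mathcal{R}_{22})^{-1}\,\mathcal{R}_{21}$ pairs a factor $\mathcal{O}(1+|\zeta|)$ coming from $\mathcal{R}_{12}$ with a factor $\mathcal{O}(1/(1+|\zeta|))$ coming from $\mathcal{R}_{21}$ (and a uniformly bounded factor from $(\mathrm{Id}+\mathcal{R}_{22})^{-1}$, valid for $|\theta_{j}|<\delta_{0}$), that product is $\mathcal{O}(\theta_{1}\theta_{2})$ \emph{uniformly} in $z\in\overline{\mathbb{C}\backslash\mathbb{R}_{+}}$. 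This is exactly where hypothesis (\ref{V_pos}) enters, through the bound $w(\zeta)^{-1}=\mathcal{O}(1/(1+|\zeta|))$ of Lemma \ref{Lemma_Green_ker} which produces the decay of the $\mathcal{G}^{z}$\nobreakdash-block; for generic $\mathcal{V}$ this decay would be lost. Consequently the Schur complement $(\mathrm{Id}+\mathcal{R}_{11})-\mathcal{R}_{12}(\mathrm{Id}+\mathcal{R}_{22})^{-1}\mathcal{R}_{21}$ equals $\mathrm{Id}+\mathcal{O}(\theta_{1})+\mathcal{O}(\theta_{2})$, uniformly in $z$.

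Hence there is $\delta\in(0,\delta_{0})$ such that, for $|\theta_{j}|<\delta$, both $\det(\mathrm{Id}+\mathcal{R}_{22})$ and the determinant of the Schur complement are bounded away from $0$ uniformly in $z\in\overline{\mathbb{C}\backslash\mathbb{R}_{+}}$; their product is $\det\!\left(\mathrm{Id}-N_{\theta_{1},\theta_{2}}\,q(z,\mathcal{V})\right)$, so $d(z,\theta_{1},\theta_{2})\neq0$ for all $z\in\mathbb{C}\backslash\mathbb{R}_{+}$ and for the two boundary limits $z=k^{2}\pm i0$. Invertibility of $B_{\theta_{1},\theta_{2}}\,q(z,\mathcal{V})-A_{\theta_{1},\theta_{2}}$ follows, and the holomorphy and continuity claims for the inverse follow as in the first paragraph. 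I expect the only genuine obstacle to be precisely the one just addressed: the matrix $q(z,\mathcal{V})$ is itself \emph{unbounded} as $|z|\to\infty$ (its $\partial_{1}\mathcal{H}^{z}$ entries grow like $|\zeta|$), so a crude Neumann\nobreakdash-series argument of the kind ``$\Vert N_{\theta_{1},\theta_{2}}\,q(z,\mathcal{V})\Vert<1$ for small $\theta$'' fails and cannot be made uniform in $z$; one must exploit the precise placement of the growing entries in one off\nobreakdash-diagonal block and the matching $(1+|\zeta|)$ versus $(1+|\zeta|)^{-1}$ asymptotics of Lemma \ref{Lemma_Green_ker}. A secondary, routine point is the bookkeeping of the exponential weights $e^{i\zeta|x-y|}$ with $\{x,y\}\subset\{a,b\}$, which are harmless because they are bounded on $\overline{\mathbb{C}^{+}}$ and occur in the Schur\nobreakdash-complement products only with nonnegative combined exponent.
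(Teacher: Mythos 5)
Your proposal is correct and follows essentially the same route as the paper: both arguments reduce the lemma to a uniform-in-$z$ lower bound on $d(z,\theta_{1},\theta_{2})$ built from the explicit matrix (\ref{Krein_coeff_0}) and the estimates of Lemma \ref{Lemma_Green_ker} (available precisely because (\ref{V_pos}) excludes zeroes of $w$ and gives $w^{-1}=\mathcal{O}\left((1+\left\vert \zeta\right\vert)^{-1}\right)$), and then pass holomorphy and the continuous extensions to the branch cut over to the inverse via Cramer's rule once the determinant is bounded away from zero. The only real difference is bookkeeping: the paper expands the $4\times4$ determinant directly into $4\left(1+\cosh\frac{\theta_{1}}{2}\right)\left(1+\cosh\frac{\theta_{2}}{2}\right)+\mathcal{O}\left(\theta_{1}\right)+\mathcal{O}\left(\theta_{2}\right)$ uniformly in $z$ (eq. (\ref{d_exp})), while your factorization through $-A_{\theta_{1},\theta_{2}}\left(\mathrm{Id}-A_{\theta_{1},\theta_{2}}^{-1}B_{\theta_{1},\theta_{2}}\,q\right)$ and the block/Schur-complement computation makes explicit the same $(1+\left\vert \zeta\right\vert)$ versus $(1+\left\vert \zeta\right\vert)^{-1}$ compensation that the paper leaves implicit in that expansion.
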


\begin{proof}
Using the notation introduced in (\ref{GH_zeta}), for $z=\zeta^{2}$, $\zeta
\in\mathbb{C}^{+}$, a direct computation leads to%
\begin{gather}
\left(  B_{\theta_{1},\theta_{2}}\,q(z,\mathcal{V})-A_{\theta_{1},\theta_{2}%
}\right)  =\qquad\qquad\qquad\qquad\qquad\qquad\qquad\qquad\qquad\qquad
\qquad\qquad\qquad\qquad\qquad\qquad\qquad\qquad\qquad\label{Krein_coeff_0}\\%
\begin{pmatrix}
\mathcal{\beta}\left(  \theta_{2}\right)  \left(  H^{\zeta}(b^{+},b)-\frac
{1}{2}\right)  & -\mathcal{\beta}\left(  \theta_{2}\right)  \partial
_{1}H^{\zeta}(b,b) & \mathcal{\beta}\left(  \theta_{2}\right)  H^{\zeta
}(a,b) & -\mathcal{\beta}\left(  \theta_{2}\right)  \partial_{1}H^{\zeta
}(b,a)\\
\mathcal{\beta}\left(  \theta_{1}\right)  G^{\zeta}(b,b) & \mathcal{\beta
}\left(  \theta_{1}\right)  \left(  \frac{1}{2}-H^{\zeta}(b^{+},b)\right)  &
\mathcal{\beta}\left(  \theta_{1}\right)  G^{\zeta}(b,a) & -\mathcal{\beta
}\left(  \theta_{1}\right)  H^{\zeta}(b,a)\\
\mathcal{\beta}\left(  -\theta_{2}\right)  H^{\zeta}(b,a) & -\mathcal{\beta
}\left(  -\theta_{2}\right)  \partial_{1}H^{\zeta}(a,b) & \mathcal{\beta
}\left(  -\theta_{2}\right)  \left(  H^{\zeta}(a^{-},a)+\frac{1}{2}\right)  &
-\mathcal{\beta}\left(  -\theta_{2}\right)  \partial_{1}H^{\zeta}(a,a)\\
\mathcal{\beta}\left(  -\theta_{1}\right)  G^{\zeta}(a,b)\medskip &
-\mathcal{\beta}\left(  -\theta_{1}\right)  H^{\zeta}(a,b) & \mathcal{\beta
}\left(  -\theta_{1}\right)  G^{\zeta}(a,a) & -\mathcal{\beta}\left(
-\theta_{1}\right)  \left(  \frac{1}{2}+H^{\zeta}(a^{-},a)\right)
\end{pmatrix}
\nonumber\\
\qquad\qquad\qquad\qquad\qquad\qquad-%
\begin{pmatrix}
\alpha\left(  \theta_{2}\right)  &  &  & \\
& \alpha\left(  \theta_{1}\right)  &  & \\
&  & \alpha\left(  -\theta_{2}\right)  & \\
&  &  & \alpha\left(  -\theta_{1}\right)
\end{pmatrix}
\nonumber
\end{gather}
where $\alpha\left(  \theta\right)  $ and $\mathcal{\beta}\left(
\theta\right)  $ are defined by%
\begin{equation}
\mathcal{\alpha}\left(  \theta\right)  =1+e^{\frac{\theta}{2}}\,,\qquad
\mathcal{\beta}\left(  \theta\right)  =1-e^{\frac{\theta}{2}}\,.
\label{alpha_beta}%
\end{equation}
As consequence of the Lemma \ref{Lemma_Green_ker}, for defined positive
potentials the above relation rephrases as%
\begin{align}
&  \left(  B_{\theta_{1},\theta_{2}}\,q(z,\mathcal{V})-A_{\theta_{1}%
,\theta_{2}}\right)  =\label{Krein_coeff}\\
&
\begin{pmatrix}
\mathcal{\beta}\left(  \theta_{2}\right)  \mathcal{O}\left(  1\right)
-\alpha\left(  \theta_{2}\right)  & \mathcal{\beta}\left(  \theta_{2}\right)
\mathcal{O}\left(  1+\left\vert \zeta\right\vert \right)  & \mathcal{\beta
}\left(  \theta_{2}\right)  e^{i\zeta\left(  b-a\right)  }\mathcal{O}\left(
1\right)  & \mathcal{\beta}\left(  \theta_{2}\right)  e^{i\zeta\left(
b-a\right)  }\mathcal{O}\left(  1+\left\vert \zeta\right\vert \right) \\
\mathcal{\beta}\left(  \theta_{1}\right)  \mathcal{O}\left(  \frac
{1}{1+\left\vert \zeta\right\vert }\right)  & \mathcal{\beta}\left(
\theta_{1}\right)  \mathcal{O}\left(  1\right)  -\alpha\left(  \theta
_{1}\right)  & \mathcal{\beta}\left(  \theta_{1}\right)  e^{i\zeta\left(
b-a\right)  }\mathcal{O}\left(  \frac{1}{1+\left\vert \zeta\right\vert
}\right)  & \mathcal{\beta}\left(  \theta_{1}\right)  e^{i\zeta\left(
b-a\right)  }\mathcal{O}\left(  1\right) \\
\mathcal{\beta}\left(  -\theta_{2}\right)  e^{i\zeta\left(  b-a\right)
}\mathcal{O}\left(  1\right)  & \mathcal{\beta}\left(  -\theta_{2}\right)
e^{i\zeta\left(  b-a\right)  }\mathcal{O}\left(  1+\left\vert \zeta\right\vert
\right)  & \mathcal{\beta}\left(  -\theta_{2}\right)  \mathcal{O}\left(
1\right)  -\alpha\left(  -\theta_{2}\right)  & \mathcal{\beta}\left(
-\theta_{2}\right)  \mathcal{O}\left(  1+\left\vert \zeta\right\vert \right)
\\
\mathcal{\beta}\left(  -\theta_{1}\right)  e^{i\zeta\left(  b-a\right)
}\mathcal{O}\left(  \frac{1}{1+\left\vert \zeta\right\vert }\right)  &
\mathcal{\beta}\left(  -\theta_{1}\right)  e^{i\zeta\left(  b-a\right)
}\mathcal{O}\left(  1\right)  & \mathcal{\beta}\left(  -\theta_{1}\right)
\mathcal{O}\left(  \frac{1}{1+\left\vert \zeta\right\vert }\right)  &
\mathcal{\beta}\left(  -\theta_{1}\right)  \mathcal{O}\left(  1\right)
-\alpha\left(  -\theta_{1}\right)
\end{pmatrix}
\,,\nonumber
\end{align}
being the symbols $\mathcal{O}\left(  \cdot\right)  $ referred to the metric
space $\overline{\mathbb{C}^{+}}$ and defining holomorphic functions of
$\zeta\in\mathbb{C}^{+}$ with continuous extension the real axis. Due to the
definition of $\mathcal{\alpha}\left(  \theta\right)  $, $\mathcal{\beta
}\left(  \theta\right)  $, the coefficients of $\left(  B_{\theta_{1}%
,\theta_{2}}\,q(z,\mathcal{V})-A_{\theta_{1},\theta_{2}}\right)  $ result
separately w.r.t. $\left(  \theta_{1},\theta_{2}\right)  \in\mathbb{C}^{2}$,
$z\in\mathbb{C}/\mathbb{R}_{+}$, and admit, for each couple $\left(
\theta_{1},\theta_{2}\right)  $, continuous extensions to the branch cut. In
particular, setting $\zeta=k\in\mathbb{R}_{\pm}$ at the r.h.s. of
(\ref{Krein_coeff}) corresponds to consider the limits of $\left(
B_{\theta_{1},\theta_{2}}\,q(z,\mathcal{V})-A_{\theta_{1},\theta_{2}}\right)
_{ij}$ for $z\rightarrow k^{2}\pm i0$ respectively. Making use of this
expression, and taking into account (\ref{alpha_beta}), a determinant's
expansion follows%
\begin{equation}
d(z,\theta_{1},\theta_{2})=4\left(  1+\cosh\frac{\theta_{1}}{2}\right)
\left(  1+\cosh\frac{\theta_{2}}{2}\right)  +\mathcal{O}\left(  \theta
_{1}\right)  +\mathcal{O}\left(  \theta_{2}\right)  \,, \label{d_exp}%
\end{equation}
where $\mathcal{O}\left(  \theta_{i}\right)  $, being referred to the metric
space $\mathcal{B}_{1}\left(  \left(  0,0\right)  \right)  \times\mathbb{C}$,
defines holomorphic functions w.r.t. $\left(  \theta_{1},\theta_{2}\right)
\in\mathcal{B}_{1}\left(  \left(  0,0\right)  \right)  $ and $z\in
\mathbb{C}/\mathbb{R}_{+}$, allowing continuous extensions to the branch cut
in the above-specified sense. According to the Definition
\ref{Landau_Notation}, $\mathcal{O}\left(  \theta_{i}\right)  $ writes as%
\begin{equation}
\mathcal{O}\left(  \theta_{i}\right)  =\theta_{i}\,p(z,\theta_{1},\theta
_{2})\,,
\end{equation}
with $p(z,\theta_{1},\theta_{2})$ uniformly bounded in $\mathcal{B}_{1}\left(
\left(  0,0\right)  \right)  \times\mathbb{C}$. Therefore $\delta>0$ exists
such that, when $\left(  \theta_{1},\theta_{2}\right)  \in\mathcal{B}_{\delta
}\left(  \left(  0,0\right)  \right)  $, it results $\left\vert d(z,\theta
_{1},\theta_{2})\right\vert >1$ for all $z\in\mathbb{C}/\mathbb{R}_{+}$. In
these conditions, the matrix $\left(  B_{\theta_{1},\theta_{2}}%
\,q(z,\mathcal{V})-A_{\theta_{1},\theta_{2}}\right)  $ is invertible and the
coefficients $\left(  B_{\theta_{1},\theta_{2}}\,q(z,\mathcal{V}%
)-A_{\theta_{1},\theta_{2}}\right)  _{ij}^{-1}$ are separately holomorphic
w.r.t. $\left(  \theta_{1},\theta_{2}\right)  \in\mathcal{B}_{\delta}\left(
\left(  0,0\right)  \right)  $, $z\in\mathbb{C}/\mathbb{R}_{+}$, having
continuous extensions to the whole branch cut, both in the limits
$z=k^{2}+i\varepsilon$, $\varepsilon\rightarrow0^{\pm}$.
\end{proof}

We are now in the position to develop the spectral analysis for the operators
$Q_{\theta_{1},\theta_{2}}(\mathcal{V})$ starting from the resolvent's formula
(\ref{krein_1}).

\begin{proposition}
\label{Proposition_spectrum}Let $Q_{\theta_{1},\theta_{2}}(\mathcal{V})$ be
defined according to (\ref{V}), (\ref{Q_teta}). The operator's spectrum
characterizes as follows\medskip:\newline$i)$ For any $\left(  \theta
_{1},\theta_{2}\right)  \in\mathbb{C}^{2}$, the essential part of the spectrum
is $\sigma_{ess}\left(  Q_{\theta_{1},\theta_{2}}(\mathcal{V})\right)
=\mathbb{R}_{+}$.\medskip\newline$ii)$ Let $E_{0}$ be an eigenvalue of
$Q_{0,0}(\mathcal{V})$ and assume $\varepsilon_{0}>0$ small enough; for any
fixed $\varepsilon\in\left(  0,\varepsilon_{0}\right)  $ it exists
$\delta_{\varepsilon}>0$ depending on $\varepsilon$ s.t.: for any $\left(
\theta_{1},\theta_{2}\right)  \in\mathcal{B}_{\delta_{\varepsilon}}\left(
\left(  0,0\right)  \right)  \cap\mathbb{C}^{2}$ it exists an unique
nondegenerate and discrete eigenvalue $E\left(  \theta_{1},\theta_{2}\right)
\in\sigma\left(  Q_{\theta_{1},\theta_{2}}(\mathcal{V})\right)  \cap
\mathcal{B}_{\varepsilon}(E_{0})$. Moreover, the function $E\left(  \theta
_{1},\theta_{2}\right)  $ is holomorphic w.r.t. $\left(  \theta_{1},\theta
_{2}\right)  $ in $\mathcal{B}_{\delta_{\varepsilon}}\left(  \left(
0,0\right)  \right)  $.

If, in addition, $\mathcal{V}$ is assumed to be defined positive, fulfilling
(\ref{V_pos}), then:\newline$iii)$ It exists $\delta>0$ s.t., for all $\left(
\theta_{1},\theta_{2}\right)  \in\mathcal{B}_{\delta}\left(  \left(
0,0\right)  \right)  $, $\sigma\left(  Q_{\theta_{1},\theta_{2}}%
(\mathcal{V})\right)  $ is purely absolutely continuous and coincide with the
positive real axis.
\end{proposition}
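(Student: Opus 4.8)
The plan is to read off all three statements from the Krein-type resolvent formula (\ref{krein_1}) together with the coefficient bounds established in Lemma \ref{Lemma_Krein_coeff} and the kernel estimates of Lemma \ref{Lemma_Green_ker}. For $(i)$, I would argue that $Q_{\theta_1,\theta_2}(\mathcal{V})-z$ and $Q_{0,0}(\mathcal{V})-z$ differ by a finite-rank operator on the common resolvent set, so their essential spectra coincide by the Weyl-type stability of $\sigma_{ess}$ under relatively compact (here genuinely finite-rank) perturbations; since $Q_{0,0}(\mathcal{V})$ is a selfadjoint short-range Schr\"odinger operator with $\sigma_{ess}=\sigma_{ac}=\mathbb{R}_+$, we conclude $\sigma_{ess}(Q_{\theta_1,\theta_2}(\mathcal{V}))=\mathbb{R}_+$ for every $(\theta_1,\theta_2)\in\mathbb{C}^2$. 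One subtlety to address is that finite-rankness of the difference of resolvents must be checked at a point $z$ in the intersection of the two resolvent sets, which is nonempty and open by the discreteness of $\mathcal{S}_{\theta_1,\theta_2}$ noted just before the statement.

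For $(ii)$, the eigenvalues of $Q_{\theta_1,\theta_2}(\mathcal{V})$ inside $\mathbb{C}\setminus\sigma(Q_{0,0}(\mathcal{V}))$ are exactly the zeros of $d(z,\theta_1,\theta_2)=\det(B_{\theta_1,\theta_2}q(z,\mathcal{V})-A_{\theta_1,\theta_2})$, which is holomorphic in $(z,\theta_1,\theta_2)$ on $\mathbb{C}\setminus\sigma(Q_{0,0}(\mathcal{V}))\times\mathbb{C}^2$. At $(\theta_1,\theta_2)=(0,0)$ one has $Q_{0,0}(\mathcal{V})$, whose eigenvalue $E_0$ is simple; I would verify that $z\mapsto d(z,0,0)$ has a simple zero at $E_0$ (this follows from the simplicity of $E_0$ together with the standard fact that the pole of $(Q_{0,0}(\mathcal{V})-z)^{-1}$ at $E_0$ is simple, encoded in the Krein formula). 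Then an application of the holomorphic implicit function theorem / Weierstrass preparation to $d$ near $(E_0,0,0)$ produces, for each small $\varepsilon$, a $\delta_\varepsilon$ and a unique holomorphic root $z=E(\theta_1,\theta_2)\in\mathcal{B}_\varepsilon(E_0)$; uniqueness of the zero in the ball (hence nondegeneracy and discreteness of the eigenvalue) comes from Rouch\'e's theorem applied to $d(\cdot,\theta_1,\theta_2)$ versus $d(\cdot,0,0)$ on $\partial\mathcal{B}_\varepsilon(E_0)$, using that $d$ depends continuously on the parameters and $E_0$ is the only zero of $d(\cdot,0,0)$ in the closed ball for $\varepsilon_0$ small.

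For $(iii)$, under the positivity assumption (\ref{V_pos}), Lemma \ref{Lemma_Krein_coeff} already gives $\delta>0$ such that $(B_{\theta_1,\theta_2}q(z,\mathcal{V})-A_{\theta_1,\theta_2})$ is invertible for all $z\in\mathbb{C}\setminus\mathbb{R}_+$ whenever $|\theta_j|<\delta$; equivalently $|d(z,\theta_1,\theta_2)|>1$ there. Hence $\mathcal{S}_{\theta_1,\theta_2}\cap(\mathbb{C}\setminus\mathbb{R}_+)=\emptyset$, so $\mathbb{C}\setminus\mathbb{R}_+\subset\rho(Q_{\theta_1,\theta_2}(\mathcal{V}))$ and $\sigma(Q_{\theta_1,\theta_2}(\mathcal{V}))\subseteq\mathbb{R}_+$; combined with $(i)$ this forces $\sigma(Q_{\theta_1,\theta_2}(\mathcal{V}))=\sigma_{ess}=\mathbb{R}_+$ with no embedded or isolated eigenvalues beyond the essential spectrum. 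To upgrade this to \emph{purely absolutely continuous} spectrum I would invoke the continuous extensions of the resolvent kernel to the branch cut, both from above and below, again supplied by Lemma \ref{Lemma_Krein_coeff} and Lemma \ref{Lemma_Green_ker}: the existence of such boundary values of $(Q_{\theta_1,\theta_2}(\mathcal{V})-k^2\mp i0)^{-1}$ in a suitable weighted sense yields a limiting absorption principle, which rules out singular continuous spectrum and embedded eigenvalues on $\mathbb{R}_+$. The main obstacle I anticipate is precisely this last point: making the limiting-absorption argument for a \emph{non-selfadjoint} operator rigorous — one cannot directly quote the spectral theorem, so the absolute continuity must be extracted from the explicit boundary behaviour of the kernel $\mathcal{G}^z_{\theta_1,\theta_2}(x,y)$ (e.g. via a Putnam-type or Kato-smoothness argument adapted to the similar selfadjoint model of Section \ref{Section_Similarity}, or by constructing the spectral representation directly from the generalized eigenfunctions). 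The other steps are essentially bookkeeping on the already-established holomorphy and bounds.
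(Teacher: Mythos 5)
Your parts $(i)$ and $(iii)$ follow essentially the paper's own route: finite-rank resolvent difference from (\ref{krein_1}) plus Weyl-type stability for $(i)$, and for $(iii)$ the invertibility of $B_{\theta_1,\theta_2}q(z,\mathcal{V})-A_{\theta_1,\theta_2}$ on $\mathbb{C}\setminus\mathbb{R}_{+}$ together with the continuous extension of the resolvent kernel to the cut (Lemmata \ref{Lemma_Green_ker} and \ref{Lemma_Krein_coeff}); your explicit acknowledgement that ``purely absolutely continuous'' needs a genuine argument for a non-selfadjoint operator is, if anything, more careful than the text, which concludes directly from the boundary behaviour of $\mathcal{G}^{z}_{\theta_1,\theta_2}$.

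Part $(ii)$, however, contains a genuine error. You propose to locate the perturbed eigenvalue as a zero of $d(z,\theta_1,\theta_2)=\det\left(B_{\theta_1,\theta_2}q(z,\mathcal{V})-A_{\theta_1,\theta_2}\right)$ bifurcating from a simple zero of $d(\cdot,0,0)$ at $E_{0}$. But at $(\theta_1,\theta_2)=(0,0)$ one has $b(0,0)=0$ in (\ref{ab_teta1,2}), hence $B_{0,0}=0$ and $d(z,0,0)=\det(-A_{0,0})=16$ identically (consistently with the expansion (\ref{d_exp})): $d(\cdot,0,0)$ has no zero at all, so there is nothing to which Rouch\'e or Weierstrass preparation can be applied. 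Worse, $E_{0}\in\sigma_{p}(Q_{0,0}(\mathcal{V}))=\sigma_{p}(\tilde{Q}(\mathcal{V}))$, so the Weyl function $q(\cdot,\mathcal{V})$ has a pole at $E_{0}$ and $d(\cdot,\theta_1,\theta_2)$ is not even holomorphic in $z$ near $E_{0}$; the eigenvalue near $E_{0}$ for small $\theta_i$ is created by the interplay of this pole with the smallness of $B_{\theta_1,\theta_2}$, not by continuation of a zero of $d(\cdot,0,0)$. To salvage a determinant argument you would have to work with $(z-E_{0})\,d(z,\theta_1,\theta_2)$ (holomorphic near $E_{0}$ because the residue of $q$ at a simple eigenvalue has rank one) and then separately relate its zeros to $\sigma_{p}(Q_{\theta_1,\theta_2}(\mathcal{V}))$, which is additional nontrivial work. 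The paper avoids all of this: having established from (\ref{krein_1}) that $Q_{\theta_1,\theta_2}(\mathcal{V})$ is an analytic family in the sense of Kato, it simply invokes the Kato--Rellich theorem for the isolated nondegenerate eigenvalue $E_{0}$ of $Q_{0,0}(\mathcal{V})$; equivalently, one can run the standard Riesz-projection argument on a small contour around $E_{0}$, on which the resolvent is analytic in $(\theta_1,\theta_2)$ precisely because $d\neq0$ there. You should replace your $(ii)$ by this perturbation-theoretic argument.
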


\begin{proof}
With the notation introduced above, let $z\in\mathbb{C}\backslash\left(
\sigma\left(  Q_{0,0}(\mathcal{V})\right)  \cup\mathcal{S}_{\theta_{1}%
,\theta_{2}}\right)  $; the representation (\ref{krein_1}) implies that the
difference $\left(  Q_{\theta_{1},\theta_{2}}(\mathcal{V})-z\right)
^{-1}-\left(  Q_{0,0}(\mathcal{V})-z\right)  ^{-1}$ is a finite rank operator.
Then, the first statement of the Proposition follows by adapting the Weyl's
theorem to the non-selfadjoint framework (for this point, we refer to
\cite{ReSi4}, Sec.~XIII.4, Lemma~3 and the strong spectral mapping theorem).

In the unperturbed case, $Q_{0,0}(\mathcal{V})$ is a 1D Schr\"{o}dinger
operator with a short range potential. Its spectrum has a purely absolutely
continuous part on the positive real axis, and possible non-degenerate
eigenvalues located on the negative real axis, without accumulation points.
Then, the second statement is a direct consequence of the Kato-Rellich
theorem, since $Q_{\theta_{1},\theta_{2}}(\mathcal{V})$ is Kato-analytic
w.r.t. the parameters.

When $\mathcal{V}$ is a defined positive potential, $\sigma\left(
Q_{0,0}(\mathcal{V})\right)  =\sigma_{ac}\left(  Q_{0,0}(\mathcal{V})\right)
=\mathbb{R}_{+}$, while the point spectrum is empty. The spectrum
$\sigma\left(  Q_{\theta_{1},\theta_{2}}(\mathcal{V})\right)  $ corresponds to
the subset of the complex plane where the map $z\rightarrow\mathcal{G}%
_{\theta_{1},\theta_{2}}^{z}(x,y)$ (defined in eq. (\ref{G_z_teta})) is not
holomorphic. According to the result of the Lemma \ref{Lemma_Green_ker}, the
functions $\mathcal{G}^{z}(x,y)$ and $\mathcal{H}^{z}(x,y)$, appearing at the
r.h.s. of (\ref{G_z_teta}) are $z$-holomorphic in $\mathbb{C}/\mathbb{R}_{+}$
and continuously extend to the whole branch cut both in the limits
$z=k^{2}+i\varepsilon$, $\varepsilon\rightarrow0^{\pm}$. As shown in Lemma
\ref{Lemma_Krein_coeff}, the same hold for the coefficients of $\left(
B_{\theta_{1},\theta_{2}}\,q(z,\mathcal{V})-A_{\theta_{1},\theta_{2}}\right)
^{-1}$, provided that $\left(  \theta_{1},\theta_{2}\right)  $ is close enough
to the origin in $\mathbb{C}^{2}$. In particular, these are $z$-holomorphic in
$\mathbb{C}/\mathbb{R}_{+}$ and have continuous extensions to the whole branch
cut. Then, for $\mathcal{V}$ defined positive, the map $z\rightarrow
\mathcal{G}_{\theta_{1},\theta_{2}}^{z}(x,y)$ is holomorphic in $\mathbb{C}%
/\mathbb{R}_{+}$ and have continuous extensions as $z\rightarrow\mathbb{R}%
_{+}$, both in the limits $z=k^{2}+i\varepsilon$, $\varepsilon\rightarrow
0^{\pm}$. This yields: $\sigma\left(  Q_{\theta_{1},\theta_{2}}(\mathcal{V}%
)\right)  =\sigma_{ac}\left(  Q_{\theta_{1},\theta_{2}}(\mathcal{V})\right)
=\mathbb{R}_{+}$.
\end{proof}

\subsection{\label{Section_Resolvent_2}Generalized eigenfunctions expansion.}

Let $\psi_{-}(\cdot,k,\theta_{1},\theta_{2})$ denote the generalized
eigenfunction of the operator $Q_{\theta_{1},\theta_{2}}\left(  \mathcal{V}%
\right)  $, describing an incoming wave function of momentum $k$; this is a
solution to the boundary value problem%
\begin{equation}
\left\{
\begin{array}
[c]{lll}%
\left(  -\partial_{x}^{2}+\mathcal{V}\right)  u=k^{2}u\,, &  & \text{for }%
x\in\mathbb{R}\backslash\left\{  a,b\right\}  \,,\ k\in\mathbb{R}\,,\\
&  & \\
e^{-\frac{\theta_{1}}{2}}u(b^{+},\zeta,\theta_{1},\theta_{2})=u(b^{-}%
,\zeta,\theta_{1},\theta_{2})\,, &  & e^{-\frac{\theta_{2}}{2}}u^{\prime
}(b^{+},\zeta,\theta_{1},\theta_{2})=u^{\prime}(b^{-},\zeta,\theta_{1}%
,\theta_{2})\,,\\
&  & \\
e^{-\frac{\theta_{1}}{2}}u(a^{-},\zeta,\theta_{1},\theta_{2})=u(a^{+}%
,\zeta,\theta_{1},\theta_{2})\,, &  & e^{-\frac{\theta_{2}}{2}}u^{\prime
}(a^{-},\zeta,\theta_{1},\theta_{2})=u(a^{+},\zeta,\theta_{1},\theta_{2})\,,
\end{array}
\right.  \label{Jost_eq_teta}%
\end{equation}
fulfilling the exterior conditions%
\begin{equation}
\psi_{-}(x,k,\theta_{1},\theta_{2})\left\vert _{\substack{x<a\\k>0}}\right.
=e^{ikx}+R(k,\theta_{1},\theta_{2})e^{-ikx}\,,\quad\psi_{-}(x,k,\theta
_{1},\theta_{2})\left\vert _{\substack{x>b\\k>0}}\right.  =T(k,\theta
_{1},\theta_{2})e^{ikx}\,, \label{gen_eigenfun_ext1}%
\end{equation}
and%
\begin{equation}
\psi_{-}(x,k,\theta_{1},\theta_{2})\left\vert _{\substack{x<a\\k<0}}\right.
=T(k,\theta_{1},\theta_{2})e^{ikx}\,,\quad\psi_{-}(x,k,\theta_{1},\theta
_{2})\left\vert _{\substack{x>b\\k<0}}\right.  =e^{ikx}+R(k,\theta_{1}%
,\theta_{2})e^{-ikx}\,, \label{gen_eigenfun_ext2}%
\end{equation}
where $R$ and $T$ are the reflection and transmission coefficients. In the
case $\left(  \theta_{1},\theta_{2}\right)  =\left(  0,0\right)  $, $\psi
_{-}(\cdot,k,0,0)$ is a generalized eigenfunction of the selfadjoint model
$Q_{0,0}\left(  \mathcal{V}\right)  $. In what follows we adopt the simplified
notation: $\psi_{-}(\cdot,k,0,0)=\psi_{-}(\cdot,k)$. These functions are
expressed in terms of the corresponding Jost's solutions as%
\begin{equation}
\psi_{-}(x,k)=\left\{
\begin{array}
[c]{lll}%
-\frac{2ik}{w\left(  k\right)  }\chi_{+}(x,k)\,, &  & \text{for }k\geq0\,,\\
&  & \\
\frac{2ik}{w\left(  -k\right)  }\chi_{-}(x,-k)\,, &  & \text{for }k<0\,,
\end{array}
\right.  \label{gen_eigenfun_0}%
\end{equation}
(e.g. in \cite{Yafa}). In the case of defined positive potentials, an approach
similar to the one leading to the Krein-like resolvent formula (\ref{krein_AB}%
) allows to obtain an expansion for the difference: $\left.  \psi_{-}%
(\cdot,k,\theta_{1},\theta_{2})-\psi_{-}(\cdot,k)\right.  $ for $\left.
\left(  \theta_{1},\theta_{2}\right)  \rightarrow\left(  0,0\right)  \right.
$. To this aim, we need an explicit expression of the finite rank terms,
appearing at the r.h.s. of (\ref{krein_1}), in the limits where $z$ approaches
the branch cut. This can be done by using the results of Lemmas
\ref{Lemma_Green_ker} and \ref{Lemma_Krein_coeff}. Adopting the notation
introduced in (\ref{GH_zeta}), let us define%
\begin{equation}
\left\{  g(e_{i},\zeta,\mathcal{V})\right\}  _{i=1}^{4}=\left\{  G^{\zeta
}(\cdot,b)\,,\ -H^{\zeta}(\cdot,b)\,,\ G^{\zeta}(\cdot,a)\,,\ -H^{\zeta}%
(\cdot,a)\right\}  \,; \label{g_zeta}%
\end{equation}
we get, for $\zeta\in\mathbb{C}^{+}$ and $z=\zeta^{2}$, the identity:
$\gamma(e_{i},z,\mathcal{V})=g(e_{i},\zeta,\mathcal{V})$; due to Lemma
\ref{Lemma_Green_ker}, the limits of $g(e_{i},\zeta,\mathcal{V})$ as
$\zeta\rightarrow k\in\mathbb{R}_{\pm}$ exist and corresponds to the limits of
$\gamma(e_{i},z,\mathcal{V})$ as $z\rightarrow k^{2}\pm i0$ respectively.
Namely, we have%
\begin{equation}
\lim_{z\rightarrow k^{2}\pm i0}\gamma(e_{i},z,\mathcal{V})=\left\{
\begin{array}
[c]{c}%
\left.  g(e_{i},k,\mathcal{V})\right\vert _{k\in\mathbb{R}_{+}}\,,\\
\\
\left.  g(e_{i},k,\mathcal{V})\right\vert _{k\in\mathbb{R}_{-}}\,.
\end{array}
\right.  \label{g_k}%
\end{equation}
The coefficients $\left(  B_{\theta_{1},\theta_{2}}\,q(z,\mathcal{V}%
)-A_{\theta_{1},\theta_{2}}\right)  _{ij}^{-1}$ have been considered in the
Lemma \ref{Lemma_Krein_coeff} where their regularity w.r.t. the $z$ and the
extensions to the branch cut have been investigated. To get further insights
on the structure of the inverse matrix, we use the explicit form of $\left(
B_{\theta_{1},\theta_{2}}\,q(z,\mathcal{V})-A_{\theta_{1},\theta_{2}}\right)
$ given in (\ref{Krein_coeff_0})-(\ref{Krein_coeff}). In what follows,
$\mathcal{M}\left(  \zeta,\theta_{1},\theta_{2}\right)  $ denotes the r.h.s.
of (\ref{Krein_coeff_0})%
\begin{equation}
\left(  B_{\theta_{1},\theta_{2}}\,q(z,\mathcal{V})-A_{\theta_{1},\theta_{2}%
}\right)  =\mathcal{M}\left(  \zeta,\theta_{1},\theta_{2}\right)
\,,\qquad\zeta\in\mathbb{C}^{+},\ z=\zeta^{2}\,. \label{M_zeta_teta}%
\end{equation}
In the assumption (\ref{V_pos}), the matrix $\mathcal{M}\left(  \zeta
,\theta_{1},\theta_{2}\right)  $ continuously extends extends to $\zeta
\in\overline{\mathbb{C}^{+}}$ and taking its limits for $\zeta\rightarrow
k\in\mathbb{R}_{\pm}$ corresponds to consider the limits of $\left(
B_{\theta_{1},\theta_{2}}\,q(z,\mathcal{V})-A_{\theta_{1},\theta_{2}}\right)
_{ij}$ as $z\rightarrow k^{2}\pm i0$ respectively. This yields%
\begin{equation}
\lim_{z\rightarrow k^{2}\pm i0}\left(  B_{\theta_{1},\theta_{2}}%
\,q(z,\mathcal{V})-A_{\theta_{1},\theta_{2}}\right)  =\left\{
\begin{array}
[c]{c}%
\left.  \mathcal{M}\left(  k,\theta_{1},\theta_{2}\right)  \right\vert
_{k\in\mathbb{R}_{+}^{\ast}}\,,\\
\\
\left.  \mathcal{M}\left(  k,\theta_{1},\theta_{2}\right)  \right\vert
_{k\in\mathbb{R}_{-}}\,.
\end{array}
\right.  \label{krein_coeff_1}%
\end{equation}
In particular, making use of (\ref{Krein_coeff}), we have%
\begin{align}
\mathcal{M}\left(  k,\theta_{1},\theta_{2}\right)   &  =\label{M_k_teta}\\
&
\begin{pmatrix}
\mathcal{\beta}\left(  \theta_{2}\right)  \mathcal{O}\left(  1\right)
-\alpha\left(  \theta_{2}\right)  & \mathcal{\beta}\left(  \theta_{2}\right)
\mathcal{O}\left(  1+\left\vert k\right\vert \right)  & \mathcal{\beta}\left(
\theta_{2}\right)  \mathcal{O}\left(  1\right)  & \mathcal{\beta}\left(
\theta_{2}\right)  \mathcal{O}\left(  1+\left\vert k\right\vert \right) \\
\mathcal{\beta}\left(  \theta_{1}\right)  \mathcal{O}\left(  \frac
{1}{1+\left\vert k\right\vert }\right)  & \mathcal{\beta}\left(  \theta
_{1}\right)  \mathcal{O}\left(  1\right)  -\alpha\left(  \theta_{1}\right)  &
\mathcal{\beta}\left(  \theta_{1}\right)  \mathcal{O}\left(  \frac
{1}{1+\left\vert k\right\vert }\right)  & \mathcal{\beta}\left(  \theta
_{1}\right)  \mathcal{O}\left(  1\right) \\
\mathcal{\beta}\left(  -\theta_{2}\right)  \mathcal{O}\left(  1\right)  &
\mathcal{\beta}\left(  -\theta_{2}\right)  \mathcal{O}\left(  1+\left\vert
k\right\vert \right)  & \mathcal{\beta}\left(  -\theta_{2}\right)
\mathcal{O}\left(  1\right)  -\alpha\left(  -\theta_{2}\right)  &
\mathcal{\beta}\left(  -\theta_{2}\right)  \mathcal{O}\left(  1+\left\vert
k\right\vert \right) \\
\mathcal{\beta}\left(  -\theta_{1}\right)  \mathcal{O}\left(  \frac
{1}{1+\left\vert k\right\vert }\right)  & \mathcal{\beta}\left(  -\theta
_{1}\right)  \mathcal{O}\left(  1\right)  & \mathcal{\beta}\left(  -\theta
_{1}\right)  \mathcal{O}\left(  \frac{1}{1+\left\vert k\right\vert }\right)  &
\mathcal{\beta}\left(  -\theta_{1}\right)  \mathcal{O}\left(  1\right)
-\alpha\left(  -\theta_{1}\right)
\end{pmatrix}
\,.\nonumber
\end{align}
From the Lemma \ref{Lemma_Krein_coeff}, this matrix is invertible whenever
$\left(  \theta_{1},\theta_{2}\right)  \in\mathcal{B}_{\delta}\left(  \left(
0,0\right)  \right)  $ with $\delta$ small enough; under such a condition,
indeed, the determinant's expansion%
\[
\left.
\begin{array}
[c]{l}%
\det\mathcal{M}\left(  k,\theta_{1},\theta_{2}\right)  =\det\left(
-A_{\theta_{1},\theta_{2}}\right)  +\mathcal{O}\left(  \theta_{1}\right)
+\mathcal{O}\left(  \theta_{2}\right)  \,,\\
\\
\det\left(  -A_{\theta_{1},\theta_{2}}\right)  =4\left(  1+\cosh\frac
{\theta_{1}}{2}\right)  \left(  1+\cosh\frac{\theta_{2}}{2}\right)  \,,
\end{array}
\right.
\]
(see the relation (\ref{d_exp})) implies: $\left\vert \det\mathcal{M}\left(
k,\theta_{1},\theta_{2}\right)  \right\vert >1$. Then, a direct computation
leads to%
\begin{equation}
\mathcal{M}^{-1}\left(  k,\theta_{1},\theta_{2}\right)  =\frac{1}%
{\det\mathcal{M}\left(  k,\theta_{1},\theta_{2}\right)  }\left[  \det\left(
A_{\theta_{1},\theta_{2}}\right)  \,diag\left\{  \lambda_{i}\right\}
\,+M(k,\theta_{1},\theta_{2})\right]  \,, \label{krein_coeff_inv}%
\end{equation}
where $diag\left(  \lambda_{i}\right)  $, the main term in
(\ref{krein_coeff_inv}), is the $\mathbb{C}^{4,4}$ diagonal matrix defined by
the coefficients%
\begin{equation}
\left\{  \lambda_{i}\right\}  _{i=1}^{4}=\left\{  \frac{-1}{\alpha\left(
\theta_{2}\right)  }\,,\ \frac{-1}{\alpha\left(  \theta_{1}\right)
}\,,\ \frac{-1}{\alpha\left(  -\theta_{2}\right)  }\,,\ \frac{-1}%
{\alpha\left(  -\theta_{1}\right)  }\right\}  \,, \label{coeff}%
\end{equation}
while the remainder is%
\begin{equation}
M(k,\theta_{1},\theta_{2})=%
\begin{pmatrix}
\mathcal{O}\left(  \theta_{1}\right)  +\mathcal{O}\left(  \theta_{2}\right)  &
\mathcal{O}\left(  \theta_{2}(1+\left\vert k\right\vert )\right)  &
\mathcal{O}\left(  \theta_{2}\right)  & \mathcal{O}\left(  \theta
_{2}(1+\left\vert k\right\vert )\right) \\
\mathcal{O}\left(  \frac{\theta_{1}}{1+\left\vert k\right\vert }\right)  &
\mathcal{O}\left(  \theta_{1}\right)  +\mathcal{O}\left(  \theta_{2}\right)  &
\mathcal{O}\left(  \frac{\theta_{1}}{1+\left\vert k\right\vert }\right)  &
\mathcal{O}\left(  \theta_{1}\right) \\
\mathcal{O}\left(  \theta_{2}\right)  & \mathcal{O}\left(  \theta
_{2}(1+\left\vert k\right\vert )\right)  & \mathcal{O}\left(  \theta
_{1}\right)  +\mathcal{O}\left(  \theta_{2}\right)  & \mathcal{O}\left(
\theta_{2}(1+\left\vert k\right\vert )\right) \\
\mathcal{O}\left(  \frac{\theta_{1}}{1+\left\vert k\right\vert }\right)  &
\mathcal{O}\left(  \theta_{1}\right)  & \mathcal{O}\left(  \frac{\theta_{1}%
}{1+\left\vert k\right\vert }\right)  & \mathcal{O}\left(  \theta_{1}\right)
+\mathcal{O}\left(  \theta_{2}\right)
\end{pmatrix}
\,, \label{krein_coeff_inv1}%
\end{equation}
Here, the symbols $\mathcal{O}\left(  \cdot\right)  $ are referred to the
metric space $\mathcal{B}_{\delta}\left(  \left(  0,0\right)  \right)
\times\mathbb{R}$ and, being obtained from the calculus of the inverse matrix
$\left(  B_{\theta_{1},\theta_{2}}\,q(z,\mathcal{V})-A_{\theta_{1},\theta_{2}%
}\right)  ^{-1}$, denotes polynomial expressions depending on the functions:
$\alpha\left(  \pm\theta_{i}\right)  $, $\mathcal{\beta}\left(  \pm\theta
_{i}\right)  $, $G^{\zeta}\left(  x,y\right)  $, $\partial_{1}^{i}H^{\zeta
}\left(  x,y\right)  $, with $x,y\in\left\{  a,b\right\}  $ and $i=0,1$. Then,
as a consequence of Lemma \ref{Lemma_Green_ker}, these terms are holomorphic
w.r.t. the parameters $\left(  \theta_{1},\theta_{2}\right)  \in
\mathcal{B}_{\delta}\left(  \left(  0,0\right)  \right)  $ and continuous
w.r.t. $k\in\mathbb{R}$.

\begin{proposition}
\label{Proposition_Krein_gen_eigenfun}Assume $\left(  \theta_{1},\theta
_{2}\right)  \in\mathcal{B}_{\delta}\left(  \left(  0,0\right)  \right)  $
with $\delta>0$ small enough, and let $\mathcal{V}$ be defined according to
(\ref{V}), (\ref{V_pos}). The solutions $\psi_{-}(\cdot,k,\theta_{1}%
,\theta_{2})$ to the generalized eigenfunctions problem (\ref{Jost_eq_teta}),
(\ref{gen_eigenfun_ext1})-(\ref{gen_eigenfun_ext2}) allow the representation%
\begin{equation}
\psi_{-}(\cdot,k,\theta_{1},\theta_{2})=\left\{
\begin{array}
[c]{lll}%
\psi_{-}(\cdot,k)-\sum_{i,j=1}^{4}\left[  \mathcal{M}^{-1}\left(  k,\theta
_{1},\theta_{2}\right)  B_{\theta_{1},\theta_{2}}\right]  _{ij}\left[
\Gamma_{1}\psi_{-}(\cdot,k)\right]  _{j}\,g(e_{i},k,\mathcal{V})\,, &  &
\text{for }k\geq0\,,\\
&  & \\
\psi_{-}(\cdot,k)-\sum_{i,j=1}^{4}\left[  \mathcal{M}^{-1}\left(
-k,\theta_{1},\theta_{2}\right)  B_{\theta_{1},\theta_{2}}\right]
_{ij}\left[  \Gamma_{1}\psi_{-}(\cdot,k)\right]  _{j}\,g(e_{i},-k,\mathcal{V}%
)\,, &  & \text{for }k<0\,.
\end{array}
\right.  \label{gen_eigenfun_Krein}%
\end{equation}
The functions $\psi_{-}(x,k,\theta_{1},\theta_{2})$ are $\mathcal{C}^{1}%
$-continuous w.r.t. $x\in\mathbb{R}/\left\{  a,b\right\}  $, $k$-continuous in
$\mathbb{R}$ and holomorphic w.r.t. the parameters $\left(  \theta_{1}%
,\theta_{2}\right)  $ in $\mathcal{B}_{\delta}\left(  \left(  0,0\right)
\right)  $.
\end{proposition}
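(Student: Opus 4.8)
The plan is to verify directly that the function prescribed by the right-hand side of (\ref{gen_eigenfun_Krein}) solves the boundary value problem (\ref{Jost_eq_teta}), (\ref{gen_eigenfun_ext1})--(\ref{gen_eigenfun_ext2}), and then to note that such a solution is unique. The correction to $\psi_-(\cdot,k)$ is, in either of the two cases $k\ge 0$ and $k<0$, a linear combination of the four functions $g(e_i,\pm k,\mathcal V)$ of (\ref{g_zeta})--(\ref{g_k}). By Lemma \ref{Lemma_Green_ker} each $g(e_i,\pm k,\mathcal V)$ is $\mathcal C^1$ in $x$ on $\mathbb R\setminus\{a,b\}$, continuous in $k\in\mathbb R$, and --- reading the defining relations (\ref{Green_eq})--(\ref{D_Green_eq}) for $\mathcal G^z,\mathcal H^z$ on the branch cut, equivalently using that $\mathcal G^k,\mathcal H^k$ are built from the Jost solutions $\chi_\pm$ which by Proposition \ref{Proposition_Jost} solve $(-\partial_x^2+\mathcal V)u=k^2u$ on the real axis --- solves the homogeneous equation on $\mathbb R\setminus\{a,b\}$ with its only jump at $b$ (for $i=1,2$) or at $a$ (for $i=3,4$). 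Since $\psi_-(\cdot,k)$ given by (\ref{gen_eigenfun_0}) is globally $\mathcal C^1$ and solves the same equation, the candidate satisfies the equation in (\ref{Jost_eq_teta}).

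To check the interface conditions --- which by (\ref{bc_teta1_teta_2}) are $A_{\theta_1,\theta_2}\Gamma_0u=B_{\theta_1,\theta_2}\Gamma_1u$ --- I would use three facts: $\Gamma_0\psi_-(\cdot,k)=0$ (the $\chi_\pm$ being continuous with continuous first derivative); $\Gamma_0\,g(e_i,\pm k,\mathcal V)=e_i$; and $\Gamma_1\,g(e_i,\pm k,\mathcal V)=q(k^2+i0,\mathcal V)e_i$. The last two are the branch-cut limits of the identities $\Gamma_0\gamma(e_i,z,\mathcal V)=e_i$, $\Gamma_1\gamma(e_i,z,\mathcal V)=q(z,\mathcal V)e_i$ defining the Gamma field and the Weyl function, taken as $z\to k^2+i0$; the passage to the real axis is legitimate because $\Gamma_0,\Gamma_1$ only involve values and first derivatives at $a,b$, which extend continuously up to the cut by Lemmas \ref{Lemma_Green_ker} and \ref{Lemma_Krein_coeff}, even though the limiting functions no longer belong to $D(Q(\mathcal V))$. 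Writing the correction as $\sum_ic_i\,g(e_i,\pm k,\mathcal V)$ and using the relation $B_{\theta_1,\theta_2}q(z,\mathcal V)-A_{\theta_1,\theta_2}=\mathcal M(\zeta,\theta_1,\theta_2)$ of (\ref{M_zeta_teta}), the required identity $A_{\theta_1,\theta_2}\Gamma_0u=B_{\theta_1,\theta_2}\Gamma_1u$ collapses to the $4\times4$ linear system $\mathcal M(\pm k,\theta_1,\theta_2)\,\mathbf c=-B_{\theta_1,\theta_2}\,\Gamma_1\psi_-(\cdot,k)$; its unique solution --- invertibility of $\mathcal M$ on $\mathcal B_\delta((0,0))$ being exactly Lemma \ref{Lemma_Krein_coeff}, with $\mathcal M^{-1}$ of the form (\ref{krein_coeff_inv}) --- is precisely the coefficient vector appearing in (\ref{gen_eigenfun_Krein}).

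The exterior conditions follow from the explicit formulas (\ref{G_z})--(\ref{H_z}) and the normalization (\ref{Jost_sol}): on $\{x<a\}$ every $g(e_i,\pm k,\mathcal V)$ is a multiple of $\chi_-(\cdot,\pm k)$, hence of $e^{-i(\pm k)x}$, and on $\{x>b\}$ a multiple of $\chi_+(\cdot,\pm k)=e^{i(\pm k)x}$; the correction is therefore purely outgoing at both ends, so adding it to $\psi_-(\cdot,k)$ leaves the unit incoming plane wave untouched and only alters the reflection and transmission amplitudes, which is the structure (\ref{gen_eigenfun_ext1})--(\ref{gen_eigenfun_ext2}) with $R(k,\theta_1,\theta_2),T(k,\theta_1,\theta_2)$ the new coefficients. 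Uniqueness of the solution is then clear: two solutions differ by a function $w$ satisfying the homogeneous equation, $\mathcal C^1$ off $\{a,b\}$, outgoing at $\pm\infty$, and $A_{\theta_1,\theta_2}\Gamma_0w=B_{\theta_1,\theta_2}\Gamma_1w$; since $|w(k)|^2\ge4k^2>0$ for $k\ne0$ by (\ref{w_ineq}) there is no continuous outgoing solution at energy $k^2$, so $\{g(e_i,\pm k,\mathcal V)\}_{i=1}^4$ spans the four-dimensional space of such $w$, and the boundary constraint becomes $\mathcal M(\pm k,\theta_1,\theta_2)\mathbf c=0$, forcing $w=0$ (the case $k=0$ is trivial, since $w(0)\ne0$ makes $\psi_-(\cdot,k)$ and the whole correction vanish as $k\to0$). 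Finally, the stated regularity is read off the formula: $\mathcal C^1$-continuity in $x$ on $\mathbb R\setminus\{a,b\}$ and continuity in $k\in\mathbb R$ follow from Lemma \ref{Lemma_Green_ker}, Proposition \ref{Proposition_Jost} and the $k$-continuity of $\mathcal M^{-1}$, while holomorphy in $(\theta_1,\theta_2)\in\mathcal B_\delta((0,0))$ follows from the entire dependence of $B_{\theta_1,\theta_2}$ on the parameters and the holomorphy of the entries of $\mathcal M^{-1}$ provided by Lemma \ref{Lemma_Krein_coeff}.

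The step I expect to be the real obstacle is the branch-cut bookkeeping just used: justifying that $\Gamma_0,\Gamma_1$ extend by continuity to the boundary-value limits $g(e_i,\pm k,\mathcal V)$ and that the finite-rank identity $B_{\theta_1,\theta_2}q-A_{\theta_1,\theta_2}=\mathcal M$ --- established for $z\notin\mathbb R$, where the defect functions lie in $D(Q(\mathcal V))$ --- persists in the limit to the real axis, and keeping the $\pm k$ and side-of-the-cut conventions consistent. In particular the appearance of $g(e_i,-k,\mathcal V)$ and $\mathcal M^{-1}(-k,\theta_1,\theta_2)$ for $k<0$ is forced by the representation of $\psi_-(\cdot,k)$ through $\chi_-(\cdot,-k)$ in (\ref{gen_eigenfun_0}) together with the requirement that the scattered wave be outgoing.
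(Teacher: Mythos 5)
Your proposal is correct and follows essentially the same route as the paper: verify that the right-hand side solves the equation and the exterior conditions, and obtain the interface conditions by passing the boundary-triple identities ($\Gamma_0 g(e_i,\cdot)=e_i$, $\Gamma_1\gamma=q\Gamma_0\gamma$, $B_{\theta_1,\theta_2}q-A_{\theta_1,\theta_2}=\mathcal M$) to the branch cut — the paper does exactly this by approximating the correction term with elements $\psi_z\in\mathcal N_z$ and letting $z\to k^2+i0$, which is the continuity argument you flag as the delicate step. Your added uniqueness argument and the explicit check of the outgoing structure on $\{x<a\}$, $\{x>b\}$ go slightly beyond what the paper records but are consistent with it.
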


\begin{proof}
We start considering the case $k\geq0$. According to the definition of
$\psi_{-}(\cdot,k)$ and $\,g(e_{i},k,\mathcal{V})$, the function at the r.h.s.
of (\ref{gen_eigenfun_Krein}) solves the equation%
\begin{equation}
\left(  -\partial_{x}^{2}+\mathcal{V}\right)  u=k^{2}u\,,\qquad\text{for }%
x\in\mathbb{R}\backslash\left\{  a,b\right\}  \,,\ k\in\mathbb{R}\,,
\end{equation}
and fulfills the conditions (\ref{gen_eigenfun_ext1}) and
(\ref{gen_eigenfun_ext2}). Set: $\psi_{-}(\cdot,k,\theta_{1},\theta_{2}%
)=\phi-\psi$ with
\begin{align}
\phi &  =\psi_{-}(\cdot,k)\,,\label{phi}\\
& \nonumber\\
\psi &  =\sum_{i,j=1}^{4}\left[  \mathcal{M}^{-1}\left(  k,\theta_{1}%
,\theta_{2}\right)  B_{\theta_{1},\theta_{2}}\right]  _{ij}\left[  \Gamma
_{1}\phi\right]  _{j}\,g(e_{i},k,\mathcal{V})\,.\label{psi}%
\end{align}
The function $\psi$ can be pointwise approximated by elements of the defect
spaces $\mathcal{N}_{z}$ as $z\rightarrow k^{2}+i0$. With the notation
introduced in (\ref{M_zeta_teta}) and (\ref{g_zeta}), let $\psi_{z}$ be
defined by%
\begin{equation}
\psi_{z}=\sum_{i,j=1}^{4}\left[  \mathcal{M}^{-1}\left(  \zeta,\theta
_{1},\theta_{2}\right)  B_{\theta_{1},\theta_{2}}\right]  _{ij}\left[
\Gamma_{1}\phi\right]  _{j}\,g(e_{i},\zeta,\mathcal{V})\,,\quad\zeta
\in\mathbb{C}^{+}\,,\ z=\zeta^{2}\,;\label{psi_zeta}%
\end{equation}
it results $\psi_{z}\in\mathcal{N}_{z}$ and $\lim_{z\rightarrow k^{2}+i0}%
\psi_{z}=\psi$. Since $\psi_{-}(\cdot,k)$ is $\mathcal{C}_{x}^{1}$-continuous
in $\mathbb{R}$, we have: $\Gamma_{0}\phi=0$ and the following relation holds%
\begin{align}
\mathcal{M}\left(  k,\theta_{1},\theta_{2}\right)  \Gamma_{0}\left(  \phi
-\psi\right)   &  =-\mathcal{M}\left(  k,\theta_{1},\theta_{2}\right)
\Gamma_{0}\psi=-\lim_{z\rightarrow k^{2}+i0}\left(  B_{\theta_{1},\theta_{2}%
}\,q(z,\mathcal{V})-A_{\theta_{1},\theta_{2}}\right)  \Gamma_{0}\psi
_{z}\nonumber\\
&  =-\lim_{z\rightarrow k^{2}+i0}\left(  B_{\theta_{1},\theta_{2}}\,\Gamma
_{1}\gamma(\cdot,z,\mathcal{V})-A_{\theta_{1},\theta_{2}}\right)  \Gamma
_{0}\psi_{z}\nonumber\\
&  =-\lim_{z\rightarrow k^{2}+i0}\left(  B_{\theta_{1},\theta_{2}}\,\Gamma
_{1}-A_{\theta_{1},\theta_{2}}\Gamma_{0}\right)  \psi_{z}=\left(
-B_{\theta_{1},\theta_{2}}\,\Gamma_{1}+A_{\theta_{1},\theta_{2}}\Gamma
_{0}\right)  \psi\,.\label{one_k}%
\end{align}
The $n$-th component of the vector at the l.h.s. of (\ref{one_k}) writes as%
\begin{multline*}
\left[  \smallskip\mathcal{M}\left(  k,\theta_{1},\theta_{2}\right)
\Gamma_{0}\left(  \phi-\psi\right)  \right]  _{n}=\left[  \smallskip
-\mathcal{M}\left(  k,\theta_{1},\theta_{2}\right)  \Gamma_{0}\psi\right]
_{n}=\\
-\sum_{i,j=1}^{4}\left[  \left[  \smallskip\mathcal{M}\left(  k,\theta
_{1},\theta_{2}\right)  \Gamma_{0}\left(  \phi-\psi\right)  \right]
_{n}\Gamma_{0}g(e_{i},k,\mathcal{V})\right]  _{n}\left[  \mathcal{M}%
^{-1}\left(  k,\theta_{1},\theta_{2}\right)  B_{\theta_{1},\theta_{2}}\right]
_{ij}\left[  \Gamma_{1}\phi\right]  _{j}\,.
\end{multline*}
Recalling that $\Gamma_{0}g(e_{i},k,\mathcal{V})=e_{i}$, we get
\begin{multline*}
\left[  \smallskip\mathcal{M}\left(  k,\theta_{1},\theta_{2}\right)
\Gamma_{0}\left(  \phi-\psi\right)  \right]  _{n}=\\
-\sum_{i,j=1}^{4}\left(  \mathcal{M}\left(  k,\theta_{1},\theta_{2}\right)
\right)  _{ni}\left[  \mathcal{M}^{-1}\left(  k,\theta_{1},\theta_{2}\right)
B_{\theta_{1},\theta_{2}}\right]  _{ij}\left[  \Gamma_{1}\phi\right]
_{j}=-\sum_{i,j=1}^{4}B_{nj}\left[  \Gamma_{1}\phi\right]  _{j}\,,
\end{multline*}
which implies%
\begin{equation}
\mathcal{M}\left(  k,\theta_{1},\theta_{2}\right)  \Gamma_{0}\left(  \phi
-\psi\right)  =-B\Gamma_{1}\phi\,.\label{two_k}%
\end{equation}
From (\ref{one_k}) and (\ref{two_k}), the interface conditions%
\begin{equation}
A_{\theta_{1},\theta_{2}}\Gamma_{0}\psi_{-}(\cdot,k,\theta_{1},\theta
_{2})=B_{\theta_{1},\theta_{2}}\Gamma_{1}\psi_{-}(\cdot,k,\theta_{1}%
,\theta_{2})\,,\label{gen_eigenfun_bc}%
\end{equation}
follow. Since these are equivalent to the ones assigned in the equation
(\ref{Jost_eq_teta}), the function defined in (\ref{gen_eigenfun_Krein}) is a
solution to the problem (\ref{Jost_eq_teta}), (\ref{gen_eigenfun_ext1}%
)-(\ref{gen_eigenfun_ext2}). The case $k<0$ can be treated by a suitable
adaptation of the above arguments.

The regularity of the generalized eigenfunctions w.r.t. to the variables
$\left\{  x,k,\theta_{1},\theta_{2}\right\}  $ is a consequence of the
representation (\ref{gen_eigenfun_Krein}) and of the properties of the maps
$\psi_{-}(\cdot,k)$, $\mathcal{M}^{-1}\left(  k,\theta_{1},\theta_{2}\right)
$, $B_{\theta_{1},\theta_{2}}$, and $g(e_{i},k,\mathcal{V})$ (for this point,
we refer to the corresponding definitions and to the results of the
Proposition \ref{Proposition_Jost} and of the Lemmata \ref{Lemma_Green_ker}%
-\ref{Lemma_Krein_coeff}).
\end{proof}

As a consequence of the above result, an expansion of $\psi_{-}(\cdot
,k,\theta_{1},\theta_{2})$ for small values of $\theta_{i}$ follows.

\begin{corollary}
\label{Corollary_eigenfun_exp}Let $\psi_{-}(\cdot,k,\theta_{1},\theta_{2})$
denotes a solution to the generalized eigenfunctions problem
(\ref{Jost_eq_teta}), (\ref{gen_eigenfun_ext1})-(\ref{gen_eigenfun_ext2}). In
the assumptions of the Proposition \ref{Proposition_Krein_gen_eigenfun}, the
expansion%
\begin{equation}
\psi_{-}(\cdot,k,\theta_{1},\theta_{2})-\psi_{-}(\cdot,k)=\mathcal{O}\left(
\theta_{2}k\right)  G^{\sigma k}(\cdot,b)+\mathcal{O}\left(  \frac{\theta
_{1}k}{1+\left\vert k\right\vert }\right)  H^{\sigma k}(\cdot,b)+\mathcal{O}%
\left(  \theta_{2}k\right)  G^{\sigma k}(\cdot,a)+\mathcal{O}\left(
\frac{\theta_{1}k}{1+\left\vert k\right\vert }\right)  H^{\sigma k}%
(\cdot,a)\,. \label{gen_eigenfun_exp}%
\end{equation}
holds with: $\sigma=\frac{k}{\left\vert k\right\vert }$. The symbols
$\mathcal{O}\left(  \cdot\right)  $ are defined in the sense of the metric
space $\mathbb{R\times}\mathcal{B}_{\delta}\left(  \left(  0,0\right)
\right)  $.
\end{corollary}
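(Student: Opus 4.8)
The plan is to read off (\ref{gen_eigenfun_exp}) directly from the representation (\ref{gen_eigenfun_Krein}) of Proposition \ref{Proposition_Krein_gen_eigenfun} by estimating, one by one, the coefficients multiplying the four basis functions $g(e_{i},\sigma k,\mathcal{V})$, $i=1,\dots ,4$, which by (\ref{g_zeta}) are $G^{\sigma k}(\cdot ,b)$, $-H^{\sigma k}(\cdot ,b)$, $G^{\sigma k}(\cdot ,a)$, $-H^{\sigma k}(\cdot ,a)$. I first treat $k\geq 0$ (so $\sigma k=k$), where (\ref{gen_eigenfun_Krein}) reads $\psi_{-}(\cdot ,k,\theta_{1},\theta_{2})-\psi_{-}(\cdot ,k)=\sum_{i=1}^{4}c_{i}\,g(e_{i},k,\mathcal{V})$ with $c_{i}=-\sum_{j=1}^{4}\bigl[\mathcal{M}^{-1}(k,\theta_{1},\theta_{2})B_{\theta_{1},\theta_{2}}\bigr]_{ij}\bigl[\Gamma_{1}\psi_{-}(\cdot ,k)\bigr]_{j}$; since the signs of $g(e_{2}),g(e_{4})$ can be absorbed, it then suffices to prove $c_{1},c_{3}=\mathcal{O}(\theta_{2}k)$ and $c_{2},c_{4}=\mathcal{O}\!\bigl(\theta_{1}k/(1+|k|)\bigr)$ on $\mathbb{R}\times\mathcal{B}_{\delta}((0,0))$.

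Three inputs feed the estimate. (a) By the block structure (\ref{AB_teta1,2})--(\ref{ab_teta1,2}), the only non-zero entries of $B_{\theta_{1},\theta_{2}}$ are $B_{12}=2\beta(\theta_{2})$, $B_{21}=-2\beta(\theta_{1})$, $B_{34}=2\beta(-\theta_{2})$, $B_{43}=-2\beta(-\theta_{1})$, each of which is $\mathcal{O}(\theta_{j})$ (with $j=2$ in the first and third, $j=1$ in the second and fourth) by (\ref{alpha_beta}); consequently $\bigl[\mathcal{M}^{-1}B_{\theta_{1},\theta_{2}}\bigr]_{i1}=-2\beta(\theta_{1})[\mathcal{M}^{-1}]_{i2}$, $\bigl[\mathcal{M}^{-1}B_{\theta_{1},\theta_{2}}\bigr]_{i2}=2\beta(\theta_{2})[\mathcal{M}^{-1}]_{i1}$, $\bigl[\mathcal{M}^{-1}B_{\theta_{1},\theta_{2}}\bigr]_{i3}=-2\beta(-\theta_{1})[\mathcal{M}^{-1}]_{i4}$, $\bigl[\mathcal{M}^{-1}B_{\theta_{1},\theta_{2}}\bigr]_{i4}=2\beta(-\theta_{2})[\mathcal{M}^{-1}]_{i3}$. (b) From (\ref{krein_coeff_inv})--(\ref{krein_coeff_inv1}), valid on $\mathcal{B}_{\delta}((0,0))$ with $\det\mathcal{M}$ bounded below by Lemma \ref{Lemma_Krein_coeff} and (\ref{d_exp}), the diagonal entries satisfy $[\mathcal{M}^{-1}]_{ii}=\mathcal{O}(1)$, while each off-diagonal entry inherits the order of the corresponding entry of $M(k,\theta_{1},\theta_{2})$ displayed in (\ref{krein_coeff_inv1}). (c) Since $\psi_{-}(\cdot ,k)$ is $\mathcal{C}^{1}$ across $a,b$ and equals $-\tfrac{2ik}{w(k)}\chi_{+}(\cdot ,k)$ for $k\geq 0$ by (\ref{gen_eigenfun_0}), the bounds (\ref{Jost_sol_bound}) together with $1/w(k)=\mathcal{O}\!\bigl(1/(1+|k|)\bigr)$ from (\ref{w_inverse_bound}) give $\bigl[\Gamma_{1}\psi_{-}(\cdot ,k)\bigr]_{1},\bigl[\Gamma_{1}\psi_{-}(\cdot ,k)\bigr]_{3}=\mathcal{O}\!\bigl(k/(1+|k|)\bigr)$ for the two function-value components (at $b$ and $a$) and $\bigl[\Gamma_{1}\psi_{-}(\cdot ,k)\bigr]_{2},\bigl[\Gamma_{1}\psi_{-}(\cdot ,k)\bigr]_{4}=\mathcal{O}(k)$ for the two derivative components.

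Putting (a)--(c) into $c_{i}=2\beta(\theta_{1})[\mathcal{M}^{-1}]_{i2}[\Gamma_{1}\psi_{-}]_{1}-2\beta(\theta_{2})[\mathcal{M}^{-1}]_{i1}[\Gamma_{1}\psi_{-}]_{2}+2\beta(-\theta_{1})[\mathcal{M}^{-1}]_{i4}[\Gamma_{1}\psi_{-}]_{3}-2\beta(-\theta_{2})[\mathcal{M}^{-1}]_{i3}[\Gamma_{1}\psi_{-}]_{4}$, each of the four summands is a product $\beta(\pm\theta_{j})\cdot[\mathcal{M}^{-1}]\cdot[\Gamma_{1}\psi_{-}]$ that I estimate using only $|k|/(1+|k|)\leq 1$ and $|\theta_{j}|<\delta$. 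For $i=1,3$ the dominant summand is the one carrying the diagonal entry $[\mathcal{M}^{-1}]_{ii}=\mathcal{O}(1)$, paired there with a $\beta(\pm\theta_{2})=\mathcal{O}(\theta_{2})$ and a derivative component $\mathcal{O}(k)$, hence $\mathcal{O}(\theta_{2}k)$, while the remaining three summands are $\mathcal{O}(\theta_{1}\theta_{2}k)$ or $\mathcal{O}(\theta_{2}^{2}k)$; thus $c_{1},c_{3}=\mathcal{O}(\theta_{2}k)$. For $i=2,4$ the dominant summand carries $[\mathcal{M}^{-1}]_{ii}=\mathcal{O}(1)$ paired with a $\beta(\pm\theta_{1})=\mathcal{O}(\theta_{1})$ and a function-value component $\mathcal{O}\!\bigl(k/(1+|k|)\bigr)$, hence $\mathcal{O}\!\bigl(\theta_{1}k/(1+|k|)\bigr)$, the others being $\mathcal{O}\!\bigl(\theta_{1}\theta_{j}k/(1+|k|)\bigr)$; thus $c_{2},c_{4}=\mathcal{O}\!\bigl(\theta_{1}k/(1+|k|)\bigr)$. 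This gives (\ref{gen_eigenfun_exp}) for $k\geq 0$. For $k<0$ one argues identically with $\psi_{-}(\cdot ,k)=\tfrac{2ik}{w(-k)}\chi_{-}(\cdot ,-k)$, with $-k=\sigma k>0$ as spectral argument and with $\mathcal{M}^{-1}(-k,\cdot )$, $g(e_{i},-k,\mathcal{V})$ in place of $\mathcal{M}^{-1}(k,\cdot )$, $g(e_{i},k,\mathcal{V})$: the symmetric bounds on $\chi_{-}$ from Proposition \ref{Proposition_Jost} and $1/w(-k)=\mathcal{O}\!\bigl(1/(1+|k|)\bigr)$ deliver the same estimates for $\Gamma_{1}\psi_{-}(\cdot ,k)$.

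The argument is essentially bookkeeping; the one delicate point is the accounting of the powers of $k$. What makes it work is that the boundary data of $\psi_{-}(\cdot ,k)$ carry the genuine factor $2k/w(k)=\mathcal{O}\!\bigl(k/(1+|k|)\bigr)$, which exactly balances the $(1+|k|)$ growth appearing both in the off-diagonal entries of $\mathcal{M}^{-1}$ and in the derivative components of $\Gamma_{1}\psi_{-}$, and that in (\ref{gen_eigenfun_exp}) the factors $G^{\sigma k}(x,y)=e^{i\sigma k|x-y|}\mathcal{O}\!\bigl(1/(1+|k|)\bigr)$ and $H^{\sigma k}(x,y)=e^{i\sigma k|x-y|}\mathcal{O}(1)$ from (\ref{Green_ker_bound}) supply the decay in $|k|$ that renders the coefficients $\mathcal{O}(\theta_{2}k)$ and $\mathcal{O}\!\bigl(\theta_{1}k/(1+|k|)\bigr)$ admissible in the sense of Definition \ref{Landau_Notation} on $\mathbb{R}\times\mathcal{B}_{\delta}((0,0))$.
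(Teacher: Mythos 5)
Your proposal is correct and follows essentially the same route as the paper: it reads the expansion off the representation (\ref{gen_eigenfun_Krein}), bounds the boundary data via $\psi_{-}(\cdot,k)=\mathcal{O}\bigl(k/(1+|k|)\bigr)$, $\partial_{x}\psi_{-}(\cdot,k)=\mathcal{O}(k)$ (from (\ref{gen_eigenfun_0}), (\ref{Jost_sol_bound}) and (\ref{w_inverse_bound})), and combines the off-diagonal structure of $B_{\theta_{1},\theta_{2}}$ with the inverse-matrix expansion (\ref{krein_coeff_inv})--(\ref{krein_coeff_inv1}), exactly as in the paper's proof, which merely states the intermediate orders of $B_{\theta_{1},\theta_{2}}\Gamma_{1}\psi_{-}$ and $\mathcal{M}^{-1}B_{\theta_{1},\theta_{2}}\Gamma_{1}\psi_{-}$ rather than writing out the entrywise bookkeeping you perform.
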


\begin{proof}
As already noticed, the assumption of positive potentials (\ref{V_pos})
prevents the Jost's function $w(k)$ to have zeroes on the real axis. In
particular, a consequence of the definition (\ref{gen_eigenfun_0}) and of the
relations (\ref{Jost_sol_bound}) is%
\begin{equation}
\psi_{-}(x,k)=\mathcal{O}\left(  \frac{k}{1+\left\vert k\right\vert }\right)
\,;\qquad\partial_{x}\psi_{-}(x,k)=\mathcal{O}\left(  k\right)  \,,
\label{gen_eigenfun_0_est}%
\end{equation}
and a direct computation yields%
\begin{equation}
B_{\theta_{1},\theta_{2}}\Gamma_{1}\psi_{-}(\cdot,k)=\left\{  \,\mathcal{O}%
\left(  \theta_{2}k\right)  ,\ \mathcal{O}\left(  \frac{\theta_{1}%
k}{1+\left\vert k\right\vert }\right)  \,,\ \mathcal{O}\left(  \theta
_{2}k\right)  \,,\ \mathcal{O}\left(  \frac{\theta_{1}k}{1+\left\vert
k\right\vert }\right)  \right\}  \,.
\end{equation}
where the symbols $\mathcal{O}\left(  \cdot\right)  $ are referred to the
metric space $\mathbb{R\times}\mathcal{B}_{\delta}\left(  \left(  0,0\right)
\right)  $. Making use of this expression and of the relations
(\ref{krein_coeff_inv})-(\ref{krein_coeff_inv1}), we get%
\begin{equation}
\mathcal{M}^{-1}\left(  \sigma k,\theta_{1},\theta_{2}\right)  B_{\theta
_{1},\theta_{2}}\left[  \Gamma_{1}\psi_{-}(\cdot,k)\right]  =\left\{
\,\mathcal{O}\left(  \theta_{2}k\right)  ,\ \mathcal{O}\left(  \frac
{\theta_{1}k}{1+\left\vert k\right\vert }\right)  \,,\ \mathcal{O}\left(
\theta_{2}k\right)  \,,\ \mathcal{O}\left(  \frac{\theta_{1}k}{1+\left\vert
k\right\vert }\right)  \right\}  \,. \label{coeff1}%
\end{equation}
Then, the expansion (\ref{gen_eigenfun_exp}) follows from the formula
(\ref{gen_eigenfun_Krein}) by taking into account (\ref{coeff1}) and the
definition (\ref{g_zeta}).
\end{proof}

\section{\label{Section_Similarity}Similarity and uniform-in-time estimates
for the dynamical system.}

In what follows, $\mathcal{V}$ is a positive short-range potential. With this
assumption, $Q_{0,0}(\mathcal{V})$ has a purely absolutely continuous spectrum
and the related generalized Fourier transform $\mathcal{F}_{\mathcal{V}}$%
\begin{equation}
\left(  \mathcal{F}_{\mathcal{V}}\varphi\right)  (k)=\int_{\mathbb{R}}%
\frac{dx}{\left(  2\pi\right)  ^{1/2}}\,\psi_{-}^{\ast}(x,k)\varphi
(x)\,,\qquad\varphi\in L^{2}(\mathbb{R})\,,\label{gen_Fourier}%
\end{equation}
is a unitary map with range $R\left(  \mathcal{F}_{\mathcal{V}}\right)
\mathcal{=}$ $L^{2}(\mathbb{R})$ and an inverse map $\mathcal{F}_{\mathcal{V}%
}^{-1}$ acting as%
\begin{equation}
\left(  \mathcal{F}_{\mathcal{V}}^{-1}f\right)  (x)=\int\frac{dk}{\left(
2\pi\right)  ^{1/2}}\,\psi_{-}(x,k)f(k)\,,
\end{equation}
for all $f\in L^{2}(\mathbb{R})$. Assume in addition the parameters
$\theta_{1},\theta_{2}$ to be close enough to the origin, so that the
expansion (\ref{gen_eigenfun_exp}) hold, and consider the operator
$\mathcal{W}_{\theta_{1},\theta_{2}}$ defined by the integral kernel%
\begin{equation}
\mathcal{W}_{\theta_{1},\theta_{2}}(x,y)=\int_{\mathbb{R}}\frac{dk}{2\pi
}\,\psi_{-}(x,k,\theta_{1},\theta_{2})\psi_{-}^{\ast}%
(y,k)\,.\label{W_teta_ker}%
\end{equation}
The next Proposition shows that $\mathcal{W}_{\theta_{1},\theta_{2}}$ form an
analytic family of bounded operators w.r.t. $\left(  \theta_{1},\theta
_{2}\right)  $, while, for fixed values of the parameters, $\mathcal{W}%
_{\theta_{1},\theta_{2}}$ induces a similarity between $Q_{\theta_{1}%
,\theta_{2}}(\mathcal{V})$ and $Q_{0,0}(\mathcal{V})$.

\begin{proposition}
\label{Proposition_W_cont}Let $\mathcal{V}$ satisfy the conditions (\ref{V}),
(\ref{V_pos}) and assume $\left(  \theta_{1},\theta_{2}\right)  \in
\mathcal{B}_{\delta}(\left(  0,0\right)  )$ with $\delta>0$ small enough.
Then, the set $\left\{  \mathcal{W}_{\theta_{1},\theta_{2}}\,,\ \left(
\theta_{1},\theta_{2}\right)  \in\mathcal{B}_{\delta}(\left(  0,0\right)
)\right\}  $ forms an analytic family of bounded operators in $L^{2}%
(\mathbb{R})$, w.r.t. $\left(  \theta_{1},\theta_{2}\right)  $, and the
expansion%
\begin{equation}
\mathcal{W}_{\theta_{1},\theta_{2}}=1+\mathcal{O}\left(  \theta_{1}\right)
+\mathcal{O}\left(  \theta_{2}\right)  \,, \label{W_teta_exp}%
\end{equation}
holds in the $\mathcal{L}\left(  L^{2}(\mathbb{R}),L^{2}(\mathbb{R})\right)  $
operator norm. The couple $Q_{\theta_{1},\theta_{2}}(\mathcal{V})$,
$Q_{0,0}(\mathcal{V})$ is intertwined through $\mathcal{W}_{\theta_{1}%
,\theta_{2}}$ by%
\begin{equation}
Q_{\theta_{1},\theta_{2}}(\mathcal{V})\mathcal{W}_{\theta_{1},\theta_{2}%
}=\mathcal{W}_{\theta_{1},\theta_{2}}Q_{0,0}(\mathcal{V})\,.
\label{W_teta_inter}%
\end{equation}

\end{proposition}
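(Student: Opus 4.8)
The plan is to realise $\mathcal{W}_{\theta_1,\theta_2}$ as a product of generalized Fourier transforms and to reduce everything to one operator-norm estimate. Let $\mathcal{F}_{\mathcal{V}}^{(\theta)\,\ast}$ be the operator with integral kernel $(2\pi)^{-1/2}\psi_-(x,k,\theta_1,\theta_2)$, viewed as a map $L^2(\mathbb{R}_k)\to L^2(\mathbb{R}_x)$; then (\ref{W_teta_ker}) reads $\mathcal{W}_{\theta_1,\theta_2}=\mathcal{F}_{\mathcal{V}}^{(\theta)\,\ast}\mathcal{F}_{\mathcal{V}}$, and since $\mathcal{F}_{\mathcal{V}}^{-1}$ has kernel $(2\pi)^{-1/2}\psi_-(x,k)$ and $\mathcal{F}_{\mathcal{V}}$ is unitary (so that $\mathcal{W}_{0,0}=1$), it suffices to bound $\mathcal{F}_{\mathcal{V}}^{(\theta)\,\ast}-\mathcal{F}_{\mathcal{V}}^{-1}$, whose kernel is $(2\pi)^{-1/2}\bigl(\psi_-(x,k,\theta_1,\theta_2)-\psi_-(x,k)\bigr)$. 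Into this kernel I insert the expansion of Corollary \ref{Corollary_eigenfun_exp}: it splits $\mathcal{F}_{\mathcal{V}}^{(\theta)\,\ast}-\mathcal{F}_{\mathcal{V}}^{-1}$ into four operators $T_m$, the $m$-th with kernel $(2\pi)^{-1/2}\,c_m(k,\theta)\,K_m^{\sigma k}(x,\alpha_m)$, where $K_m\in\{G,H\}$, $\alpha_m\in\{a,b\}$, $\sigma k=|k|$, and $|c_m(k,\theta)|\lesssim|\theta_2|\,|k|$ if $K_m=G$, resp. $|c_m(k,\theta)|\lesssim|\theta_1|\,|k|/(1+|k|)$ if $K_m=H$.

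The heart of the proof is the estimate $\|T_m\|_{\mathcal{L}(L^2)}=\mathcal{O}(\theta_1)+\mathcal{O}(\theta_2)$, obtained by splitting $\mathbb{R}_x=(-\infty,a)\cup(a,b)\cup(b,\infty)$. On $x>b$ the definitions (\ref{G_z})--(\ref{H_z}) and the exterior conditions (\ref{Jost_sol}) give $K_m^{\sigma k}(x,\alpha_m)=g_m(k)\,e^{i|k|x}$, and on $x<a$ they give $g_m(k)\,e^{-i|k|x}$, with $|g_m(k)|\lesssim(1+|k|)^{-1}$ if $K_m=G$ and $|g_m(k)|\lesssim1$ if $K_m=H$ (Proposition \ref{Proposition_Jost}, Lemma \ref{Lemma_Green_ker}); hence $|c_m(k,\theta)g_m(k)|=\mathcal{O}(\theta_1)+\mathcal{O}(\theta_2)$, and after separating $k>0$ from $k<0$ the restriction of $T_m$ to each exterior half-line is a Fourier multiplier of that symbol, bounded by it via Plancherel (restricting in $x$ only decreases the norm). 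On the compact core $(a,b)$ the same functions read $K_m^{\sigma k}(x,\alpha_m)=e^{\pm i|k|x}\phi_m(x,k)$ with $\phi_m(\cdot,k)$ bounded and $\mathcal{C}^1$ in $x$ (write $\chi_\pm(x,\zeta)=e^{\pm i\zeta x}b_\pm(x,\zeta)$, $b_\pm$ bounded $\mathcal{C}^1$ in $x$) and carrying the stated $k$-decay; writing $\phi_m(x,k)=\phi_m(\alpha_m,k)+\int_{\alpha_m}^x\partial_{x'}\phi_m(x',k)\,dx'$, the first term again gives a Fourier multiplier of symbol $c_m(k,\theta)\phi_m(\alpha_m,k)=\mathcal{O}(\theta_1)+\mathcal{O}(\theta_2)$, while the second, after Minkowski's integral inequality in $x'$, is dominated by $\sup_{x'\in(a,b)}\bigl\|c_m(\cdot,\theta)\,\partial_{x'}\phi_m(x',\cdot)\bigr\|_{\infty}=\mathcal{O}(\theta_1)+\mathcal{O}(\theta_2)$. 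Summing the three contributions yields (\ref{W_teta_exp}) and in particular $\mathcal{W}_{\theta_1,\theta_2}\in\mathcal{L}(L^2(\mathbb{R}))$.

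For the analyticity, each matrix element $\langle\phi,\mathcal{W}_{\theta_1,\theta_2}\psi\rangle$ (with $\phi,\psi\in L^1\cap L^2$, a dense set) equals, by Fubini, $\int(2\pi)^{-1}\bigl(\int\overline{\phi(x)}\,\psi_-(x,k,\theta_1,\theta_2)\,dx\bigr)\bigl(\int\overline{\psi_-(y,k)}\,\psi(y)\,dy\bigr)\,dk$, which is holomorphic in $(\theta_1,\theta_2)$ because $\psi_-(x,k,\theta_1,\theta_2)$ is holomorphic in the parameters (Proposition \ref{Proposition_Krein_gen_eigenfun}) and uniformly bounded in $x,k$ and locally in $\theta$ (Corollary \ref{Corollary_eigenfun_exp} with (\ref{gen_eigenfun_0_est})), so differentiation passes under the integrals; together with the local operator bound just obtained, this weak holomorphy upgrades to norm holomorphy, i.e. $\{\mathcal{W}_{\theta_1,\theta_2}\}$ is an analytic family of bounded operators on $L^2(\mathbb{R})$.

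Finally, the intertwining (\ref{W_teta_inter}) I would verify on the core of $Q_{0,0}(\mathcal{V})$ consisting of $\varphi$ with $\mathcal{F}_{\mathcal{V}}\varphi$ smooth and compactly supported in $\mathbb{R}\setminus\{0\}$. For such $\varphi$, $\mathcal{W}_{\theta_1,\theta_2}\varphi=\int(2\pi)^{-1/2}\psi_-(\cdot,k,\theta_1,\theta_2)(\mathcal{F}_{\mathcal{V}}\varphi)(k)\,dk$ is a finite superposition (over a compact $k$-set) of the generalized eigenfunctions, hence belongs to $H^2(\mathbb{R}\setminus\{a,b\})$ and satisfies (\ref{B_C_1}) since each $\psi_-(\cdot,k,\theta_1,\theta_2)$ does; therefore $\mathcal{W}_{\theta_1,\theta_2}\varphi\in D(Q_{\theta_1,\theta_2}(\mathcal{V}))$, and differentiating under the integral with $(-\partial_x^2+\mathcal{V})\psi_-(\cdot,k,\theta_1,\theta_2)=k^2\psi_-(\cdot,k,\theta_1,\theta_2)$ while using $Q_{0,0}(\mathcal{V})\varphi=\mathcal{F}_{\mathcal{V}}^{-1}\bigl(k^2\mathcal{F}_{\mathcal{V}}\varphi\bigr)$ gives $Q_{\theta_1,\theta_2}(\mathcal{V})\mathcal{W}_{\theta_1,\theta_2}\varphi=\mathcal{W}_{\theta_1,\theta_2}Q_{0,0}(\mathcal{V})\varphi$ on this core; closedness of $Q_{\theta_1,\theta_2}(\mathcal{V})$ and boundedness of $\mathcal{W}_{\theta_1,\theta_2}$ extend it to all of $D(Q_{0,0}(\mathcal{V}))$. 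The one genuinely delicate step is the operator-norm estimate of the $T_m$: a naive Hilbert--Schmidt or Schur bound fails because the correction kernel does not decay in $k$ on the compact core $(a,b)$, so one must exploit its oscillatory plane-wave modulation as above.
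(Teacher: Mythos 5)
Your argument is correct, and its skeleton is the same as the paper's: represent $\mathcal{W}_{\theta_{1},\theta_{2}}\varphi$ as $\int \frac{dk}{(2\pi)^{1/2}}\psi_{-}(\cdot,k,\theta_{1},\theta_{2})(\mathcal{F}_{\mathcal{V}}\varphi)(k)$, insert the expansion of Corollary \ref{Corollary_eigenfun_exp}, split $\mathbb{R}_x$ into the two exterior half-lines and the core $(a,b)$, and bound the exterior pieces as Fourier multipliers (with the parity operator handling the $|k|$) via Plancherel — exactly the paper's treatment leading to (\ref{est1})--(\ref{est2}) and (\ref{H_ext1}), (\ref{H_ext2}). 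Where you genuinely deviate is on the core and on the intertwining. On $(a,b)$ the paper converts $\chi_{\mp}(x,|k|)$ back into the generalized eigenfunctions through the identities (\ref{Jost_sol_identity_1})--(\ref{Jost_sol_identity_2}), so that the interior contribution becomes $\mathcal{F}_{\mathcal{V}}^{-1}$ (and $\mathcal{F}_{\mathcal{V}}^{-1}\mathcal{P}$) of bounded multipliers and is controlled by unitarity of $\mathcal{F}_{\mathcal{V}}$; you instead keep the plane-wave modulation $e^{\pm i|k|x}b_{\pm}(x,|k|)$ and absorb the $x$-dependence of $b_{\pm}$ by the fundamental theorem of calculus plus Minkowski, which works because the proof of Proposition \ref{Proposition_Jost} gives $b_{\pm},\partial_{x}b_{\pm}=\mathcal{O}(1)$ on the core uniformly in $k\in\mathbb{R}$, and $w^{-1}=\mathcal{O}((1+|k|)^{-1})$ supplies the needed decay; your route avoids the Wronskian identities at the price of these $\mathcal{C}^{1}$ bounds (which the paper has anyway), while the paper's route keeps everything expressed through $\mathcal{F}_{\mathcal{V}}$ — a structure it then reuses verbatim in (\ref{W_teta_exp2}) to prove the domain mapping. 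For the intertwining, the paper shows $\mathcal{W}_{\theta_{1},\theta_{2}}\in\mathcal{L}\left(D(Q_{0,0}(\mathcal{V})),D(Q_{\theta_{1},\theta_{2}}(\mathcal{V}))\right)$ precisely from that multiplier representation together with the boundary relation (\ref{gen_eigenfun_bc}), and then uses $(Q_{\theta_{1},\theta_{2}}(\mathcal{V})-k^{2})\psi_{-}(\cdot,k,\theta_{1},\theta_{2})=0$; you verify the relation on wave packets with $\mathcal{F}_{\mathcal{V}}\varphi\in\mathcal{C}_{0}^{\infty}(\mathbb{R}\setminus\{0\})$ (a core, since $\mathcal{F}_{\mathcal{V}}$ diagonalizes $Q_{0,0}(\mathcal{V})$ as multiplication by $k^{2}$) and extend by closedness of $Q_{\theta_{1},\theta_{2}}(\mathcal{V})$ and boundedness of $\mathcal{W}_{\theta_{1},\theta_{2}}$ — a cleaner and more standard extension step, though it requires the (routine) justification that such wave packets lie in $D(Q_{\theta_{1},\theta_{2}}(\mathcal{V}))$ and that one may differentiate under the integral, which you sketch adequately. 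Your analyticity argument (local uniform boundedness plus weak holomorphy on a dense set, upgraded to norm holomorphy) is also fine and is essentially the content of the paper's brief closing remark on holomorphic symbols.
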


\begin{proof}
Let consider the action of $\mathcal{W}_{\theta_{1},\theta_{2}}$ on
$\varphi\in L^{2}(\mathbb{R})$; making use of (\ref{gen_Fourier}) and
(\ref{W_teta_ker}), this writes as
\begin{equation}
\mathcal{W}_{\theta_{1},\theta_{2}}\varphi=\int_{\mathbb{R}}\frac{dk}{\left(
2\pi\right)  ^{1/2}}\,\psi_{-}(\cdot,k,\theta_{1},\theta_{2})\left(
\mathcal{F}_{\mathcal{V}}\varphi\right)  (k)\,, \label{W_teta_act}%
\end{equation}
and, expressing $\psi_{-}(x,k,\theta_{1},\theta_{2})$ through the expansion
(\ref{gen_eigenfun_exp}), we get%
\begin{align}
\mathcal{W}_{\theta_{1},\theta_{2}}\varphi &  =\int_{\mathbb{R}}\frac
{dk}{\left(  2\pi\right)  ^{1/2}}\,\psi_{-}(\cdot,k)\left(  \mathcal{F}%
_{\mathcal{V}}\varphi\right)  (k)+\int_{\mathbb{R}}\frac{dk}{\left(
2\pi\right)  ^{1/2}}\,\left[  \mathcal{O}\left(  \theta_{2}k\right)
G^{\left\vert k\right\vert }(\cdot,b)+\mathcal{O}\left(  \theta_{2}k\right)
G^{\left\vert k\right\vert }(\cdot,a)\right]  \left(  \mathcal{F}%
_{\mathcal{V}}\varphi\right)  (k)\nonumber\\
& \nonumber\\
&  +\int_{\mathbb{R}}\frac{dk}{\left(  2\pi\right)  ^{1/2}}\,\left[
\mathcal{O}\left(  \frac{\theta_{1}k}{1+\left\vert k\right\vert }\right)
H^{\left\vert k\right\vert }(\cdot,b)+\mathcal{O}\left(  \frac{\theta_{1}%
k}{1+\left\vert k\right\vert }\right)  H^{\left\vert k\right\vert }%
(\cdot,a)\right]  \left(  \mathcal{F}_{\mathcal{V}}\varphi\right)  (k)\,,
\label{W_teta_exp1}%
\end{align}
where, it is important to remark, the symbols $\mathcal{O}\left(
\cdot\right)  $ here denote functions depending only on $k$, $\theta_{1}$ and
$\theta_{2}$, but independent of $x$. Since $\int\frac{dk}{\left(
2\pi\right)  ^{1/2}}\,\psi_{-}(\cdot,k)\left(  \mathcal{F}_{\mathcal{V}%
}\varphi\right)  (k)=\mathcal{F}_{\mathcal{V}}^{-1}\left(  \mathcal{F}%
_{\mathcal{V}}\varphi\right)  $, this equation yields: $\left(  \mathcal{W}%
_{\theta_{1},\theta_{2}}-\mathbb{I}\right)  \varphi=I+II$, where%
\begin{equation}
I(\varphi)=\int_{\mathbb{R}}\frac{dk}{\left(  2\pi\right)  ^{1/2}}\,\left[
\mathcal{O}\left(  \theta_{2}k\right)  G^{\left\vert k\right\vert }%
(\cdot,b)+\mathcal{O}\left(  \theta_{2}k\right)  G^{\left\vert k\right\vert
}(\cdot,a)\right]  \left(  \mathcal{F}_{\mathcal{V}}\varphi\right)  (k)\,,
\label{I}%
\end{equation}
and%
\begin{equation}
II(\varphi)=\int_{\mathbb{R}}\frac{dk}{\left(  2\pi\right)  ^{1/2}}\,\left[
\mathcal{O}\left(  \frac{\theta_{1}k}{1+\left\vert k\right\vert }\right)
H^{\left\vert k\right\vert }(\cdot,b)+\mathcal{O}\left(  \frac{\theta_{1}%
k}{1+\left\vert k\right\vert }\right)  H^{\left\vert k\right\vert }%
(\cdot,a)\right]  \left(  \mathcal{F}_{\mathcal{V}}\varphi\right)  (k)\,.
\label{II}%
\end{equation}

In order to obtain the expansion (\ref{W_teta_exp}), $L^{2}$-norm estimates of
the maps defined in (\ref{I}) and (\ref{II}) are needed. We consider at first
the case of $I(\varphi)$; let define $\phi_{\alpha}$ as%
\begin{equation}
\phi_{\alpha}(x)=\int_{\mathbb{R}}\frac{dk}{\left(  2\pi\right)  ^{1/2}%
}\,\mathcal{O}\left(  k\right)  G^{\left\vert k\right\vert }(x,\alpha)\left(
\mathcal{F}_{\mathcal{V}}\varphi\right)  (k)\,,\quad\alpha\in\left\{
a,b\right\}  \,, \label{Phi_alpha}%
\end{equation}
being $\mathcal{O}\left(  \cdot\right)  $ depending only on $k$. The $L^{2}%
$-norm of $\phi_{\alpha}$ is bounded by%
\begin{equation}
\left\Vert \phi_{\alpha}\right\Vert _{L^{2}\left(  \mathbb{R}\right)  }%
\leq\left\Vert 1_{\left\{  x\leq a\right\}  }\phi_{\alpha}\right\Vert
_{L^{2}\left(  \mathbb{R}\right)  }+\left\Vert 1_{\left(  a,b\right)  }%
\phi_{\alpha}\right\Vert _{L^{2}\left(  \mathbb{R}\right)  }+\left\Vert
1_{\left\{  x\geq b\right\}  }\phi_{\alpha}\right\Vert _{L^{2}\left(
\mathbb{R}\right)  }\,. \label{Phi_alpha_bound}%
\end{equation}
For $\alpha=b$, making use of the explicit form of $G^{k}(x,b)$, given by
(\ref{G_z}) for $\zeta=k$, and exploiting the relations (\ref{Jost_sol_bound})
and%
\begin{equation}
1_{\left\{  x\leq a\right\}  }\chi_{-}(x,k)=e^{-ikx}\,,\qquad1_{\left\{  x\geq
b\right\}  }\chi_{+}(x,k)=e^{-ikx}\,, \label{Jost_ext2}%
\end{equation}
we have%
\begin{align}
1_{\left\{  x\leq a\right\}  }(x)\mathcal{O}\left(  k\right)  G^{\left\vert
k\right\vert }(x,b)  &  =1_{\left\{  x\leq a\right\}  }(x)\tau_{1}\left(
k\right)  e^{-i\left\vert k\right\vert x}\label{Green_ext1}\\
& \nonumber\\
1_{\left\{  x\geq b\right\}  }(x)\mathcal{O}\left(  k\right)  G^{\left\vert
k\right\vert }(x,b)  &  =1_{\left\{  x\geq b\right\}  }(x)\tau_{2}\left(
k\right)  e^{i\left\vert k\right\vert x} \label{Green_ext2}%
\end{align}
with $\tau_{1},\tau_{2}\in L_{k}^{\infty}\left(  \mathbb{R}\right)  $. In the
following, $\mathcal{P}$ denotes be the parity operator: $\mathcal{P}%
u(t)=u(-t)$; from (\ref{Green_ext1}), we get%
\begin{align}
1_{\left\{  x\leq a\right\}  }(x)\phi_{b}(x)  &  =1_{\left\{  x\leq a\right\}
}(x)\int_{\mathbb{R}}\frac{dk}{\left(  2\pi\right)  ^{1/2}}\,\tau_{1}\left(
k\right)  e^{-i\left\vert k\right\vert x}\left(  \mathcal{F}_{\mathcal{V}%
}\varphi\right)  (k)\nonumber\\
&  =1_{\left\{  x\leq a\right\}  }(x)\left(  \mathcal{F}_{0}^{-1}\left(
1_{k<0}\tau_{1}\mathcal{F}_{\mathcal{V}}\varphi+\mathcal{P}\left(  1_{k>0}%
\tau_{1}\mathcal{F}_{\mathcal{V}}\varphi\right)  \right)  \right)  (x)\,,
\end{align}
where, according to the notation introduced in (\ref{gen_Fourier}),
$\mathcal{F}_{0}$ is the standard Fourier transform. Thus, $1_{\left\{  x\leq
a\right\}  }\phi_{b}$ is estimated by
\begin{equation}
\left\Vert 1_{\left\{  x\leq a\right\}  }\phi_{b}\right\Vert _{L^{2}\left(
\mathbb{R}\right)  }=\left\Vert \mathcal{F}_{0}^{-1}\left(  1_{k<0}\tau
_{1}\mathcal{F}_{\mathcal{V}}\varphi\right)  \right\Vert _{L^{2}\left(
\mathbb{R}\right)  }+\left\Vert \mathcal{F}_{0}^{-1}\mathcal{P}\left(
1_{k>0}\tau_{1}\mathcal{F}_{\mathcal{V}}\varphi\right)  \right\Vert
_{L^{2}\left(  \mathbb{R}\right)  }\lesssim\left\Vert \varphi\right\Vert
_{L^{2}\left(  \mathbb{R}\right)  }\,, \label{est1}%
\end{equation}
while, for $1_{\left\{  x\geq b\right\}  }\phi_{b}$, a similar inequality
follows by using (\ref{Green_ext2})%
\begin{equation}
\left\Vert 1_{\left\{  x\geq b\right\}  }\phi_{b}\right\Vert _{L^{2}\left(
\mathbb{R}\right)  }=\left\Vert \mathcal{F}_{0}^{-1}\mathcal{P}\left(
1_{k<0}\tau_{2}\mathcal{F}_{\mathcal{V}}\varphi\right)  \right\Vert
_{L^{2}\left(  \mathbb{R}\right)  }+\left\Vert \mathcal{F}_{0}^{-1}\left(
1_{k>0}\tau_{2}\mathcal{F}_{\mathcal{V}}\varphi\right)  \right\Vert
_{L^{2}\left(  \mathbb{R}\right)  }\lesssim\left\Vert \varphi\right\Vert
_{L^{2}\left(  \mathbb{R}\right)  }\,. \label{est2}%
\end{equation}
According to definition of $G^{k}(x,b)$ for $x<b$, the term $1_{\left(
a,b\right)  }\phi_{b}$ writes as%
\begin{equation}
1_{\left(  a,b\right)  }(x)\phi_{b}(x)=1_{\left(  a,b\right)  }(x)\int
_{\mathbb{R}}dk\,\frac{\mathcal{O}\left(  k\right)  \chi_{+}(b,\left\vert
k\right\vert )}{w(\left\vert k\right\vert )}\chi_{-}(x,\left\vert k\right\vert
)\left(  \mathcal{F}_{\mathcal{V}}\varphi\right)  (k)=1_{\left(  a,b\right)
}(x)\int_{\mathbb{R}}dk\,\chi_{-}(x,\left\vert k\right\vert )\tau
_{3}(k)\left(  \mathcal{F}_{\mathcal{V}}\varphi\right)  (k)\,, \label{est3_0}%
\end{equation}
where $\tau_{3}\in L_{k}^{\infty}\left(  \mathbb{R}\right)  $ is: $\tau
_{3}(k)=\frac{\mathcal{O}\left(  k\right)  \chi_{+}(b,\left\vert k\right\vert
)}{w(\left\vert k\right\vert )}$. Using the definition (\ref{gen_eigenfun_0})
and the identities%
\begin{equation}
\chi_{\pm}(\cdot,-k)=\chi_{\pm}^{\ast}(\cdot,k)\,,\qquad w(-k)=w^{\ast}(k)\,,
\end{equation}
it follows%
\begin{align}
1_{k<0}(k)\chi_{-}(x,-k)  &  =1_{k<0}(k)\frac{w\left(  -k\right)  }{2ik}%
\psi_{-}(x,k)\,,\label{Jost_gen_eigenfun1}\\
& \nonumber\\
1_{k\geq0}(k)\chi_{-}(x,k)  &  =-1_{k\geq0}(k)\frac{w\left(  k\right)  }%
{2ik}\psi_{-}(x,-k)\,. \label{Jost_gen_eigenfun2}%
\end{align}
Take $\tilde{\tau}_{3}(k)=\tau_{3}(k)\frac{w\left(  -k\right)  }{2ik}$ and
$\hat{\tau}_{3}(k)=\tau_{3}(k)\frac{w\left(  k\right)  }{-2ik}$; it results:
$\tilde{\tau}_{3},\hat{\tau}_{3}\in L_{k}^{\infty}\left(  \mathbb{R}\right)  $
and the r.h.s. of (\ref{est3_0}) rephrases as%
\begin{equation}
1_{\left(  a,b\right)  }(x)\phi_{b}(x)=1_{\left(  a,b\right)  }(x)\left[
\int_{k<0}dk\,\psi_{-}(x,k)\tilde{\tau}_{3}(k)\left(  \mathcal{F}%
_{\mathcal{V}}\varphi\right)  (k)+\int_{k>0}dk\,\psi_{-}(x,-k)\left(
\hat{\tau}_{3}(k)\left(  \mathcal{F}_{\mathcal{V}}\varphi\right)  (k)\right)
\right]  \,. \label{est3_1}%
\end{equation}
The first term identifies with the inverse Fourier transform of $1_{k<0}%
\tilde{\tau}_{3}\mathcal{F}_{\mathcal{V}}\varphi$,%
\begin{equation}
\int_{k<0}dk\,\psi_{-}(\cdot,k)\tilde{\tau}_{3}(k)\left(  \mathcal{F}%
_{\mathcal{V}}\varphi\right)  (k)=\mathcal{F}_{\mathcal{V}}^{-1}\left(
1_{k<0}\tilde{\tau}_{3}\mathcal{F}_{\mathcal{V}}\varphi\right)  \,,
\label{est3_2}%
\end{equation}
while, for the second term, we have%
\begin{equation}
\int_{k>0}dk\,\psi_{-}(\cdot,-k)\left(  \hat{\tau}_{3}(k)\left(
\mathcal{F}_{\mathcal{V}}\varphi\right)  (k)\right)  =-\int_{k<0}dk\,\psi
_{-}(\cdot,k)\left(  \mathcal{P}\left(  \hat{\tau}_{3}\mathcal{F}%
_{\mathcal{V}}\varphi\right)  \right)  (k)=-\mathcal{F}_{\mathcal{V}}%
^{-1}\mathcal{P}\left(  1_{k>0}\hat{\tau}_{3}\mathcal{F}_{\mathcal{V}}%
\varphi\right)  \,. \label{est3_3}%
\end{equation}
The above relations yield the estimate%
\begin{equation}
\left\Vert 1_{\left(  a,b\right)  }\phi_{b}\right\Vert _{L^{2}\left(
\mathbb{R}\right)  }=\left\Vert 1_{\left(  a,b\right)  }\mathcal{F}%
_{\mathcal{V}}^{-1}\left(  1_{k<0}\tilde{\tau}_{3}\mathcal{F}_{\mathcal{V}%
}\varphi\right)  ^{\ast}\right\Vert _{L^{2}\left(  \mathbb{R}\right)
}+\left\Vert 1_{\left(  a,b\right)  }\mathcal{F}_{\mathcal{V}}^{-1}%
\mathcal{P}\left(  1_{k>0}\tilde{\tau}_{3}\mathcal{F}_{\mathcal{V}}%
\varphi\right)  \right\Vert _{L^{2}\left(  \mathbb{R}\right)  }\lesssim
\left\Vert \varphi\right\Vert _{L^{2}\left(  \mathbb{R}\right)  }\,.
\label{est3_4}%
\end{equation}
As a consequence of (\ref{est1}), (\ref{est2}) and (\ref{est3_4}) we get%
\begin{equation}
\left\Vert \phi_{b}\right\Vert _{L^{2}\left(  \mathbb{R}\right)  }%
\lesssim\left\Vert \varphi\right\Vert _{L^{2}\left(  \mathbb{R}\right)  }\,,
\end{equation}
and a similar computation in the case of $\phi_{a}$ leads to: $\left\Vert
\phi_{a}\right\Vert _{L^{2}\left(  \mathbb{R}\right)  }\lesssim\left\Vert
\varphi\right\Vert _{L^{2}\left(  \mathbb{R}\right)  }$. From the definitions
(\ref{I}) and (\ref{Phi_alpha}), it follows%
\begin{equation}
\left\Vert I(\varphi)\right\Vert _{L^{2}\left(  \mathbb{R}\right)  }%
\lesssim\left\vert \theta_{2}\right\vert \left(  \left\Vert \phi
_{a}\right\Vert _{L^{2}\left(  \mathbb{R}\right)  }+\left\Vert \phi
_{b}\right\Vert _{L^{2}\left(  \mathbb{R}\right)  }\right)  \lesssim\left\vert
\theta_{2}\right\vert \,\left\Vert \varphi\right\Vert _{L^{2}\left(
\mathbb{R}\right)  }\,. \label{I_est}%
\end{equation}

For the map $II(\varphi)$, we introduce $\psi_{\alpha}$ defined as%
\begin{equation}
\psi_{\alpha}(x)=\int_{\mathbb{R}}dk\,\mathcal{O}\left(  \frac{k}{1+\left\vert
k\right\vert }\right)  H^{\left\vert k\right\vert }(x,\alpha)\left(
\mathcal{F}_{\mathcal{V}}\varphi\right)  (k)\,,\quad\alpha\in\left\{
a,b\right\}  \,, \label{Psi_alpha}%
\end{equation}
where $\mathcal{O}\left(  \cdot\right)  $ depends only on $k$. For $\alpha=b$,
the explicit form of $H^{k}(x,b)$, given by (\ref{H_z}) for $\zeta=k$, and the
relations (\ref{Jost_sol_bound}), (\ref{Jost_ext2}), yield%
\begin{align}
1_{\left\{  x\leq a\right\}  }(x)\mathcal{O}\left(  \frac{k}{1+\left\vert
k\right\vert }\right)  H^{\left\vert k\right\vert }(x,b)  &  =1_{\left\{
x\leq a\right\}  }(x)\eta_{1}\left(  k\right)  e^{-i\left\vert k\right\vert
x}\label{H_ext1}\\
& \nonumber\\
1_{\left(  a,b\right)  }(x)\mathcal{O}\left(  \frac{k}{1+\left\vert
k\right\vert }\right)  H^{\left\vert k\right\vert }(x,b)  &  =1_{\left(
a,b\right)  }(x)\eta_{3}(k)\chi_{-}(x,\left\vert k\right\vert )\label{H_in}\\
& \nonumber\\
1_{\left\{  x\geq b\right\}  }(x)\mathcal{O}\left(  \frac{k}{1+\left\vert
k\right\vert }\right)  H^{\left\vert k\right\vert }(x,b)  &  =1_{\left\{
x\geq b\right\}  }(x)\eta_{2}\left(  k\right)  e^{i\left\vert k\right\vert x}
\label{H_ext2}%
\end{align}
where $\eta_{i=1,2,3}\in L_{k}^{\infty}\left(  \mathbb{R}\right)  $ are
described by $\mathcal{O}\left(  \frac{k}{1+\left\vert k\right\vert }\right)
$. Setting: $\tilde{\eta}_{3}(k)=\eta_{3}(k)\frac{w\left(  -k\right)  }{2ik}$
and $\hat{\eta}_{3}(k)=\eta_{3}(k)\frac{w\left(  k\right)  }{-2ik}$ (which,
according to the characterization of $\eta_{3}$, still implies: $\tilde{\eta
}_{3},\hat{\eta}_{3}\in L_{k}^{\infty}\left(  \mathbb{R}\right)  $), and
proceeding as before, we obtain the decomposition%
\begin{align}
\psi_{b}  &  =1_{\left\{  x\leq a\right\}  }\left[  \mathcal{F}_{0}%
^{-1}\left(  1_{k<0}\eta_{1}\mathcal{F}_{\mathcal{V}}\varphi+\mathcal{P}%
\left(  1_{k>0}\eta_{1}\mathcal{F}_{\mathcal{V}}\varphi\right)  \right)
\right] \nonumber\\
&  +1_{\left\{  x\geq b\right\}  }\left[  \mathcal{F}_{0}^{-1}\mathcal{P}%
\left(  1_{k<0}\eta_{2}\mathcal{F}_{\mathcal{V}}\varphi\right)  +\mathcal{F}%
_{0}^{-1}\left(  1_{k>0}\eta_{2}\mathcal{F}_{\mathcal{V}}\varphi\right)
\right] \nonumber\\
&  +1_{\left(  a,b\right)  }\left[  \mathcal{F}_{\mathcal{V}}^{-1}\left(
1_{k<0}\tilde{\eta}_{3}\mathcal{F}_{\mathcal{V}}\varphi\right)  -\mathcal{F}%
_{\mathcal{V}}^{-1}\mathcal{P}\left(  1_{k>0}\hat{\eta}_{3}\mathcal{F}%
_{\mathcal{V}}\varphi\right)  \right]  \,. \label{psi_be}%
\end{align}
This entails: $\left\Vert \psi_{b}\right\Vert _{L^{2}\left(  \mathbb{R}%
\right)  }\lesssim\left\Vert \varphi\right\Vert _{L^{2}\left(  \mathbb{R}%
\right)  }$, while, with similar computations, the corresponding estimate in
the case of $\psi_{a}$ is obtained. From the definitions (\ref{II}) and
(\ref{Psi_alpha}), follows%
\begin{equation}
\left\Vert II\right\Vert _{L^{2}\left(  \mathbb{R}\right)  }\lesssim\left\vert
\theta_{1}\right\vert \left(  \left\Vert \psi_{a}\right\Vert _{L^{2}\left(
\mathbb{R}\right)  }+\left\Vert \psi_{b}\right\Vert _{L^{2}\left(
\mathbb{R}\right)  }\right)
\end{equation}
Then, the above estimates imply%
\begin{equation}
\left\Vert II\right\Vert _{L^{2}\left(  \mathbb{R}\right)  }\lesssim\left\vert
\theta_{1}\right\vert \,\left\Vert \varphi\right\Vert _{L^{2}\left(
\mathbb{R}\right)  }\,. \label{II_est}%
\end{equation}
The expansion (\ref{W_teta_exp}) is a consequence of (\ref{I_est}) and
(\ref{II_est}). Since the symbols in (\ref{I})-(\ref{II}) are holomorphic in
$\left(  \theta_{1},\theta_{2}\right)  $, the operators $\mathcal{W}%
_{\theta_{1},\theta_{2}}$ form an analytic family w.r.t. the parameters.

Next, we consider the relation (\ref{W_teta_inter}). Let $\varphi\in D\left(
Q_{0,0}(\mathcal{V})\right)  $, using the functional calculus of
$Q_{0,0}(\mathcal{V})$, we have: $\left(  \mathcal{F}_{\mathcal{V}}\left(
Q_{0,0}(\mathcal{V})\varphi\right)  \right)  (k)=k^{2}\left(  \mathcal{F}%
_{\mathcal{V}}\varphi\right)  (k)$ and, according to (\ref{W_teta_act}), the
r.h.s. of (\ref{W_teta_inter}) writes as%
\begin{equation}
\mathcal{W}_{\theta_{1},\theta_{2}}Q_{0,0}(\mathcal{V})\varphi=\int
_{\mathbb{R}}dk\,\psi_{-}(\cdot,k,\theta_{1},\theta_{2})k^{2}\left(
\mathcal{F}_{\mathcal{V}}\varphi\right)  (k)\,.\label{W_teta_inter1}%
\end{equation}
To discuss the action of $Q_{\theta_{1},\theta_{2}}(\mathcal{V})\mathcal{W}%
_{\theta_{1},\theta_{2}}$ over $D\left(  Q_{0,0}(\mathcal{V})\right)  $, we
use the expansion%
\begin{equation}
\mathcal{W}_{\theta_{1},\theta_{2}}\varphi=\varphi+I(\varphi)+II(\varphi
)\,.\label{W_teta_exp0}%
\end{equation}
From the above results, the map $I(\varphi)+II(\varphi)$ can be represented as%
\begin{align}
I(\varphi)+II(\varphi) &  =1_{\left\{  x\leq a\right\}  }\left[
\mathcal{F}_{0}^{-1}\left(  1_{k<0}\mu_{1}\mathcal{F}_{\mathcal{V}}%
\varphi+\mathcal{P}\left(  1_{k>0}\mu_{1}\mathcal{F}_{\mathcal{V}}%
\varphi\right)  \right)  \right]  \nonumber\\
&  +1_{\left\{  x\geq b\right\}  }\left[  \mathcal{F}_{0}^{-1}\mathcal{P}%
\left(  1_{k<0}\mu_{2}\mathcal{F}_{\mathcal{V}}\varphi\right)  +\mathcal{F}%
_{0}^{-1}\left(  1_{k>0}\mu_{2}\mathcal{F}_{\mathcal{V}}\varphi\right)
\right]  \nonumber\\
&  +1_{\left(  a,b\right)  }\left[  \mathcal{F}_{\mathcal{V}}^{-1}\left(
1_{k<0}\mu_{3}\mathcal{F}_{\mathcal{V}}\varphi\right)  -\mathcal{F}%
_{\mathcal{V}}^{-1}\left(  1_{k<0}\left(  \mathcal{P}\left(  \mu
_{4}\mathcal{F}_{\mathcal{V}}\varphi\right)  \right)  \right)  \right]
\,,\label{W_teta_exp2}%
\end{align}
where $\mu_{i}\in L_{k}^{\infty}\left(  \mathbb{R}\right)  $, $i=1,..4$, are
suitable bounded functions of $k$. Let $u\in L_{k}^{\infty}\left(
\mathbb{R}\right)  $ and $\mathcal{V},\mathcal{V}^{\prime}$ any couple of
potentials fulfilling the assumptions; the operators $\mathcal{F}%
_{\mathcal{V}^{\prime}}^{-1}u\mathcal{F}_{\mathcal{V}}$ and $\mathcal{F}%
_{\mathcal{V}^{\prime}}^{-1}\mathcal{P}u\mathcal{F}_{\mathcal{V}}$ map
$D\left(  Q_{0,0}(\mathcal{V})\right)  $ into itself. Then, as a consequence
of (\ref{W_teta_exp0}), (\ref{W_teta_exp2}), the operator $\mathcal{W}%
_{\theta_{1},\theta_{2}}$ maps $D\left(  Q_{0,0}(\mathcal{V})\right)  $ into
$D\left(  Q(\mathcal{V})\right)  $, while, according to (\ref{gen_eigenfun_bc}%
) and (\ref{W_teta_inter1}), $\mathcal{W}_{\theta_{1},\theta_{2}}\varphi$
fulfills the interface conditions (\ref{B_C_1}) for all $\varphi\in D\left(
Q_{0,0}(\mathcal{V})\right)  $; we obtain: $\left.  \mathcal{W}_{\theta
_{1},\theta_{2}}\in\mathcal{L}\left(  D\left(  Q_{0,0}(\mathcal{V})\right)
,D\left(  Q_{\theta_{1},\theta_{2}}(\mathcal{V})\right)  \right)  \right.  $.
Moreover, from the relation:\newline $\left.  \left(  Q_{\theta_{1},\theta
_{2}}(\mathcal{V})-k^{2}\right)  \psi_{-}(\cdot,k,\theta_{1},\theta
_{2})=0\right.  $, it follows%
\begin{equation}
Q_{\theta_{1},\theta_{2}}(\mathcal{V})\mathcal{W}_{\theta_{1},\theta_{2}%
}\varphi=\int_{\mathbb{R}}dk\,\psi_{-}(\cdot,k,\theta_{1},\theta_{2}%
)k^{2}\left(  \mathcal{F}_{\mathcal{V}}\varphi\right)
(k)\,,\label{W_teta_inter2}%
\end{equation}
which leads to (\ref{W_teta_inter}).
\end{proof}

\subsection{Proof of the Theorem\textbf{ \ref{Theorem1}%
.\label{Section_Theorem1}}}

When the parameters $\theta_{1},\theta_{2}$ are chosen in a suitably small
neighbourhood of the origin, the expansion (\ref{W_teta_exp}) yields%
\begin{equation}
\mathcal{W}_{\theta_{1},\theta_{2}}^{-1}=1+\mathcal{O}\left(  \theta
_{1}\right)  +\mathcal{O}\left(  \theta_{2}\right)  \,, \label{W_teta_exp3}%
\end{equation}
and $Q_{\theta_{1},\theta_{2}}(\mathcal{V})$ expresses as the conjugated
operator%
\begin{equation}
Q_{\theta_{1},\theta_{2}}(\mathcal{V})=\mathcal{W}_{\theta_{1},\theta_{2}%
}Q_{0,0}(\mathcal{V})\mathcal{W}_{\theta_{1},\theta_{2}}^{-1}\,.
\label{conjugation}%
\end{equation}
Let us introduce%
\begin{equation}
U_{\theta_{1},\theta_{2}}(t)=\mathcal{W}_{\theta_{1},\theta_{2}}%
U_{0,0}(t)\mathcal{W}_{\theta_{1},\theta_{2}}^{-1}\,, \label{semigroup}%
\end{equation}
being $U_{0,0}(t)=e^{-itQ_{0,0}(\mathcal{V})}$ the unitary propagator
associated with $-iQ_{0,0}(\mathcal{V})$. Due to the properties of
$\mathcal{W}_{\theta_{1},\theta_{2}}$, $U_{\theta_{1},\theta_{2}}(t)$ is
holomorphic w.r.t. $\left(  \theta_{1},\theta_{2}\right)  $, while, for fixed
values of the parameters, the family $\left\{  U_{\theta_{1},\theta_{2}%
}(t)\right\}  _{t\in\mathbb{R}}$ forms a strongly continuous group on
$L^{2}(\mathbb{R})$ and, according to (\ref{conjugation}), (\ref{semigroup}),
we have%
\begin{equation}
i\partial_{t}\left(  U_{\theta_{1},\theta_{2}}(t)u\right)  =Q_{\theta
_{1},\theta_{2}}(\mathcal{V})U_{\theta_{1},\theta_{2}}(t)u\,,
\end{equation}
for all $u\in L^{2}(\mathbb{R})$. This allows to identify $U_{\theta
_{1},\theta_{2}}(t)$ with the quantum dynamical system generated by
$-iQ_{\theta_{1},\theta_{2}}(\mathcal{V})$. Making use of (\ref{W_teta_exp})
and (\ref{W_teta_exp3}), we get%
\begin{equation}
U_{\theta_{1},\theta_{2}}(t)=U_{0,0}(t)+\mathcal{R}\left(  t,\theta_{1}%
,\theta_{2}\right)  \,,
\end{equation}
where the remainder term is strongly continuous and uniformly bounded w.r.t.
$t$ in the $L^{2}$-operator norm, allowing the representation: $\mathcal{R}%
\left(  t,\theta_{1},\theta_{2}\right)  =\mathcal{O}\left(  \theta_{1}\right)
+\mathcal{O}\left(  \theta_{2}\right)  $.

\subsection{\label{Section_WO}Time dependent wave operators and scattering
systems.}

So far, we have investigated the continuity of the dynamical system generated
by $-iQ_{\theta_{1},\theta_{2}}(\mathcal{V})$ w.r.t. the parameters
$\theta_{i=1,2}$. This has been analyzed by using small-$\theta_{i}$
expansions of the 'stationary wave operators' $\mathcal{W}_{\theta_{1}%
,\theta_{2}}$ defined in (\ref{W_teta_ker}). In what follows we consider the
scattering problem for the pair $\left\{  Q_{\theta_{1},\theta_{2}%
}(\mathcal{V}),Q_{0,0}(\mathcal{V})\right\}  $ and show that $\mathcal{W}%
_{\theta_{1},\theta_{2}}$ coincides with a wave operator of this couple. The
next Lemma discusses this point under the assumptions of the Proposition
\ref{Proposition_W_cont}.

\begin{lemma}
\label{Lemma_WO}Let $\mathcal{V}$ fulfills the conditions (\ref{V}),
(\ref{V_pos}) and assume $\left(  \theta_{1},\theta_{2}\right)  \in
\mathcal{B}_{\delta}(\left(  0,0\right)  )$ with $\delta>0$ small enough. Then%
\begin{equation}
\text{\emph{s-}}\lim_{t\rightarrow-\infty}e^{itQ_{\theta_{1},\theta_{2}%
}(\mathcal{V})}e^{-itQ_{0,0}(\mathcal{V})}=\mathcal{W}_{\theta_{1},\theta_{2}%
}\,. \label{wave_op}%
\end{equation}

\end{lemma}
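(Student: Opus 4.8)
The plan is to exploit the stationary representation of $\mathcal{W}_{\theta_1,\theta_2}$ already at hand in \eqref{W_teta_ker} and reinterpret it in the time-dependent language. The key algebraic fact is the intertwining relation \eqref{W_teta_inter}: since $Q_{\theta_1,\theta_2}(\mathcal V)\mathcal W_{\theta_1,\theta_2}=\mathcal W_{\theta_1,\theta_2}Q_{0,0}(\mathcal V)$ on $D(Q_{0,0}(\mathcal V))$, and $\mathcal W_{\theta_1,\theta_2}$ is bounded with bounded inverse (by \eqref{W_teta_exp}, \eqref{W_teta_exp3}), it follows by the functional calculus that $e^{-itQ_{\theta_1,\theta_2}(\mathcal V)}\mathcal W_{\theta_1,\theta_2}=\mathcal W_{\theta_1,\theta_2}e^{-itQ_{0,0}(\mathcal V)}$ for every $t\in\mathbb R$. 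Equivalently,
\begin{equation}
e^{itQ_{\theta_1,\theta_2}(\mathcal V)}e^{-itQ_{0,0}(\mathcal V)}=\mathcal W_{\theta_1,\theta_2}+\Big(e^{itQ_{\theta_1,\theta_2}(\mathcal V)}e^{-itQ_{0,0}(\mathcal V)}-e^{itQ_{\theta_1,\theta_2}(\mathcal V)}\mathcal W_{\theta_1,\theta_2}e^{-itQ_{0,0}(\mathcal V)}\Big),
\end{equation}
so that the difference to be estimated equals $e^{itQ_{\theta_1,\theta_2}(\mathcal V)}\big(e^{-itQ_{0,0}(\mathcal V)}-\mathcal W_{\theta_1,\theta_2}e^{-itQ_{0,0}(\mathcal V)}\big)=-e^{itQ_{\theta_1,\theta_2}(\mathcal V)}(\mathcal W_{\theta_1,\theta_2}-1)e^{-itQ_{0,0}(\mathcal V)}$. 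Hence it suffices to show that $(\mathcal W_{\theta_1,\theta_2}-1)e^{-itQ_{0,0}(\mathcal V)}\varphi\to 0$ as $t\to-\infty$ for $\varphi$ in a dense set, together with a uniform bound on $\|e^{itQ_{\theta_1,\theta_2}(\mathcal V)}\|$, which holds by Theorem \ref{Theorem1}.

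Next I would analyze $(\mathcal W_{\theta_1,\theta_2}-1)e^{-itQ_{0,0}(\mathcal V)}$ directly using the decomposition $\mathcal W_{\theta_1,\theta_2}-1 = I + II$ from \eqref{I}--\eqref{II} obtained in the proof of Proposition \ref{Proposition_W_cont}. Applying $e^{-itQ_{0,0}(\mathcal V)}$ and commuting it through the generalized Fourier transform, one replaces $(\mathcal F_{\mathcal V}\varphi)(k)$ by $e^{-itk^2}(\mathcal F_{\mathcal V}\varphi)(k)$. Thus, for $\varphi$ in the dense set of smooth, compactly-supported-in-momentum functions with $\operatorname{supp}\mathcal F_{\mathcal V}\varphi$ bounded away from $k=0$, one is reduced to showing that integrals of the schematic form
\begin{equation}
\int_{\mathbb R}\frac{dk}{(2\pi)^{1/2}}\,\mathcal O(k)\,G^{|k|}(x,\alpha)\,e^{-itk^2}(\mathcal F_{\mathcal V}\varphi)(k),\qquad \alpha\in\{a,b\},
\end{equation}
tend to $0$ in $L^2(\mathbb R)$ as $t\to-\infty$, and similarly with $H^{|k|}$ in place of $G^{|k|}$. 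On the interval $(a,b)$ the factor $G^{|k|}(x,\alpha)$ is bounded uniformly in $x$ and the $x$-domain has finite measure, so the Riemann--Lebesgue lemma applied in $k$ gives pointwise-in-$x$ decay and dominated convergence finishes the $L^2$ estimate there. On the exterior regions $\{x\le a\}$ and $\{x\ge b\}$ one uses, as in the proof of Proposition \ref{Proposition_W_cont}, the explicit exterior form of $G$, $H$ via \eqref{Jost_ext2}; the resulting expressions are (inverse) Fourier transforms of $L^\infty_k$ functions supported in $\{k<0\}$ or $\{k>0\}$, multiplied by $e^{-itk^2}$, and standard non-stationary-phase / dispersive decay (or a direct Plancherel plus Riemann--Lebesgue argument after a change of variables $k\mapsto k^2$) yields $L^2$-decay as $t\to-\infty$.

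The main obstacle I anticipate is the exterior contribution near $k=0$: the change of variables $k\mapsto k^2$ is singular at the origin and the coefficients $\mathcal O(k/(1+|k|))$ appearing in the $H$-terms, while bounded, do not obviously vanish fast enough to suppress the low-energy region without care. The remedy is to choose the dense set so that $\mathcal F_{\mathcal V}\varphi$ vanishes in a neighbourhood of $k=0$ (admissible since $Q_{0,0}(\mathcal V)$ has purely absolutely continuous spectrum $[0,\infty)$ and no zero-energy resonance under \eqref{V_pos}, so such states are dense in $L^2(\mathbb R)$); then all integrands are supported in $|k|\ge c>0$, the map $k\mapsto k^2$ is a diffeomorphism on each half-line, and the oscillatory factor $e^{-itk^2}$ produces genuine decay. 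Once convergence on this dense set is established, the uniform bound $\sup_t\|e^{itQ_{\theta_1,\theta_2}(\mathcal V)}e^{-itQ_{0,0}(\mathcal V)}\|<\infty$ (from Theorem \ref{Theorem1} and unitarity of $e^{-itQ_{0,0}(\mathcal V)}$) upgrades it to strong convergence on all of $L^2(\mathbb R)$, proving \eqref{wave_op}.
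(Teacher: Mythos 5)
Your proposal is correct and follows the same overall strategy as the paper: reduce via the intertwining relation \eqref{W_teta_inter} to showing that $\left(  \mathcal{W}_{\theta_{1},\theta_{2}}-1\right)  e^{-itQ_{0,0}(\mathcal{V})}\varphi\rightarrow0$ on a dense set (the paper phrases this through $\mathcal{F}_{\mathcal{V},\theta_{1},\theta_{2}}^{-1}$ and the multiplication operator $A=k^{2}$, which is the same algebra), then insert the small-$\theta$ eigenfunction expansion \eqref{gen_eigenfun_exp} and prove decay of the resulting oscillatory integrals as $t\rightarrow-\infty$, using the uniform bound on $e^{itQ_{\theta_{1},\theta_{2}}(\mathcal{V})}$ to upgrade from the dense set. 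The only genuine difference is the last analytic step: the paper splits $\chi_{\pm}$ into an exact exponential plus an $L^{2}$-dominated remainder $\ell_{\pm}$ and invokes Lemma 2.6.4 of \cite{Yafa1}, which needs only $L^{2}$ amplitudes (so $g\in\mathcal{C}_{0}^{\infty}(\mathbb{R})$ suffices, with no truncation away from $k=0$), whereas you treat $(a,b)$ by Riemann--Lebesgue plus dominated convergence and the exterior half-lines by non-stationary phase on states with momentum support away from the origin. Your non-stationary-phase route works (the phase $\left\vert k\right\vert s-tk^{2}$ has no critical point for $s>0$, $t<0$, and the amplitudes are smooth enough away from $k=0$, or one can argue by density in the amplitude since the half-line restriction of the free evolution is uniformly bounded), but be aware that your parenthetical alternative ``Plancherel plus Riemann--Lebesgue after the change of variables $k\mapsto k^{2}$'' would not suffice by itself: full-line Plancherel conserves the $L^{2}$ norm of these exterior terms, and the decay comes precisely from the restriction to the half-line $s>0$ combined with the sign of $t$ (the incoming/outgoing dichotomy), which is exactly the content of the Yafaev lemma the paper uses.
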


\begin{proof}
Let introduce the modified transform $\mathcal{F}_{\mathcal{V},\theta
_{1},\theta_{2}}$ defined by%
\begin{equation}
\mathcal{F}_{\mathcal{V},\theta_{1},\theta_{2}}^{-1}f=\int_{\mathbb{R}%
}dk\,\psi_{-}(x,k,\theta_{1},\theta_{2})\,f(k),\qquad f\in L^{2}%
(\mathbb{R})\,. \label{gen_Fourier_teta_inverse}%
\end{equation}
The action of $\mathcal{W}_{\theta_{1},\theta_{2}}$ can be expressed in terms
of $\mathcal{F}_{\mathcal{V},\theta_{1},\theta_{2}}$ and $\mathcal{F}%
_{\mathcal{V}}$ as $\left.  \mathcal{W}_{\theta_{1},\theta_{2}}=\mathcal{F}%
_{\mathcal{V},\theta_{1},\theta_{2}}^{-1}\mathcal{F}_{\mathcal{V}}\right.  $,
from which we get: $\left.  \mathcal{F}_{\mathcal{V},\theta_{1},\theta_{2}%
}=\mathcal{F}_{\mathcal{V}}\mathcal{W}_{\theta_{1},\theta_{2}}^{-1}\right.  $.
Due to the expansion (\ref{W_teta_exp}), it results%
\begin{equation}
\mathcal{F}_{\mathcal{V},\theta_{1},\theta_{2}}=\mathcal{F}_{\mathcal{V}%
}\left(  1+\mathcal{O}\left(  \theta_{1}\right)  +\mathcal{O}\left(
\theta_{2}\right)  \right)  \label{gen_Fourier_teta_exp}%
\end{equation}
in the $L^{2}$-operator norm sense. Making use of the intertwining property,
we have%
\begin{equation}
\mathcal{W}_{\theta_{1},\theta_{2}}^{\ast}\left(  Q_{\theta_{1},\theta_{2}%
}(\mathcal{V})\right)  ^{\ast}=Q_{0,0}(\mathcal{V})\mathcal{W}_{\theta
_{1},\theta_{2}}^{\ast}\,.
\end{equation}
Since $\left.  \mathcal{W}_{\theta_{1},\theta_{2}}^{\ast}=\mathcal{F}%
_{\mathcal{V}}^{-1}\left(  \mathcal{F}_{\mathcal{V},\theta_{1},\theta_{2}%
}^{-1}\right)  ^{\ast}\right.  $ and $\left.  \left(  Q_{\theta_{1},\theta
_{2}}(\mathcal{V})\right)  ^{\ast}=Q_{-\theta_{2}^{\ast},-\theta_{1}^{\ast}%
}(\mathcal{V})\right.  $ (see eq. (\ref{Q_teta_adj})), it follows%
\begin{equation}
\mathcal{F}_{\mathcal{V}}^{-1}\left(  \mathcal{F}_{\mathcal{V},\theta
_{1},\theta_{2}}^{-1}\right)  ^{\ast}Q_{-\theta_{2}^{\ast},-\theta_{1}^{\ast}%
}(\mathcal{V})=Q_{0,0}(\mathcal{V})\mathcal{F}_{\mathcal{V}}^{-1}\left(
\mathcal{F}_{\mathcal{V},\theta_{1},\theta_{2}}^{-1}\right)  ^{\ast}\,.
\end{equation}
Let us denote with $A$ the operator of multiplication by $k^{2}$. Using the
functional calculus for $Q_{0,0}(\mathcal{V})$, this operator is represented
by: $A=\mathcal{F}_{\mathcal{V}}Q_{0,0}(\mathcal{V})\mathcal{F}_{\mathcal{V}%
}^{-1}$, and the previous relation rephrases as%
\begin{equation}
\left(  \mathcal{F}_{\mathcal{V},\theta_{1},\theta_{2}}^{-1}\right)  ^{\ast
}Q_{-\theta_{2}^{\ast},-\theta_{1}^{\ast}}(\mathcal{V})=A\left(
\mathcal{F}_{\mathcal{V},\theta_{1},\theta_{2}}^{-1}\right)  ^{\ast}\,.
\end{equation}
Then, taking the adjoint, yields%
\begin{equation}
Q_{\theta_{1},\theta_{2}}(\mathcal{V})\mathcal{F}_{\mathcal{V},\theta
_{1},\theta_{2}}^{-1}=\mathcal{F}_{\mathcal{V},\theta_{1},\theta_{2}}^{-1}A\,.
\label{gen_Fourier_teta_1}%
\end{equation}
To identify $\mathcal{W}_{\theta_{1},\theta_{2}}$ with the wave operator,
according to the time dependent definition%
\begin{equation}
W_{-}\left(  Q_{\theta_{1},\theta_{2}}(\mathcal{V}),Q_{0,0}(\mathcal{V}%
)\right)  =\text{s-}\lim_{t\rightarrow-\infty}e^{itQ_{\theta_{1},\theta_{2}%
}(\mathcal{V})}e^{-itQ_{0,0}(\mathcal{V})}\,, \label{W_}%
\end{equation}
it is enough to prove that%
\begin{equation}
\lim_{t\rightarrow-\infty}\left\Vert \left(  e^{itQ_{\theta_{1},\theta_{2}%
}(\mathcal{V})}e^{-itQ_{0,0}(\mathcal{V})}-\mathcal{W}_{\theta_{1},\theta_{2}%
}\right)  u\right\Vert _{L^{2}\left(  \mathbb{R}\right)  }=0\,.
\label{wave_op_condition}%
\end{equation}
Explicitly, the function in (\ref{wave_op_condition}) reads as%
\begin{equation}
\left(  e^{itQ_{\theta_{1},\theta_{2}}(\mathcal{V})}e^{-itQ_{0,0}%
(\mathcal{V})}-\mathcal{F}_{\mathcal{V},\theta_{1},\theta_{2}}^{-1}%
\mathcal{F}_{\mathcal{V}}\right)  u\,. \label{wave_op_condition1}%
\end{equation}
Setting $g=\mathcal{F}_{\mathcal{V}}u$, we have: $e^{-itQ_{0,0}(\mathcal{V}%
)}u=\mathcal{F}_{\mathcal{V}}^{-1}e^{-itA}g$, and (\ref{wave_op_condition1})
rephrases as%
\begin{equation}
e^{itQ_{\theta_{1},\theta_{2}}(\mathcal{V})}\left(  \mathcal{F}_{\mathcal{V}%
}^{-1}e^{-itA}-e^{-itQ_{\theta_{1},\theta_{2}}(\mathcal{V})}\mathcal{F}%
_{\mathcal{V},\theta_{1},\theta_{2}}^{-1}\right)  g\,.
\label{wave_op_condition2}%
\end{equation}
Then, using (\ref{gen_Fourier_teta_1}) and the definitions (\ref{gen_Fourier}%
), (\ref{gen_Fourier_teta_inverse}), we get%
\begin{gather}
\left(  e^{itQ_{\theta_{1},\theta_{2}}(\mathcal{V})}e^{-itQ_{0,0}%
(\mathcal{V})}-\mathcal{W}_{\theta_{1},\theta_{2}}\right)  u=e^{itQ_{\theta
_{1},\theta_{2}}(\mathcal{V})}\left(  \mathcal{F}_{\mathcal{V}}^{-1}%
-\mathcal{F}_{\mathcal{V},\theta_{1},\theta_{2}}^{-1}\right)  e^{-itA}%
g\,\medskip\qquad\qquad\qquad\qquad\nonumber\\
=e^{itQ_{\theta_{1},\theta_{2}}(\mathcal{V})}\int_{\mathbb{R}}dk\,\left(
\psi_{-}(\cdot,k)-\psi_{-}(\cdot,k,\theta_{1},\theta_{2})\right)  e^{-itk^{2}%
}g(k),\qquad g\in\mathcal{F}_{\mathcal{V}}u\,. \label{wave_op_condition3}%
\end{gather}
Under our assumptions, the result of the Corollary
\ref{Corollary_eigenfun_exp} applies and the r.h.s. of
(\ref{wave_op_condition3}) can be further developed through the expansion
(\ref{gen_eigenfun_exp}). This yields%
\begin{gather}
\left(  e^{itQ_{\theta_{1},\theta_{2}}(\mathcal{V})}e^{-itQ_{0,0}%
(\mathcal{V})}-\mathcal{W}_{\theta_{1},\theta_{2}}\right)  u=\medskip
\qquad\qquad\qquad\qquad\qquad\qquad\qquad\qquad\qquad\qquad\nonumber\\
\int_{\mathbb{R}}dk\,\mathcal{O}\left(  \theta_{2}k\right)  G^{\sigma k}%
(\cdot,b)e^{-itk^{2}}g(k)+\int_{\mathbb{R}}dk\,\mathcal{O}\left(  \frac
{\theta_{1}k}{1+\left\vert k\right\vert }\right)  H^{\sigma k}(\cdot
,b)e^{-itk^{2}}g(k)\medskip\nonumber\\
+\int_{\mathbb{R}}dk\,\mathcal{O}\left(  \theta_{2}k\right)  G^{\sigma
k}(\cdot,a)e^{-itk^{2}}g(k)+\int_{\mathbb{R}}dk\,\mathcal{O}\left(
\frac{\theta_{1}k}{1+\left\vert k\right\vert }\right)  H^{\sigma k}%
(\cdot,a)e^{-itk^{2}}g(k)\,. \label{wave_op_exp}%
\end{gather}
where $\sigma=\frac{k}{\left\vert k\right\vert }$, while the functions
$\mathcal{O}\left(  \theta_{2}k\right)  $ and $\mathcal{O}\left(  \frac
{\theta_{1}k}{1+\left\vert k\right\vert }\right)  $ are independent of $x$. To
obtain (\ref{wave_op}), it is enough to show that, for all $g\in
\mathcal{C}_{0}^{\infty}\left(  \mathbb{R}\right)  $, limits of the type
(\ref{wave_op_condition}) are available for each term at the r.h.s. of
(\ref{wave_op_exp}). In what follows, we consider the first contribution of
(\ref{wave_op_exp}), while the other terms can be treated within the same
approach. Since we work with fixed $\left(  \theta_{1},\theta_{2}\right)  $,
the dependence from these parameters is next omitted. Our aim is to prove%
\[
\lim_{\left\vert t\right\vert \rightarrow\infty}\left\Vert \int_{\mathbb{R}%
}dk\,\mathcal{O}\left(  k\right)  G^{\sigma k}(\cdot,b)e^{-itk^{2}%
}g(k)\right\Vert _{L^{2}(\mathbb{R})}=0\,,
\]
when $g\in\mathcal{C}_{0}^{\infty}\left(  \mathbb{R}\right)  $. We use the
definitions (\ref{G_z}), (\ref{GH_zeta}), with $\zeta=k$, to write
\begin{equation}
\left\Vert \int_{\mathbb{R}}dk\,\mathcal{O}\left(  k\right)  G^{\sigma
k}(\cdot,b)e^{-itk^{2}}g(k)\right\Vert _{L^{2}(\mathbb{R})}=I+II
\label{wave_op_exp1}%
\end{equation}
where%
\begin{align}
I  &  =\int_{-\infty}^{b}dx\left\vert \int_{\mathbb{R}}dk\,\frac
{\mathcal{O}\left(  k\right)  }{w(\left\vert k\right\vert )}\chi
_{-}(x,\left\vert k\right\vert )\chi_{+}(b,\left\vert k\right\vert
)e^{-itk^{2}}g(k)\right\vert ^{2}\,,\label{wave_op_exp1_1}\\
& \nonumber\\
II  &  =\int_{b}^{+\infty}dx\left\vert \int_{\mathbb{R}}dk\,\frac
{\mathcal{O}\left(  k\right)  }{w(\left\vert k\right\vert )}\chi
_{+}(x,\left\vert k\right\vert )\chi_{-}(b,\left\vert k\right\vert
)e^{-itk^{2}}g(k)\right\vert ^{2}\,. \label{wave_op_exp1_2}%
\end{align}
Recall that $\chi_{\pm}$ are defined through the equations%
\begin{align}
1_{x<b}(x)\chi_{+}(x,k)  &  =e^{ikx}+%
{\displaystyle\int\limits_{x}^{b}}
\frac{\sin k(t-x)}{k}\mathcal{V}(t)\chi_{+}(t,k)\,dt\,,\qquad1_{x\geq
b}(x)\chi_{+}(x,k)=e^{ikx}\,,\label{Int_Jost_eq_0}\\
& \nonumber\\
1_{x>a}(x)\chi_{-}(x,k)  &  =e^{-ikx}-%
{\displaystyle\int\limits_{a}^{x}}
\frac{\sin k(t-x)}{k}\mathcal{V}(t)\chi_{-}(t,k)\,dt\,,\qquad1_{x\leq
a}(x)\chi_{-}(x,k)=e^{-ikx}\,, \label{Int_Jost_eq_0.1}%
\end{align}
and introduce the functions%
\begin{equation}
\gamma_{+}(x)=%
{\displaystyle\int\limits_{x}^{b}}
\left\vert \mathcal{V}(t)\right\vert \,dt\,,\qquad\gamma_{-}(x)=%
{\displaystyle\int\limits_{a}^{x}}
\left\vert \mathcal{V}(t)\right\vert \,dt\,. \label{gamma_+-}%
\end{equation}
For compactly supported short-range potentials (see (\ref{V})), it results:
$\gamma_{+}\in L^{2}\left(  \left(  a,+\infty\right)  \right)  $ and
$\gamma_{-}\in L^{2}\left(  \left(  -\infty,b\right)  \right)  $. As stated in
the Proposition \ref{Proposition_Jost}, $\chi_{\pm}$ are uniformly bounded
w.r.t. $x,k\in\mathbb{R}$ and the relations (\ref{Int_Jost_eq_0}),
(\ref{Int_Jost_eq_0.1}) rephrase as%
\begin{align}
1_{x<b}(x)\chi_{+}(x,k)  &  =e^{ikx}+\ell_{+}(x,k)\,,\qquad1_{x\geq b}%
(x)\chi_{+}(x,k)=e^{ikx}\,,\label{Jost_exp_plus}\\
& \nonumber\\
1_{x>a}(x)\chi_{-}(x,k)  &  =e^{-ikx}+\ell_{-}(x,k)\,,\qquad1_{x\leq a}%
(x)\chi_{-}(x,k)=e^{-ikx}\,, \label{Jost_exp_min}%
\end{align}
with $\ell_{\pm}$ s.t.:%
\begin{equation}
\left\vert \ell_{\pm}(x,k)\right\vert <\frac{1}{k}\left\Vert \chi_{\pm
}\right\Vert _{L_{x,k}^{\infty}(\mathbb{R}^{2})}\gamma_{\pm}(x)\,.
\label{gamma_+-_est}%
\end{equation}
Plugging these relations into (\ref{wave_op_exp1_1})-(\ref{wave_op_exp1_2})
and using $1_{x>b}(x)g_{+}(x,k)=0$, we get%
\begin{align}
I  &  \leq\int_{-\infty}^{b}dx\left\vert \int_{\mathbb{R}}dk\,e^{-i\left\vert
k\right\vert x}e^{-itk^{2}}\frac{\mathcal{O}\left(  k\right)  e^{i\left\vert
k\right\vert b}}{w(\left\vert k\right\vert )}g(k)\right\vert ^{2}%
+\int_{-\infty}^{b}dx\left\vert \int_{\mathbb{R}}dk\,\ell_{-}(x,\left\vert
k\right\vert )e^{-itk^{2}}\frac{\mathcal{O}\left(  k\right)  e^{i\left\vert
k\right\vert b}}{w(\left\vert k\right\vert )}g(k)\right\vert ^{2}%
\,,\label{wave_op_exp1_1_exp}\\
& \nonumber\\
II  &  =\int_{b}^{+\infty}dx\left\vert \int_{\mathbb{R}}dk\,e^{i\left\vert
k\right\vert x}e^{-itk^{2}}\frac{\mathcal{O}\left(  k\right)  }{w(\left\vert
k\right\vert )}\chi_{-}(b,\left\vert k\right\vert )g(k)\right\vert ^{2}\,.
\label{wave_op_exp1_2_1}%
\end{align}
With the change of variable: $s=-x+b$, the first integral at the r.h.s. of
(\ref{wave_op_exp1_1_exp}) writes as%
\begin{equation}
\int_{-\infty}^{b}dx\left\vert \int_{\mathbb{R}}dk\,e^{-i\left\vert
k\right\vert x}e^{-itk^{2}}\frac{\mathcal{O}\left(  k\right)  e^{i\left\vert
k\right\vert b}}{w(\left\vert k\right\vert )}g(k)\right\vert ^{2}=\int
_{0}^{+\infty}ds\left\vert \int_{\mathbb{R}}dk\,e^{i\left\vert k\right\vert
s-itk^{2}}\left(  \frac{\mathcal{O}\left(  k\right)  }{w(\left\vert
k\right\vert )}g(k)\right)  \right\vert ^{2}\,,
\end{equation}
while, setting: $s=x-b$, the identity (\ref{wave_op_exp1_2_1}) rephrases as%
\begin{equation}
II=\int_{0}^{+\infty}ds\left\vert \int_{\mathbb{R}}dk\,e^{i\left\vert
k\right\vert s}e^{-itk^{2}}\left(  \frac{\mathcal{O}\left(  k\right)
e^{i\left\vert k\right\vert b}}{w(\left\vert k\right\vert )}\chi
_{-}(b,\left\vert k\right\vert )g(k)\right)  \right\vert ^{2}\,.
\end{equation}
Due to the relation $w(k)=\mathcal{O}\left(  1+\left\vert k\right\vert
\right)  $ (see the proof of Lemma \ref{Lemma_Green_ker})), the above
appearing functions,%
\[
\frac{\mathcal{O}\left(  k\right)  }{w(k)}g\left(  k\right)  \text{\quad
and\quad}\chi_{-}(b,\left\vert k\right\vert )g(k)\,,
\]
both belong to $L_{k}^{2}(\mathbb{R})$. Moreover, as a consequence of
(\ref{gamma_+-_est}), it results%
\begin{equation}
\left\vert \ell_{-}(x,\left\vert k\right\vert )\frac{\mathcal{O}\left(
k\right)  e^{i\left\vert k\right\vert b}}{w(\left\vert k\right\vert
)}\right\vert \leq\left\Vert \chi_{-}\right\Vert _{L_{x,k}^{\infty}%
(\mathbb{R}^{2})}\left\vert \frac{\mathcal{O}\left(  k\right)  }{kw(\left\vert
k\right\vert )}\right\vert \gamma_{-}(x)\lesssim\gamma_{-}(x)\in L^{2}\left(
\left(  -\infty,b\right)  \right)  \,,
\end{equation}
We get%
\begin{align}
I  &  \leq\int_{0}^{+\infty}ds\left\vert \int_{\mathbb{R}}dk\,e^{i\left\vert
k\right\vert s-itk^{2}}q_{1}(k)\right\vert ^{2}+\int_{-\infty}^{b}dx\left\vert
\int_{\mathbb{R}}dk\,e^{-itk^{2}}q_{2}(k,x)g(k)\right\vert ^{2}\,,\\
& \nonumber\\
II  &  =\int_{0}^{+\infty}ds\left\vert \int_{\mathbb{R}}dk\,e^{i\left\vert
k\right\vert s-itk^{2}}q_{3}(k)\right\vert ^{2}\,.
\end{align}
where, according to the previous remarks, $q_{1},q_{3}\in L_{k}^{2}%
(\mathbb{R})$, while $q_{2}$ allows the estimate%
\begin{equation}
\left\vert q_{2}\left(  k,x\right)  \right\vert \leq f(x)\in L^{2}\left(
\left(  -\infty,b\right)  \right)  \,.
\end{equation}
Then, it follows from an application of the Lemma 2.6.4 of \cite{Yafa1} that%
\begin{equation}
\lim_{t\rightarrow-\infty}\int_{0}^{+\infty}ds\left\vert \int_{\mathbb{R}%
}dk\,e^{i\left\vert k\right\vert s-itk^{2}}q_{j}(k)\right\vert ^{2}=0\,,\quad
j=1,3\,, \label{wave_op_exp1_1_lim1}%
\end{equation}
Moreover, for $g\in\mathcal{C}_{0}^{\infty}\left(  \mathbb{R}\right)  $, the
Riemann-Lebesgue Lemma implies%
\begin{equation}
\lim_{\left\vert t\right\vert \rightarrow\infty}\int_{\mathbb{R}%
}dk\,e^{-itk^{2}}q_{2}(k,x)g(k)=0\,;
\end{equation}
thus, using the estimate%
\[
\left\vert \int_{\mathbb{R}}dk\,e^{-itk^{2}}q_{2}(k,x)\,g(k)\right\vert
\lesssim\left\vert f(x)\right\vert \left(  \int_{\mathbb{R}}dk\,\left\vert
g(k)\right\vert \right)  \lesssim\left\vert f(x)\right\vert \,,
\]
and the dominated convergence theorem, we get%
\begin{equation}
\lim_{\left\vert t\right\vert \rightarrow\infty}\left\Vert \int_{\mathbb{R}%
}dk\,e^{-itk^{2}}q_{2}(k,x)\,g(k)\right\Vert _{L^{2}\left(  -\infty,b\right)
}=0\,. \label{wave_op_exp1_1_lim2}%
\end{equation}

\end{proof}

The above result exploits the condition: $\left(  \theta_{1},\theta
_{2}\right)  \in\mathcal{B}_{\delta}(\left(  0,0\right)  )$ which has been
previously introduced to identify the spectra of the operators $Q_{\theta
_{1},\theta_{2}}(\mathcal{V})$ and $Q_{0,0}(\mathcal{V})$, and to compare the
corresponding quantum dynamics. Nevertheless a question is left open: is a
small parameter condition necessary in order that the pair $\left\{
Q_{\theta_{1},\theta_{2}}(\mathcal{V}),Q_{0,0}(\mathcal{V})\right\}  $ forms a
complete scattering system ? Actually this restriction does not seems to be
necessary. It has been shown, indeed, that a key point in the development of
the scattering theory for the possibly non-selfadjoint pair $\left\{
H_{0},H_{1}\right\}  $ is the existence of the strong limit on the real axis
of the characteristic functions associated with $H_{i=0,1}$ (e.g. in
\cite{Ryz01} and \cite{Ryz1}). In particular, the Theorem 4.1 in \cite{Ryz01}
makes use of this assumption to study the existence of the related wave
operators. According to \cite{Ryz0}, the resolvent formula (\ref{krein_1})
implies that, for any $\left(  \theta_{1},\theta_{2}\right)  $, the
characteristic function of the operator $Q_{\theta_{1},\theta_{2}}%
(\mathcal{V})$ has boundary values a.e. on the real axis (for this point we
refer to the last Proposition in \cite{Ryz0} and to the references therein).
This suggests the possibility of defining the scattering system $\left\{
Q_{\theta_{1},\theta_{2}}(\mathcal{V}),Q_{0,0}(\mathcal{V})\right\}  $ without
restrictions on $\left(  \theta_{1},\theta_{2}\right)  $.

A slightly different approach to the scattering problem consists in
characterizing the scattering matrix for $\left\{  Q_{\theta_{1},\theta_{2}%
}(\mathcal{V}),Q_{0,0}(\mathcal{V})\right\}  $. In the case of selfadjoint
extensions of a symmetric operator, a relation between the scattering matrix
and the Weyl function, associated with a boundary triple, have been
established in \cite{BeMaNe}, while extensions of results from \cite{BeMaNe}
to certain non-selfadjoint situations (dissipative/accumulative) have been
presented in, \cite{BeMaNe1}, \cite{BeMaNe2}. A generalization to the case of
$Q_{\theta_{1},\theta_{2}}(\mathcal{V})$ would represents a useful insight in
the study of the scattering properties of the system $\left\{  Q_{\theta
_{1},\theta_{2}}(\mathcal{V}),Q_{0,0}(\mathcal{V})\right\}  $.

\section{\label{Sec_small_h}Further perspectives: the regime of quantum wells
in a semiclassical island.}

Let introduce the modified operators $Q_{\theta_{1},\theta_{2}}^{h}%
(\mathcal{V})$, depending on the small parameter $h\in\left(  0,h_{0}\right)
$, $h_{0}>0$, and defined according to%
\begin{equation}
Q_{\theta_{1},\theta_{2}}^{h}(\mathcal{V}):\left\{
\begin{array}
[c]{l}%
D\left(  Q_{\theta_{1},\theta_{2}}^{h}(\mathcal{V})\right)  =\left\{  u\in
H^{2}\left(  \mathbb{R}\backslash\left\{  a,b\right\}  \right)  \,\left\vert
\ \text{(\emph{\ref{B_C_1}}) holds}\right.  \right\}  \,,\\
\\
\left(  Q_{\theta_{1},\theta_{2}}^{h}(\mathcal{V}^{h})\,u\right)
(x)=-h^{2}u^{\prime\prime}(x)+\mathcal{V}^{h}(x)\,u(x)\,,\qquad x\in
\mathbb{R}\backslash\left\{  a,b\right\}  \,.
\end{array}
\right.  \label{Q_teta_h}%
\end{equation}
with $\mathcal{V}^{h}$ $h$-dependent and locally supported on $\left[
a,b\right]  $. In the applications perspectives, rather relevant is the case
of a positive and bounded $\mathcal{V}^{h}$ formed by the superposition of a
potential barrier and a collection of potential wells supported on a region of
size $h$ (these are usually referred to as \emph{quantum wells}). Hamiltonians
of this type have been introduced in \cite{FMN2}, where the case: $\theta
_{2}=3\theta_{1}$ is considered, with the purpose of realizing models of
electronic transverse transport through resonant heterostructures. When the
initial state describes incoming charge carriers in the conduction band, the
quantum dynamics of such systems is expected be driven by the, possibly
non-linear, adiabatic evolution of a finite number of resonant states related
to the shape resonances. This picture, arising in the physical literature, is
confirmed by the analysis presented in \cite{JoPrSj}, \cite{PrSj} concerning
the case of a 1D Schr\"{o}dinger-Poisson selfadjoint model with a double
barrier, and in \cite{FMN3}, where an application involving Hamiltonians of
the type $Q_{\theta_{1},\theta_{2}}^{h}(\mathcal{V})$ is considered.

As previously remarked, the artificial interface conditions allow to develop
an alternative approach to the adiabatic evolution for shape resonances. In
the particular case where $\theta_{2}=3\theta_{1}$, using suitable exterior
dilations, it is possible to write the evolution problem for the resonant
states of the modified operators $Q_{\theta_{1},\theta_{2}}^{h}(\mathcal{V})$
as a dynamical system of contractions. Then the adiabatic approximations can
be obtained by using a 'standard' approach and reasonably weak assumptions on
the regularity-in-time of the potential, as it has been shown in \cite{FMN2},
while a similar strategy would not work in the selfadjoint case, due to the
lack of accretivity of the corresponding complex deformed operator. This
justifies the interest in the operators $Q_{\theta_{1},\theta_{2}}%
^{h}(\mathcal{V})$ as models for the electronic quantum transport in the
regime of quantum wells in a semiclassical island, and motivates an extension
of the previous analysis taking into account the role of the scaling parameter
$h$ in the definition of the modified dynamics. Our aim is to obtain, in the
$h$-dependent case, a comparison between the dynamical system modified by
non-mixed interface conditions and the unitary dynamics generated by the
selfadjoint Hamiltonians $Q_{0,0}^{h}(\mathcal{V}^{h})$. Proceeding in this
direction, a first step consists in the study of the Jost's solutions, the
generalized eigenfunctions and the Green's kernel associated to operators of
the type $Q_{0,0}^{h}(\mathcal{V})$, which, according to the definition
(\ref{Q_teta_h}), is given by%
\begin{equation}
Q_{0,0}^{h}(\mathcal{V}):\left\{
\begin{array}
[c]{l}%
D\left(  Q_{0,0}^{h}(\mathcal{V})\right)  =H^{2}\left(  \mathbb{R}\right)
\,\,,\\
\\
\left(  Q_{0,0}^{h}(\mathcal{V})\,u\right)  (x)=-h^{2}u^{\prime\prime
}(x)+\mathcal{V}(x)\,u(x)\,,\qquad x\in\mathbb{R}\,,
\end{array}
\right.  \label{Q_0_h}%
\end{equation}
with $\mathcal{V}$, possibly depending on $h$, locally supported on $\left[
a,b\right]  $. In what follows, $\chi_{\pm}^{h}\left(  \cdot,\zeta
,\mathcal{V}\right)  $ denote the solutions of the equation%
\begin{equation}
\left(  -h^{2}\partial_{x}^{2}+\mathcal{V}\right)  u=\zeta^{2}u\,,
\label{Jost_eq_h1}%
\end{equation}
fulfilling the conditions%
\begin{equation}
\left.  \chi_{+}^{h}\left(  \cdot,\zeta,\mathcal{V}\right)  \right\vert
_{x>b}=e^{i\frac{\zeta}{h}x}\,,\qquad\left.  \chi_{-}^{h}\left(  \cdot
,\zeta,\mathcal{V}\right)  \right\vert _{x<a}=e^{-i\frac{\zeta}{h}x}\,.
\label{Jost_sol_ext_h}%
\end{equation}
It is worthwhile to notice that, in the attempt of extending our approach to
this new setting, all the estimates involved in the proofs depend on $h$ and
exhibit exponential bounds w.r.t. to the small parameter. To fix this point,
let $h>0$ and consider the functions $\chi_{\pm}^{h}$. The rescaled functions
$b_{\pm}^{h}\left(  x,\zeta\right)  =e^{\mp i\frac{\zeta}{h}x}\chi_{\pm}%
^{h}\left(  x,\zeta\right)  $ are defined through a Picard iteration procedure
(see the e.g. in the eq. (\ref{Jost_Picard})). Taking into account the
small-$h$ behviour of the corresponding rescaled kernels, it results%
\begin{equation}
\sup_{x\in\mathbb{R}\,,\ \zeta\in\overline{\mathbb{C}^{+}}}\left\vert b_{\pm
}(x,\zeta)\right\vert \leq e^{\frac{C_{0}}{h^{2}}}\,,\qquad\sup_{x\in
\mathbb{R}\,,\ \zeta\in\overline{\mathbb{C}^{+}}}\left\vert b_{\pm}^{\prime
}(x,\zeta)\right\vert \leq e^{\frac{C_{1}}{h^{2}}}\,,
\end{equation}
where the coefficients $C_{i}$, $i=0,1$, possibly depend on the data $a$, $b$,
and $\left\Vert \mathcal{V}\right\Vert _{L^{1}(a,b)}$. Then, a suitable
rewriting of the Krein's formula (\ref{krein_1}) and of the results of the
Lemmata \ref{Lemma_Green_ker} and \ref{Lemma_Krein_coeff} in the $h$-dependent
case would allows to express the $h$-dependent wave operators as:
$\mathcal{W}_{\theta_{1},\theta_{2}}^{h}=1+\mathcal{O}\left(  h\right)
+\mathcal{O}\left(  h\right)  \,$, provided that%
\begin{equation}
\left(  \theta_{1},\theta_{2}\right)  \in\mathcal{B}_{\rho(h)}\left(  \left(
0,0\right)  \right)  \,,\quad\text{with }\rho(h)=he^{-\frac{\tilde{C}}{h^{2}}}
\label{teta_h}%
\end{equation}
and $\tilde{C}>0$ large enough.

Operators defined with the prescription (\ref{teta_h}) appear to be of small
interest in the applications perspective. At this concern we recall that the
adiabatic theorem obtained in \cite{FMN2} applies with: $\theta_{i}%
=c_{i}h^{N_{0}}$, $i=1,2$, for some $N_{0}\in\mathbb{N}$ (see Theorem $7.1$ in
\cite{FMN2}). In this connection, it is important to relax the constraint
expressed by (\ref{teta_h}) in order to obtain small-$h$ expansions of the
waves operators, holding at least in a suitable subspace of $L^{2}$, when the
parameters are assumed to be only polynomially smalls w.r.t. $h$.

According to formulas of Section \ref{Section_Resolvent_2}, the key to obtain
small-$\theta_{i=1,2}$ expansions of the generalized eigenfunctions, and then
of the wave operators, consists in controlling the boundary values of Green's
functions as $z$ approaches the continuous spectrum. Introducing quantum wells
in the model, produces resonances with exponentially small imaginary parts as
$h\rightarrow0$. This means that, the Green's functions, which are expressed
in terms of the Jost solutions and the Jost function, will be exponentially
large w.r.t. $h$ somewhere in the potential structure when $z$ is close to the
corresponding energies. Nevertheless, their values on the boundary of the
potential support are expected to be, at most, of order $\mathcal{O}\left(
\frac{1}{h^{N_{0}}}\right)  $ for a suitable $N_{0}\in\mathbb{N}$; an explicit
example of this mechanism can be found in \cite{FMN3}. Studying the Green
function around a resonant energy requires the introduction of a Dirichlet
problem in order to resolve the spectral singularity and to match the complete
problem with some combination of this spectral problem with the filled wells
spectral problem. Following \cite{HeSj1}, \cite{Hel}, the Grushin technique
can be used for handling this matching and obtain resolvent approximations.
Developing this approach is a further perspective of our work.

\begin{description}
\item[Akcnowledgement] This work arises from a question addressed by A. Teta
and has largely profited from useful discussions with F. Nier and C.A. Pillet.
The author is also indebted to S. Naboko and H. Neidhardt for their important remarks.
\end{description}

\bigskip


\begin{thebibliography}{99}                                                                                               %


\bibitem {AgCo}J. Aguilar, J.M. Combes. A class of analytic perturbations for
one-body Schr\"{o}dinger Hamiltonians. \emph{Comm. Math. Phys.}, \textbf{22},
269--279, 1971.

\bibitem {BaCo}E. Balslev, J.M. Combes. Spectral properties of many-body
Schr\"{o}dinger operators with dilatation-analytic interactions. \emph{Comm.
Math. Phys.}, \textbf{22}, 280--294, 1971.

\bibitem {BeMaNe1}J. Behrndt, M.M. Malamud, H. Neidhardt. Finite rank
perturbations, scattering matrices and inverse problems. In: \emph{Recent
advances in operator theory in Hilbert and Krein spaces, Oper. Theory Adv.
Appl.}, \textbf{198}, 61-85, Birkh\"{a}user Verlag, Basel, 2010.

\bibitem {BeMaNe}J. Behrndt, M.M. Malamud, H. Neidhardt. Scattering matrices
and Weyl functions. \emph{Proc. Lond. Math. Soc.} \textbf{97, }no.3, 568--598, 2008.

\bibitem {BeMaNe2}J. Behrndt, M.M. Malamud, H. Neidhardt. Scattering theory
for open quantum systems with finite rank coupling. \emph{Math. Phys. Anal.
Geom.}, \textbf{10} no.4, 313-358, 2007.

\bibitem {BNP1}V. Bonnaillie-No\"{e}l, F. Nier, Y. Patel. Far from equlibrium
steady states of 1D-Schr\"{o}dinger-Poisson systems with quantum wells I.
\emph{Ann. I.H.P. An. Non Lin\'{e}aire,} \textbf{25}, 937-968, 2008.

\bibitem {BNP2}V. Bonnaillie-No\"{e}l, F. Nier, Y. Patel. Far from equlibrium
steady states of 1D-Schr\"{o}dinger-Poisson systems with quantum wells II.
J\emph{. Math. Soc. of Japan.}, \textbf{61}, 65-106, 2009.

\bibitem {BNP3}V. Bonnaillie-No\"{e}l, F. Nier, Y. Patel. Computing the steady
states for an asymptotic model of quantum transport in resonant
heterostructures. \emph{Journal of Computational Physics}, \textbf{219}(2),
644-670, 2006.

\bibitem {BrMaNaWo}M. Brown, M. Marletta, S. Naboko, I. Wood. Boundary
triplets and M-functions for non-selfadjoint operators, with applications to
elliptic PDEs and block operator matrices. \emph{J. Lond. Math. Soc.} (2)
\textbf{77} no. 3, 700--718, 2008.

\bibitem {BruGeyPan}J. Br\"{u}ning, V.Geyler, K. Pankrashkin. Spectra of
self-adjoint extensions and applications to solvable Schr\"{o}dinger
operators, \emph{Rev. Math. Phys.}, \textbf{20} no.1, 1-70, 2008.

\bibitem {DerMa}V.A. Derkach, M. M. Malamud. Generalized resolvents and the
boundary value problems for Hermitian operators with gaps, \emph{J. Funct.
Anal.} \textbf{95}, 1-95, 1991.

\bibitem {FMN2}A. Faraj, A. Mantile, F. Nier. Adiabatic evolution of 1D shape
resonances: an artificial interface conditions approach. \emph{M3AS},
\textbf{21} no. 3, 541-618, 2011.

\bibitem {FMN3}A. Faraj, A. Mantile, F. Nier. An explicit model for the
adiabatic evolution of quantum observables driven by 1D shape resonances.
\emph{J. Phys. A: Math. Theor.} \textbf{43,} 2010.

\bibitem {Hel}B. Helffer. \emph{Semiclassical analysis for the Schr\"{o}dinger
operator and applications}, volume 1336 of \emph{Lecture Notes in Mathematics.
}Springer-Verlag, Berlin, 1988.

\bibitem {HeSj1}B. Helffer, J. Sj\"{o}strand. \emph{R\'{e}sonances en limite
semi-classique. M\'{e}m. Soc. Mat. France (N.S.), }number 24-25, 1986.

\bibitem {JoPrSj}G. Jona-Lasionio, C. Presilla, J. Sj\"{o}strand. On the
Schr\"{o}dinger equation with concentrated non linearities. \emph{Ann.
Physics, }\textbf{240} no.1, 1-21,1995.

\bibitem {Kato}T. Kato. Wave Operators and Similarity for Some Non-Selfadjoint
Operators. \emph{Math. Annalen, }\textbf{162}, 258-279, 1966.

\bibitem {KreSieZel}D. Krej\u{c}i\u{r}\'{\i}k, P. Siegl, J. \u{Z}elezn\'{y}.
On the similarity of Sturm-Liouville operators with non-Hermitian boundary
conditions to self-adjoint and normal operators. \emph{To appear on Complex
Anal. Oper. Theory. }Preprint:\emph{ }arXiv:1108.4946v1.

\bibitem {LyaSto}V.E. Lyantze, O.G. Storozh. \emph{Methods of the theory of
unbounded operators}. Naukova Dumka, Kiev, 1983.

\bibitem {Mamo1}M.M. Malamud, V.I. Mogilevskii. On extensions of dual pairs of
operators. \emph{Dopov. Nats. Akad. Nauk Ukr. Mat. Prirodozn. Tekh. Nauki},
no. 1, 30--37, 1997.

\bibitem {MaMo2}M.M. Malamud, V.I. Mogilevskii. On Weyl functions and
Q-functions of dual pairs of linear relations. \emph{Dopov. Nats. Akad. Nauk
Ukr. Mat. Prirodozn. Tekh. Nauki,} no. 4, 32--37, 1999.

\bibitem {MaMo3}M.M. Malamud, V.I. Mogilevskii. Kre\u{\i}n type formula for
canonical resolvents of dual pairs of linear relations. \emph{Methods Funct.
Anal. Topology,} \textbf{8}, no. 4, 72--100, 2002.

\bibitem {Nenciu}G. Nenciu, Linear adiabatic theory. Exponential estimates.
\emph{Comm. Math.Phys.,}\textbf{152}(3); 479-496, 1993.

\bibitem {Panka}K. Pankrashkin, Resolvents of self-adjoint extensions with
mixed boundary conditions. \emph{Rep. Math. Phys., }58(2), 207-221, 2006.

\bibitem {PrSj}C. Presilla, J. Sj\"{o}strand. Transport properties in resonant
tunneling heterostructures. \emph{J. Math. Phys., }\textbf{37}(10), 4816-4844, 1996.

\bibitem {PrSj1}C. Presilla, J. Sj\"{o}strand. Nonlinear resonant tunneling in
systems coupled to quantum reservoirs. \emph{Phys. Rev. B: Condensed matter},
\textbf{55} no15, 9310-9313, 1997.

\bibitem {ReSi4}M. Reed, B. Simon. \emph{Methods of modern mathematical
physics vol. IV: Analysis of Operators. }Academic Press, New YorK, 1978.

\bibitem {Ryz0}V.A. Ryzhov. On a singular and an absolutely continuous
subspace of a nonselfadjoint operator whose characteristic function has
boundary values on the real axis. \emph{Funct. Anal. Appl.} \textbf{32}, no.
3, 208--212, 1998.

\bibitem {Ryz01}V.A. Ryzhov. Equipped absolutely continuous subspaces and
stationary construction of the wave operators in the non-self-adjoint
scattering theory. \emph{J. Math. Sci., }\textbf{85,} no. 2, 1997.

\bibitem {Ryz1}V.A. Ryzhov, Functional model of a class of non-selfadjoint
extensions of symmetric operators. \emph{Oper. Theory Adv. Appl.}, vol. 174,
117-158, 2007.

\bibitem {Yafa}D.R. Yafaev. \emph{Mathematical Scattering Theory: Analytic
theory. }Mathematical Surveys and Monographs vol. 158, American Mathematical
Society, Providence, 2010.

\bibitem {Yafa1}D.R. Yafaev. \emph{Mathematical Scattering Theory: General
Theory. }American Mathematical Society, Providence, 1992.
\end{thebibliography}
\end{document}